\def\idty{{\leavevmode{\rm 1\ifmmode\mkern -4.8mu\else\kern -.3em\fi
      I}}}
\renewcommand{\Bbb}[1]{\if1#1\idty\else\mathbb{#1}\fi}
\newcommand{\kb}[1]{|#1\rangle\langle#1|}
\newcommand{\KB}[2]{|#1\rangle\langle#2|}
\newcommand{\ket}[1]{|#1\rangle}
\newcommand{\tr}{\operatorname{tr}}
\newcommand{\SP}{\operatorname{span}}
\newcommand{\sign}{\operatorname{sign}}
\newtheorem{thm}{Theorem}[section]
\newtheorem{defi}[thm]{Definition}
\newtheorem{prop}[thm]{Proposition}
\newtheorem{lem}[thm]{Lemma}
\newtheorem{kor}[thm]{Corollary}
\newenvironment{proof}{\par\noindent\textit{Proof.\ }}{\hfill $\Box$ \vspace{1em}}
\newtheorem{aX}{Axiom} 
\newcolumntype{L}[1]{>{\raggedright\arraybackslash}p{#1}}
\newcolumntype{C}[1]{>{\centering\arraybackslash}p{#1}}
\newcolumntype{R}[1]{>{\raggedleft\arraybackslash}p{#1}}
\newcommand{\Conf}{\mathcal{C}}
\newcommand{\ConfP}{\mathcal{C}_+}
\newcommand{\ConfE}{\hat{\mathcal{C}}}
\newcommand{\Nil}{\mathrm{Nil}}
\newcommand{\PA}{\mathcal{A}_0}
\newcommand{\PAext}{\mathcal{A}}
\newcommand{\PAc}{\overline{\mathcal{A}}}
\title{Controlling a d-level atom in a cavity}
\author{Thomas Hofmann$^\dagger$ and Michael Keyl$^{\dagger\ddagger}$\\[1em]
{\small $^\dagger$ Zentrum Mathematik, M5, Technische Universit{\"a}t M{\"u}nchen,}\\
{\small Boltzmannstrasse 3, 85748 Garching, Germany}\\[1em]
{\small $^\ddagger$ Dahlem Center for Complex Quantum Systems,}\\ {\small Freie Universität Berlin, 14195 Berlin, Germany}\\[1em]
{\small \texttt{thohof@yahoo.de}, \texttt{michael.keyl@tum.de}}
}
\begin{document}

\maketitle

\begin{abstract}
  In this paper we study controllability of a $d$-level atom interacting with the electromagnetic field in a cavity. The system is modelled by an
  ordered graph $\Gamma$. The vertices of $\Gamma$ describe the energy levels and the edges allowed transitions. To each edge of $\Gamma$ we associate
  a harmonic oscillator representing one mode of the electromagnetic field. The dynamics of the system (drift) is given by a natural generalization of
  the Jaynes-Cummings Hamiltonian. If we add in addition sufficient control over the atom, the overall system (atom and em-field) becomes strongly
  controllable, i.e. each unitary on the system Hilbert space can be approximated with arbitrary precision in the strong topology by control
  unitaries. A key role in the proof is played by a topological *-algebra $\PA(\Gamma)$ which is (roughly speaking) a representation of the path algebra of 
  $\Gamma$. It contains crucial structural information about the control problem, and is therefore an important tool for the implementation of control tasks like preparing a particular state from the
  ground state. This is demonstrated by a detailed discussion of different versions of three-level systems. 

  \medskip \noindent \emph{Keywords:} Quantum control theory, quantum dynamics, $d$-level atom, Jaynes-Cummings-Model, strong controllability, graph theory, path algebra

  \medskip \noindent \emph{MSC:} 81Q93, 81Q10, 46N50, 05C25 
\end{abstract}

\section{Introduction}

The goal of quantum control is the systematic manipulation of the dynamical behavior of microsystems like single atoms or molecules in terms of
externally accessible parameters like laser pulses or magnetic fields. It has a wide field of applications, ranging from atomic and molecular physics,
via material science and chemistry, to biophysics and medicine; an overview over recent developments can be found in \cite{glaser2015training}. On the
mathematical side lots of knowledge is gathered about models which are based on finite dimensional Hilbert spaces like finite spin or Fermionic
systems. In particular questions of controllability and simulability are well understood, and can be efficiently solved in terms of Lie-theoretic
methods; cf. \cite{sussmann1972controllability,jurdjevic1972control,brockett1972system,brockett1973lie,jurdjevic1997geometric,zeier2011symmetry,
zimboras2014dynamic} and the references therein.

The situation becomes much more difficult if the system Hilbert space becomes infinite dimensional, since we have to deal with the challenges of
unbounded operators. There are several approaches to handle the problems, at least for large classes of Hamiltonians with pure point spectrum. The
following is a (most likely incomplete) list with corresponding references \cite{brockett2003controllability,rangan2004control,adami2005controllability,
  yuan2007controllability,nersesyan2009growth,bloch2010finite,nersesyan2010global,boscain2012adiabatic,nersesyan2012global,bliss2014quantum,
  KZSH,morancey2014global,boscain2015approximate,boscain2015control,morancey2015simultaneous,paduro2015approximate,paduro2015control,
  chitour2016generic,caponigro2017exact}

In this paper we want to concentrate on simple models for the interaction of light with atoms. The most prominent example is the Jaynes-Cummings-Model
\cite{JC63}, which describes a two-level atom, interacting with one mode of the electromagnetic field in a cavity. It is a very important tool in theory
as well as in experiments since it can be used to model many experimental setups. A single ion in a trap which is placed into an optical cavity and
controlled by external lasers, is a typical example. Mathematically it is interesting since it provides a simple (and tractable) example for
quantum control in infinite dimensions. This was explored in a number of works \cite{brockett2003controllability,rangan2004control,yuan2007controllability,
  bloch2010finite,KZSH}.

Our new contribution to this circle of question is the generalization to an arbitrary finite number of levels. As shown in Fig. \ref{fig:atom} we
describe the system in terms of an ordered graph $\Gamma$ with vertices representing energy levels and edges marking allowed transitions. To each edge
we associate a different mode of the electromagnetic field. This opens the possibility to work with photons of different frequencies, and since the
system is (as we will see) fully controllable, we can manipulate them arbitrarily. Typical examples are swapping two modes, ``joining'' two photons
into one with higher frequency, or generating entanglement between many different modes.

\begin{wrapfigure}{l}{0.5\textwidth}
  \includegraphics[width=0.4\textwidth]{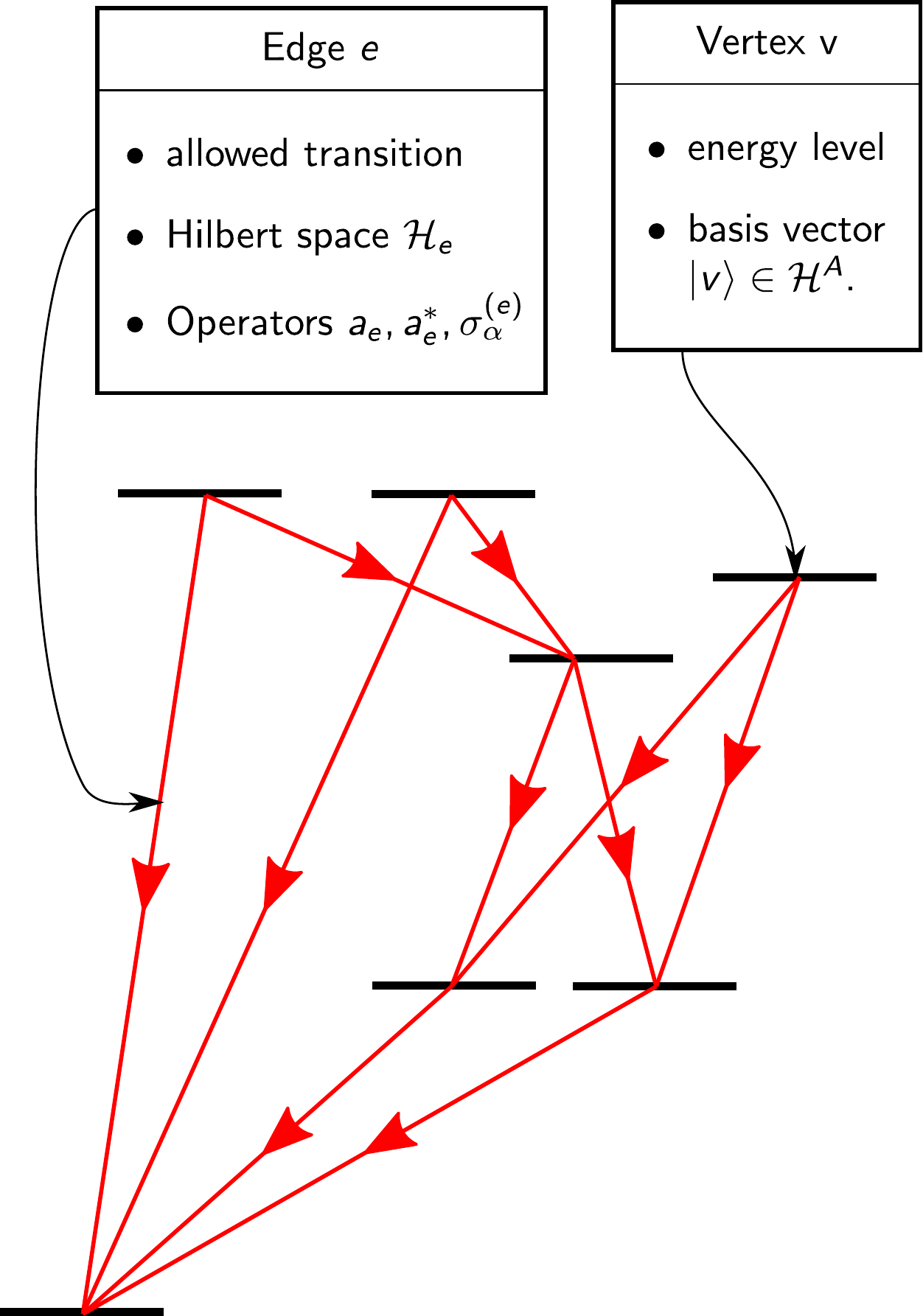}
  \caption{A graph $\Gamma$ describing a $d$-level atom interacting with a
    finite number of modes of the electromagnetic field. \medskip\label{fig:atom}}
\end{wrapfigure}
We treat the problem by generalizing the method from \cite{KZSH}, where we have used symmetry arguments to cut the infinite dynamical setting down
into an increasing sequence of finite dimensional subsystems, which are then discussed with standard methods. In this paper we replace the symmetries by
a representation of a *-algebra associated to the graph $\Gamma$. Since it is generated (roughly speaking) by the path of $\Gamma$ it is called ``path
algebra'' in the following. This concept was taken from quiver theory, where path algebras are an important tool
\cite{boevey1992lectures,savage2006finite}. In our case their relevance does not only arise in the proof of controllability, but also in the
structural analysis of the corresponding control problem. This can can be of great importance for the implementation of explicit control task, like
preparing a particular state from  the ground state, or the development of efficient algorithms for optimal control. We will demonstrate this by the
discussion of different versions of three level systems. From a purely mathematical point of view the path algebra is interesting as well, since its
structure is closely related to the structure of the graph $\Gamma$. We claim that the path algebras associated to two different ordered graphs
$\Gamma$ and $\Gamma'$ are equivalent, iff $\Gamma$ and $\Gamma'$ are equivalent. Note in this context that our setting is slightly different from the one known in quiver theory. Our path algebra is
in particular always infinite dimensional -- even in the most simple case of the connected graph with two vertices and one edge (cf. the discussion on Sect. \ref{sec:example-1:-two}).

The organization of the paper is as follows: In Section \ref{sec:description-problem} we state the main controllability theorem, together with
technical result which are necessary to formulate the statement in the first place (selfadjointness and recurrence properties of certain
operators). This is followed in Sect. \ref{sec:path-algebra} by the definition and detailed discussion of the path algebra. Section
\ref{sec:spectral-analysis} contains some spectral analysis, providing proofs for some of statements from Section
\ref{sec:description-problem}. Technical statements about dynamical groups are contained in Section \ref{sec:dynamical-group} and applied in Section
\ref{sec:full-controlability} to prove controllability. Using the newly introduced language, the work on two-level systems from \cite{KZSH} is
reviewed in Section \ref{sec:example-1:-two}, while a detailed discussion of the three-level case is given in Sections \ref{sec:example-2:-three} and
\ref{sec:example-3:-delta}. The paper closes with an outlook in Section \ref{sec:outlook}.

\section{Description of the problem}
\label{sec:description-problem}

We will describe the atom in terms of a graph such that the vertices become
energy levels and the edges allowed transitions; cf. Fig. \ref{fig:atom}. Therefore, let us
introduce some terminology from graph theory first (cf. \cite{diestel2006graph} for detailed discussion). A graph $\Gamma$ consists of the sets $V(\Gamma)$ of
vertices, $E(\Gamma)$ of edges and the maps 
\begin{gather}
  I : E \rightarrow V \times V, e \mapsto I(e) = \bigl(i(e),t(e)\bigr),\\
  \overline{\phantom{e}}: E \rightarrow E, e \mapsto \overline{e},
\end{gather}
such that for all $e \in E$ the following three conditions hold: 
\begin{equation}
  \overline{e} \neq e,\quad \overline{\overline{e}} = e\quad i(\overline{e}) = t(e), 
\end{equation}
and the relation $t(\overline{e}) = i(e)$ which can be derived from the other three. Hence edges have a direction and always come in pairs: $e$ points
from $i(e)$ to $t(e)$ and $\overline{e}$ the other way round. The pair $g=\{e,\overline{e}\}$ is called a geometric edge and only contains information
about the link between two vertices not about the direction. If we distinguish exactly one edge in each geometric edge we get a directed graph. More
precisely a directed graph is a graph together with a set $E_+ \subset E$ such that $e \in E_+ \Leftrightarrow \overline{e} \in E_- = E \setminus
E_+$. We will call edges in $E_+$ positive and edges in $E_-$ negative. 

\paragraph{The graph}
Now consider an oriented graph $\Gamma$ with finite sets of vertices and edges. We associated a vertex $v \in V(\Gamma)$ to each energy level of our
atom and an edge $e \in E_+ \subset E(\Gamma)$ for each allowed transition. The orientation of the latter is chosen such that $e \in E_+$ always
points from higher to lower energies. From this picture we deduce the following assumption on $\Gamma$ which will hold throughout the paper:
\begin{itemize}
\item 
  \emph{No loops:} No vertex is connected to itself, i.e. there is no edge $e$ with $i(e)=t(e)$.
\item 
  \emph{No double edges:} If $i(e_1) = i(e_2)$ and $t(e_1) = t(e_2)$ hold for two edges $e_1, e_2$ we have $e_1 = e_2$.
\item 
  \emph{Connectedness:} The graph is connected: Each pair of vertices $v_1, v_2$ can be connected by a path $\gamma$, i.e. a sequence $\gamma =
  (e_1,\dots,e_N) \in E(\Gamma)^N$, with $i(e_{j+1}) = t(e_j)$ for all $j=1,\dots,N-1$, and $i(e_1) = v_1$ and $t(e_N) = v_2$.
\item 
  \emph{No ordered cycles:} A cycle is a non-empty path $\gamma=(e_1, \dots, e_N)$ with and $t(e_N) = i(e_1)$ (i.e. a closed path which connects a 
  vertex with itself). We assume that there are no cycles with $e_j \in E_+$ for all $j=1,\dots,N$ or $e_j \in E_-$ for all $j=1,\dots,N$.
\end{itemize}
All four assumption are natural for the physical situation we want to describe, and at the same time they are crucial for the proofs we are going to
present. They allow us in particular to define a partial ordering $\leq$ on $V$ by: $v_1 \geq v_2$ $:\Leftrightarrow$ there is a path $\gamma =
(e_1, \dots, e_N)$ with $e_j \in E_+$ $\forall j=1,\dots,N$ (i.e. an \emph{ordered} path) from $v_1$ to $v_2$ (i.e. $i(e_1) = v_1$ and $t(e_N) =
v_2$). Note that we allow explicitly an empty path\footnote{Please note that we are considering only one empty path for the whole graph, and not one for each vertex as in quiver theory.} $\gamma =
()$ as the only connection from a vertex $v$ to itself - this makes the relation reflexive. Please check yourself that all other conditions for a partial ordering (transitivity, antisymmetry) are
satisfied as well. 

\paragraph{Configurations}
An important concept in this paper are configurations. A configurations is a pair $b = (b_0,\underline{b})$ consisting of a vertex $b_0 \in V(\Gamma)$
-- called the \emph{current level}, and a map $\underline{b}$ from $E_+(\Gamma)$ into the integers $\Bbb{Z}$, which we will call the \emph{number
  map}. Hence the set $\Conf(\Gamma)$ of all configurations is given by
\begin{equation}
   \Conf(\Gamma) = \{ b=(b_0,\underline{b})\,|\, b_0 \in V(\Gamma),\quad \underline{b} : E_+ \rightarrow \Bbb{Z} \}.
 \end{equation}
The basic idea behind this definition is that a configuration describes a state of the system, where the current level represents the state of the atom
and the number map describes the number of photons in each mode (cf. next paragraph). The latter requires that $\underline{b}(e) \geq 0$ holds for all
$e \in E_+$. Each configuration satisfying this requirement is called \emph{regular}. The set of regular configurations is
\begin{equation}
  \ConfP(\Gamma) = \{ b \in \Conf(\Gamma)\,|\, \underline{b}(e) \geq 0\quad \forall e \in E_+\}.
\end{equation}

\begin{wrapfigure}{l}{0.5\textwidth}
  \includegraphics[width=0.4\textwidth]{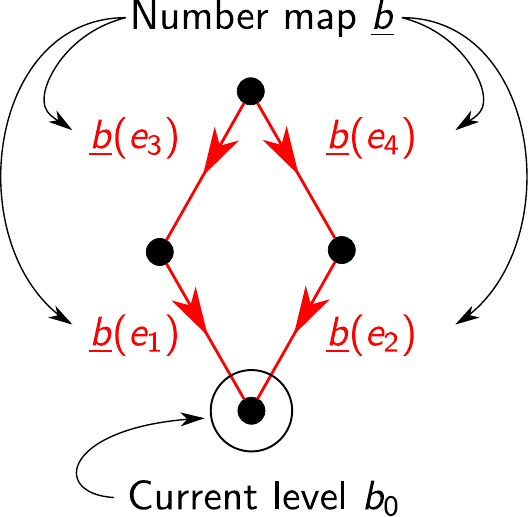}
  \caption{Graphical representation of the configuration $b$ of a graph $\Gamma$ with four vertices and edges
    $E_+(\Gamma)=\{e_1,e_2,e_3,e_4\}$. \label{fig:conf}}
\end{wrapfigure}
\noindent Configurations have a nice graphical representation as shown in Fig. \ref{fig:conf}. Occasionally this will turn out handy, to represent the actions of certain
operators in a graphical way, cf. Sect. \ref{sec:path-algebra}. Finally we introduce the extended configuration set by
\begin{equation}
  \ConfE(\Gamma) = \Conf(\Gamma) \cup \{\Nil\}.
\end{equation}
The extra \emph{nil-configuration} we are adding here is needed later (cf. in particular Sect. \ref{sec:path-algebra}) to serve as the output of some
operation which are otherwise undefined.

\paragraph{The Hilbert space}
The atom is described by the Hilbert space $\mathcal{H}^A= \Bbb{C}^d$, where $d = |V(\Gamma)| \in  \Bbb{N}$
denotes the cardinality of $V(\Gamma)$ and hence the number of energy levels we want to consider. Each allowed transition $e \in E_+$ is connected to a
different mode of the light field described by a Hilbert space $\mathcal{H}^C_e = \mathrm{L}^2(\Bbb{R})$. Hence the overall Hilbert space describing
the atom and the photons interacting with it is 
\begin{equation}\label{eq:17}
  \mathcal{H} = \mathcal{H}^A \otimes \mathcal{H}^C,\quad \mathcal{H}^C = \bigotimes_{e \in E_+} \mathcal{H}^C_e.
\end{equation}
Frequently we will call $\mathcal{H}^A$ and $\mathcal{H}^C$ the atom and cavity Hilbert space respectively. To get a distinguished basis we choose the
canonical basis $\ket{v} \in \mathcal{H}_V = \Bbb{C}^d$, $v \in V(\Gamma)$ and for each $e \in E_+$ the number basis $\ket{n;e} \in \mathcal{H}_e =
\mathrm{L}^2(\Bbb{R})$. Together we have for $b = (b_0,\underline{b}) \in \ConfP(\Gamma)$ 
\begin{equation} \label{eq:1}
  \ket{b} = \ket{b_0;\underline{b}} = \ket{b_0} \otimes \bigotimes_{e \in E_+} \ket{\underline{b}(e),e}.
\end{equation}
Hence the states described by the basis vectors $\ket{b}$ perfectly fit the intuitive interpretation of regular configurations given above. To keep the
notations consistent we define $\ket{b} = 0$ for all non-regular configurations, i.e.
\begin{equation}
  \ket{b} = 0 \quad \forall b \in \ConfE(\Gamma) \setminus \ConfP(\Gamma),\quad\text{in particular}\quad \ket{\Nil} = 0.
\end{equation}
The space of finite linear combinations of basis vectors $\ket{b}$  gives rise to a dense subspace
\begin{equation} \label{eq:23}
  D_\Gamma = \SP\{\ket{b}\,|\, b \in \ConfP(\Gamma)\} \subset \mathcal{H}
\end{equation}
which will serve as the domain of several unbounded operators. 

\paragraph{The operators}
Our next step is to associate certain operators to the vertices and edges of $\Gamma$. To this end let us define for each $e \in E_+$ the subspace
$\mathcal{H}^A_e \subset \mathcal{H}^A$ generated by $\ket{i(e)}, \ket{t(e)}$. We will identify it with $\Bbb{C}^2$ by the map $\varphi_e : \Bbb{C}^2
\rightarrow \mathcal{H}^A_e$ with $\varphi(\ket{0}) = \ket{t(e)}$ and $\varphi(\ket{1}) = \ket{i(e)}$ (i.e. $\varphi$ respects the ``ordering'' $0 <
1$ and $t(e) < i(e)$). With this map we can define 
\begin{equation}
  \mathcal{B}(\Bbb{C}^2) \ni X \mapsto X^{(e)} \in \mathcal{B}(\mathcal{H}^A),\quad \text{with}\quad X^{(e)} v = 
  \begin{cases}
    0 & \text{for $v \in \left(\mathcal{H}^A_e\right)^{\perp}$}\\
    \varphi_e X \varphi_e^{-1} v & \text{for $v \in \mathcal{H}^A_e$},
  \end{cases}
\end{equation}
i.e. $X^{(e)}$ acts as $X$ on $\mathcal{H}^A_e$ and as $0$ otherwise. Of particular importance for the following are the Pauli operators
$\sigma_{\alpha}^{(e)}$, $\alpha =1,\dots,3,\pm$. 

If $Y$ is a (possibly unbounded) operator on $\mathrm{L}^2(\Bbb{R})$ we define $Y_e$ as the operator on $\mathcal{H}^C$ which acts as $Y$ on
$\mathcal{H}^C_e$ and as the identity on all other tensor factors, i.e.
\begin{equation}
  Y_e = Y \otimes \bigotimes_{\stackrel{f \in E_+}{f \neq e}} \Bbb{1}_f 
\end{equation}
where $\Bbb{1}_f$ denotes the unit operator on $\mathcal{H}^C_f$. Of particular importance for us are $a_e, a^*_e$ where $a, a^*$ are the usual
annihilation and creation operators.

Now note that the operators of the form $X^{(e)} \otimes a_f$ or $X^{(e)} \otimes a^*_f$ with an arbitrary $X \in \mathcal{B}(\mathcal{H}^A)$ map the
domain $D_\Gamma \subset H$ into itself, such that $D_\Gamma$ becomes an invariant, dense domain for these operators. Hence we can define for all
$\psi \in D_\Gamma$:
\begin{equation}\label{eq:4}
  H_X \psi = X \otimes \Bbb{1}^C\psi + \sum_{e \in E_+} \left[\omega_{C,e} \Bbb{1}^A \otimes a^*_ea_e \psi + \omega_{I,e} \left(\sigma_+^{(e)} \otimes a_e +
    \sigma_-^{(e)} \otimes a^*_e\right)\right]  
\end{equation}
where $X \in \mathcal{B}(\mathcal{H}^A)$ is an arbitrary, selfadjoint operator, the $\omega_{C,e}, \omega_{I,e}$ are arbitrary real
constants, and  $\Bbb{1}^A$, $\Bbb{1}^C$ are unit operators on the atom and cavity Hilbert spaces, respectively. Operators of this form are the
Hamiltonians we are going to study. Here $X \otimes \Bbb{1}^C$ and $\sum \omega_{A,e} \Bbb{1}^A \otimes a^*_ea_e$ describe the free evolution of the atom
and cavity respectively, while $\sum \omega_{I,e} \left(\sigma_+^{(e)} \otimes a_e + \sigma_-^{(e)} \otimes a^*_e\right)$ is the interaction
term. In other words we have (roughly speaking) for each edge $e \in E_+$ a Jaynes-Cummings type ``sub-Hamiltonian''. Our first main result is the
following:

\begin{thm} \label{thm:1}
  For each $X \in \mathcal{B}(\mathcal{H}^A)$ and for all real constants $\omega_{C,e}, \omega_{I,e}$, $e \in E_+$, the operator $H_X$ defined in
  Eq. (\ref{eq:4}) has the following properties: 
  \begin{enumerate}
  \item 
    \emph{Self-adjointness:} $H_X$ is essentially selfadjoint on the domain $D_\Gamma$ and in abuse of notation we will denote its self-adjoint
    extension by the same symbol. 
  \item 
    \emph{Recurrence:} For all $t_- \in \Bbb{R}$, $t_- \leq 0$ and all strong neighborhoods $V$ of $\exp(i t_- H_X)$ there is a time $t\in \Bbb{R}$, $t_+
    > 0$ with $\exp(i t_+ H_X) \in V$.  
  \end{enumerate}
\end{thm}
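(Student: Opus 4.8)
The plan is to handle part~(1) with Nelson's commutator theorem and part~(2) by first showing that $H_X$ has pure point spectrum and then deducing recurrence of $t\mapsto\exp(itH_X)$ by a simultaneous Diophantine approximation argument. For the self-adjointness I would take as comparison operator $N:=\Bbb{1}+\sum_{e\in E_+}a_e^*a_e$, which is positive, diagonal in the basis $\{\ket b\}$ with eigenvalue $1+\sum_e\underline b(e)\ge1$ on $\ket b$, and has $D_\Gamma$ as a core. On $D_\Gamma$ the two hypotheses of the commutator theorem are routine: $\|H_X\psi\|\le c\,\|N\psi\|$ from $\|a_e\psi\|^2=\langle\psi,a_e^*a_e\psi\rangle\le\langle\psi,N\psi\rangle$, $\|a_e^*\psi\|^2=\langle\psi,(a_e^*a_e+1)\psi\rangle\le\langle\psi,N\psi\rangle$, $\|\sigma_\alpha^{(e)}\|=1$ and $N\ge\Bbb{1}$; and since $N$ commutes with $X\otimes\Bbb{1}^C$ and with every $a_f^*a_f$, one gets $[N,H_X]=\sum_{e\in E_+}\omega_{I,e}\bigl(-\sigma_+^{(e)}\otimes a_e+\sigma_-^{(e)}\otimes a_e^*\bigr)$ on $D_\Gamma$, so $|\langle H_X\psi,N\psi\rangle-\langle N\psi,H_X\psi\rangle|\le c'\,\|N^{1/2}\psi\|^2$. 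Nelson's theorem then gives essential self-adjointness of $H_X$ on $D_\Gamma$.

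The structural heart of part~(2) is a block decomposition provided by the no-ordered-cycles assumption: since $(V(\Gamma),E_+)$ is then a finite directed acyclic graph, it carries a topological rank $\ell\colon V(\Gamma)\to\Bbb{Z}_{\ge0}$ with $c_e:=\ell(i(e))-\ell(t(e))\ge1$ for every $e\in E_+$. I would introduce the ``generalized excitation number''
\[
  \Lambda\;=\;\Bigl(\sum_{v\in V(\Gamma)}\ell(v)\,\ket{v}\bra{v}\Bigr)\otimes\Bbb{1}^C\;+\;\sum_{e\in E_+}c_e\,\bigl(\Bbb{1}^A\otimes a_e^*a_e\bigr),
\]
which is diagonal in $\{\ket b\}$ with eigenvalue $\Phi(b)=\ell(b_0)+\sum_e c_e\,\underline b(e)$. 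A short computation shows that $\Lambda$ commutes with each $\Bbb{1}^A\otimes a_e^*a_e$ and with each interaction term $\sigma_+^{(e)}\otimes a_e+\sigma_-^{(e)}\otimes a_e^*$ — the shift of the atomic rank is cancelled exactly by the shift of $c_e\,a_e^*a_e$ — hence with $H_X$ as soon as $X$ is diagonal in $\{\ket v\}$. From $\sum_e\underline b(e)\le\Phi(b)\le\ell_{\max}+c_{\max}\sum_e\underline b(e)$ one gets that every eigenspace $\mathcal{H}_\lambda:=\ker(\Lambda-\lambda)$ is finite dimensional, $D_\Gamma=\bigoplus_\lambda\mathcal{H}_\lambda$ (algebraic sum), and $H_X$ restricts to a finite self-adjoint matrix on each $\mathcal{H}_\lambda$. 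Thus, for diagonal $X$, $H_X=\bigoplus_\lambda H_X|_{\mathcal{H}_\lambda}$ is essentially self-adjoint on $D_\Gamma$ and has an orthonormal basis of eigenvectors, i.e.\ pure point spectrum.

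Granting pure point spectrum, let $\{\phi_j\}$ be an eigenbasis with $H_X\phi_j=E_j\phi_j$, so $\exp(itH_X)\phi_j=e^{itE_j}\phi_j$. Given $t_-\le0$ and a strong neighborhood $V$ of $\exp(it_-H_X)$, specified by finitely many vectors $\psi_1,\dots,\psi_m$ and an $\varepsilon>0$, I would choose a finite $F$ with $\|\psi_k-P_F\psi_k\|<\varepsilon/4$, where $P_F$ projects onto $\SP\{\phi_j:j\in F\}$. The closure of the orbit $t\mapsto(e^{itE_j})_{j\in F}$ in $\Bbb{T}^F$ is a compact group in which this orbit — and, since a closed sub-semigroup of a compact group is a subgroup, also its forward part $\{t>0\}$ — is dense, and it contains the point $(e^{it_-E_j})_{j\in F}$; hence there is $t_+>0$ with $\max_{j\in F}|e^{it_+E_j}-e^{it_-E_j}|$ as small as desired, and a short estimate then gives $\|\exp(it_+H_X)\psi_k-\exp(it_-H_X)\psi_k\|<\varepsilon$ for all $k$, i.e.\ $\exp(it_+H_X)\in V$.

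The main obstacle will be to push pure point spectrum through to a general self-adjoint $X$. Writing $X=X_0+X_1$ with $X_0$ diagonal and $X_1$ off-diagonal in $\{\ket v\}$, one has $H_X=H_{X_0}+X_1\otimes\Bbb{1}^C$ with $X_1\otimes\Bbb{1}^C$ bounded, so $H_X$ is still essentially self-adjoint on $D_\Gamma$; but $X_1\otimes\Bbb{1}^C$ does not preserve the blocks $\mathcal{H}_\lambda$, so one must rule out continuous spectrum by a perturbation argument. When all $\omega_{C,e}>0$ this is clean: the diagonal of $H_{X_0}|_{\mathcal{H}_\lambda}$ grows like $\lambda$ while its interaction part has norm $O(\sqrt\lambda)$, so $\mathrm{dist}\bigl(0,\sigma(H_{X_0}|_{\mathcal{H}_\lambda})\bigr)\to\infty$, $H_{X_0}$ has compact resolvent, $X_1\otimes\Bbb{1}^C$ is relatively compact, and $H_X$ has discrete — in particular pure point — spectrum (modes with $\omega_{I,e}=0$ decouple as commuting number operators and are split off first). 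For general real $\omega_{C,e}$ the diagonal of $H_{X_0}$ no longer dominates — it can nearly vanish on configurations carrying many photons of opposite-sign modes, so that $H_{X_0}$ can already have eigenvalues of infinite multiplicity — and I would expect the argument to require the band structure of $H_X$ with respect to the $\Lambda$-grading (each summand of $H_X$ shifts $\Lambda$ by a bounded amount) together with the tree structure of the configuration graph on each block; this is what the spectral analysis of Section~\ref{sec:spectral-analysis} must supply.
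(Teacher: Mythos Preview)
Your proposal is correct and closely parallels the paper's argument, with one genuine difference in part~(1) and a small over-anticipation at the end.

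For self-adjointness you use Nelson's commutator theorem with the total number operator $N$; the paper instead uses the block decomposition (your $\Lambda$, their ``pseudo-energy'' $h$) to exhibit every vector in $D_\Gamma$ as an analytic vector and invokes Nelson's analytic vector theorem (Prop.~\ref{prop:3}), then for non-diagonal $X$ appeals to Kato--Rellich with the bounded perturbation. Both routes work; yours is slightly more economical since it handles all $X$ at once and does not need the block structure at this stage, while the paper's route has the advantage of simultaneously delivering the pure point spectrum for diagonal $X$ as a byproduct.

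For part~(2) your $\Lambda$ is exactly the paper's pseudo-energy $h$ (Lemma~\ref{lem:5}), and your finite-dimensional block argument is the content of Theorem~\ref{thm:3} and Prop.~\ref{prop:3}. For non-diagonal $X$ you propose compact resolvent via the relative bound $\|H_I\psi\|\le a\|H_0\psi\|+\eta\|\psi\|$ when all $\omega_{C,e}>0$; this is precisely what the paper does (Lemma~\ref{lem:1}, Lemma~\ref{lem:2} and the min-max argument). Your recurrence deduction from pure point spectrum via a finite truncation and the compact-group/Diophantine argument is also the paper's.

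One point of calibration: you worry that Section~\ref{sec:spectral-analysis} must supply something clever for general real $\omega_{C,e}$, but in fact the paper's proof of Lemma~\ref{lem:1} explicitly uses $\omega_{C,e}>0$ (to make $\mu=\min_e\omega_{C,e}$ strictly positive), so the compact-resolvent step is carried out only under that hypothesis, despite the wording of the theorem. Your ``clean'' case is all that is actually proved.
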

 
Self-adjointness guarantees the existence of time-evolution operators $\exp(i t H_X)$ for all $t \in \Bbb{R}$, while recurrence tells us that it is
sufficient to look at positive times, since we can find time-evolutions into the past in the strong closure of time-evolutions into the future.

\paragraph{Control}
We introduce the drift Hamiltonian $H_D$ as a variant of $H_X$ with $X$ diagonal in the basis $\ket{v}$, $v \in V$:
\begin{equation}\label{eq:38}
   H_D \psi = \sum_{e \in E_+} \left[\omega_{A,e} \sigma_3^{(e)} \otimes \Bbb{1}^C  + \omega_{C,e} \Bbb{1}^A \otimes a^*_ea_e \psi + \omega_{I,e}
     \left(\sigma_+^{(e)} \otimes a_e + \sigma_-^{(e)} \otimes a^*_e\right)\right] 
\end{equation}
with another family $\omega_{A,e}$ of (positive) real constants. As control Hamiltonians we consider all possible $\sigma_3^{(e)}$ and
$\sigma_1^{(e)}$ rotations on the atom
\begin{equation}
  X^{(e)} = \sigma_1^{(e)} \otimes \Bbb{1}^C,\quad Y^{(e)} = \sigma_3^{(e)} \otimes \Bbb{1}^C \quad e \in E_+ 
\end{equation}
It is  easy to see (since $\Gamma$ is connected) that the atom alone has to be fully controllable; cf. Lemma \ref{lem:7}. We do not assume, however,
any direct control over the field or the interaction. The control functions are chosen to be piecewise constant. Hence we introduce the space
$\mathcal{P}$ of maps ($\Bbb{R}^{E_+}$ denotes the set of all functions $E_+ \rightarrow \Bbb{R}$)
\begin{equation}
  u : \Bbb{R} \rightarrow \Bbb{R}^{E_+}, t \mapsto u(t) = (u_e(t))_{e \in E_+}\quad\text{with}\quad u_e : \Bbb{R} \rightarrow \Bbb{R}
\end{equation}
such that there are $0 < t_1 < \dots < t_N = T$ and $u^{(j)} \in \Bbb{R}^{E_+}$, $j=1, \dots, N$ with
\begin{equation}
  u(t) = 0\ \forall t \not\in (0,T]\quad \text{and}\quad u(t) = u^{(j)}\ \forall t \in (t_{j-1},t_j]\ \forall j=1,\dots,N.
\end{equation}
Note that the control time $T$ is determined by $u$ via $T = \sup \{ t \in \Bbb{R} \,|\, u_t \not= 0 \}$. Each pair $(u,v) \in \mathcal{P}$ leads to
the time-dependent Hamiltonian $t \mapsto H_{u(t),v(t)}$, $t \in \Bbb{R}$ with
\begin{equation} \label{eq:7}
  H_{x,y} = H_D + \sum_{e \in E_+} \left(x_e X^{(e)} + y_e Y^{(e)}\right) \quad (x,y) \in \Bbb{R}^{E_+} \times \Bbb{R}^{E_+}
\end{equation}
and therefore to the control problem
\begin{equation} \label{eq:5}
  i \frac{d}{dt} U_{u,v}(0,t) \psi = H_{u(t),v(t)} U_{u,v}(0,t) \psi.
\end{equation}
Since $H_{u(t),v(t)}$ is piecewise constant, the unitary time-evolution operator is given as a product of exponentials $\exp\left(i \Delta t_j
  H_{u(t_j),v(t_j)}\right)$; e.g. for $t=T$ and $\Delta t_j = t_j - t_{j-1}$ with $j=1,\dots,N$, $t_0 = 0$ we get:
\begin{equation} \label{eq:6}
  U_{u,v}(0,T) = \exp\left(i \Delta t_N H_{u(t_N),v(t_N)}\right) \dots \exp\left(i \Delta t_1 H_{u(t_1),v(t_1)}(t_1)\right). 
\end{equation}
Our main result shows that the control problem in (\ref{eq:5}) is \emph{strongly controllable} \cite{KZSH}, i.e. that all unitaries $U$ on $\mathcal{H}$
can be realized (up to a phase factor) as the limit of a strongly convergent sequence (or net) of operators $U_{u,v}(T)$.  In other words:

\begin{thm} \label{thm:2}
  The control problem from (\ref{eq:5}) is strongly controllable, i.e. for any unitary $U$ on $\mathcal{H}$ there is a constant phase factor
  $e^{i\alpha} \in \Bbb{C}$ such that the strong closure of the set
  \begin{equation} \label{eq:8}
    \mathcal{M} = \{ U_{u,v}(0,T) \, | \, (u,v) \in \mathcal{P}^2,\ T=T_{\max}(u,v)\},\quad T_{\max}(u,v) = \sup \{ t\in \Bbb{R}\,|\, u(t) \neq 0,\
    v(t) \neq 0 \}
  \end{equation}
  contains $e^{i \alpha} U$.
\end{thm}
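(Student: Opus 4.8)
The plan is to reduce the statement, via the dynamical-group formalism of Section~\ref{sec:dynamical-group}, to a finite-dimensional Lie-algebra rank condition which is then checked with the path algebra and Lemma~\ref{lem:7}. Strong controllability says that the strong closure $\mathcal{G}$ of the reachable set $\mathcal{M}$ from (\ref{eq:8}) contains a phase multiple of every unitary on $\mathcal{H}$. First I would verify that $\mathcal{G}$ is a \emph{group}: it is a sub-monoid of the unitaries by construction, and it is closed under inversion because the inverse of a product of exponentials as in (\ref{eq:6}) is the reversed product of the factors $\exp(-i\Delta t_j H_{u(t_j),v(t_j)})$, each of which lies in the strong closure of $\mathcal{M}$ by the recurrence property of Theorem~\ref{thm:1} (note that every $H_{x,y}$ of the form (\ref{eq:7}) is a special case of $H_X$ in (\ref{eq:4})). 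Letting the controls tend to zero gives $\exp(itH_D)\in\mathcal{G}$ for $t\ge0$, hence for all $t\in\Bbb{R}$, and more generally $\exp(itH_{x,y})\in\mathcal{G}$ for all $(x,y)$; so the object governing controllability is the Lie algebra $\mathfrak{g}$ generated by $iH_D$ and the control generators $iX^{(e)},iY^{(e)}$, $e\in E_+$ (the dynamical-group formalism makes the passage from these exponentials to $\mathfrak{g}$ precise in spite of the unboundedness).

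\emph{Reduction to finite dimensions.} Using the structure of the path algebra (Section~\ref{sec:path-algebra}) I would write $\mathcal{H}=\bigoplus_\lambda\mathcal{F}_\lambda$ as an orthogonal sum of \emph{finite-dimensional} subspaces adapted to the drift, and let $\mathcal{H}_n$ be the span of the first $n$ of them, so that $\bigcup_n\mathcal{H}_n$ is dense in $\mathcal{H}$. The core claim is that for each $n$ the finite-dimensional control system obtained by compressing drift and controls to $\mathcal{H}_n$ is fully controllable, i.e.\ its control Lie algebra contains $\mathfrak{su}(\mathcal{H}_n)$. I would assemble this from three ingredients. (i) Lemma~\ref{lem:7}: since $\Gamma$ is connected the atom alone is fully controllable, so $\mathfrak{g}$ contains $\mathfrak{su}(d)$ on the atom tensor factor. (ii) Iterated brackets of these atom rotations with the Jaynes--Cummings interaction terms $\sigma_+^{(e)}\otimes a_e+\sigma_-^{(e)}\otimes a^*_e$ sitting inside $H_D$ isolate the individual interaction terms, which are precisely the generators of $\PA(\Gamma)$; by the structure theory of Section~\ref{sec:path-algebra} these act as the full matrix algebra on each $\mathcal{F}_\lambda$, so $\mathfrak{g}$ contains the block-diagonal algebra $\bigoplus_\lambda\mathcal{B}(\mathcal{F}_\lambda)$. (iii) The controls $X^{(e)}=\sigma_1^{(e)}\otimes\Bbb{1}^C$ move the atom between levels \emph{without} disturbing the photon numbers and hence intertwine distinct summands $\mathcal{F}_\lambda$; together with (ii) they act irreducibly on $\mathcal{H}_n$, which upgrades the block-diagonal algebra to at least $\mathfrak{su}(\mathcal{H}_n)$. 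Equivalently, one checks that the only operators commuting with $H_D$ and with all $\sigma_1^{(e)},\sigma_3^{(e)}$ are scalars.

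\emph{Conclusion.} With this in hand I would invoke the machinery of Section~\ref{sec:dynamical-group}: given essential self-adjointness and recurrence (Theorem~\ref{thm:1}), the common invariant core $D_\Gamma$, and full controllability of the compressed systems along the exhausting sequence $\mathcal{H}_n$, one concludes that $\mathcal{G}$ contains $e^{i\alpha}U$ for every unitary $U$ --- the residual global phase being exactly what is left undetermined by working only up to $SU(\mathcal{H}_n)$, and being harmless, just as in \cite{KZSH}. Concretely: given $U$, unit vectors $\psi_1,\dots,\psi_m$ and $\varepsilon>0$, one picks $n$ so large that the $\psi_j$ and the $U\psi_j$ lie within $\varepsilon$ of $\mathcal{H}_n$, realizes on $\mathcal{H}_n$ a unitary matching $U$ there, and uses density of $\bigcup_n\mathcal{H}_n$ to pass to the limit.

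\emph{Main obstacle.} The genuine difficulty --- and the reason Section~\ref{sec:dynamical-group} is needed --- is the control of \emph{leakage}: the available evolutions $\exp(itH_{x,y})$ do not preserve the subspaces $\mathcal{H}_n$ (the creation operators $a^*_e$ and the atom flips $\sigma_1^{(e)}$ both push out of any fixed $\mathcal{H}_n$), so one cannot simply run the finite-dimensional controllability result inside $\mathcal{H}_n$. Turning full controllability of the compressed systems into an honest strongly convergent net of control unitaries on all of $\mathcal{H}$, while keeping the unbounded $H_D$ under control on $D_\Gamma$ throughout, is where the recurrence property of Theorem~\ref{thm:1} does its real work, and it is precisely here that --- in contrast to the two-level model of \cite{KZSH} --- the path algebra is indispensable for organising the combinatorics of the finite-dimensional pieces.
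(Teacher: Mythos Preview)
Your reduction to the dynamical group and the extraction of the $Z^{(e)}$ from $H_D$ via commutators with atom operators match the paper through Lemmas~\ref{lem:4} and~\ref{lem:11}. After that, however, you take a different route from the paper, and that route has two genuine gaps.

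\emph{What the paper does instead.} Once the $Z^{(e)}$ are available, the paper does \emph{not} use the block decomposition $\mathcal{H}=\bigoplus\mathcal{H}^{(n)}$ of Theorem~\ref{thm:3}. It proceeds edge by edge: Lemma~\ref{lem:9} imports the two-level result of \cite{KZSH} to get $\mathcal{G}(\Bbb{1},X^{(e)},Y^{(e)},Z^{(e)})=\Lambda_e(\mathrm{U}(\mathcal{K}_e))$ for a single $e$; Lemma~\ref{lem:12} upgrades this, using full atom control, to $\mathrm{U}(\mathcal{H}^A\otimes\mathcal{H}^C_e)\otimes\Bbb{1}$; and Proposition~\ref{prop:1} runs an induction over $E_+$, invoking the overlapping-tensor-product Lemma~\ref{lem:10} to adjoin one cavity mode at a time. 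The finite-dimensional computations occur inside Lemmas~\ref{lem:10} and~\ref{lem:12} on tensor-product truncations, not on the path-algebra blocks.

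\emph{The gaps in your argument.} First, step~(ii) over-reads Section~\ref{sec:path-algebra}: Theorem~\ref{thm:3} gives only \emph{invariance} of the $\mathcal{H}^{(n)}$, not that the Lie algebra generated by the $Y^{(e)},Z^{(e)}$ fills out the full matrix algebra on each block. That density statement is what Propositions~\ref{prop:4}, \ref{prop:6}, \ref{prop:7} establish, but only for specific graphs, and their proofs already feed on Lemma~\ref{lem:13} from \cite{KZSH}; it is not a consequence of Section~\ref{sec:path-algebra}. Second, and more seriously, your leakage paragraph does not close the argument. Lemma~\ref{lem:6}---the only tool in Section~\ref{sec:dynamical-group} that turns Lie brackets into elements of the dynamical group---applies exclusively to formally selfadjoint elements of $\PAext(\Gamma)$, i.e.\ to block-diagonal operators. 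The symmetry breakers $X^{(e)}$ are \emph{not} in $\PAext(\Gamma)$, so once they enter your Lie algebra you have nothing in the paper converting ``the compressed Lie algebra is $\mathfrak{su}(\mathcal{H}_n)$'' into ``the dynamical group contains unitaries acting trivially on $\mathcal{H}_n^\perp$'', which is what Lemma~\ref{lem:8} would need. The paper avoids this entirely: it never takes commutators involving $X^{(e)}$, and all glueing is done at the level of already-constructed subgroups of $\mathrm{U}(\mathcal{H})$ via Lemmas~\ref{lem:10} and~\ref{lem:12}.
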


Note that this means we only need (complete) control over the atom to gain complete control over the photonic modes in
the cavity.

\section{The path algebra}
\label{sec:path-algebra}

In this section we will study an algebraic representation of the graph which is very important for the analysis of the control problem just
introduced. Note that that some ideas used here are taken from quiver theory \cite{savage2006finite,boevey1992lectures}, however, our setup is slightly different, and in particular more special, since
we have to serve the needs of our control problem. To start we introduce the operation
\begin{equation}
  E(\Gamma) \times \ConfE(\Gamma) \ni (e,b) \mapsto e\cdot b \in \ConfE(\Gamma)
\end{equation}
which is defined as follows:
\begin{enumerate}
\item 
  If $e$ starts at $b_0$ the current level is moved to the end $t(e)$ of $e$ and the number $n(e)$ is incremented (if $e \in E_+$) or decremented (if
  $e \in E_-$). In other words if $b_0 = i(e)$ we have 
  \begin{equation}
    e \cdot b = b',\quad\text{with}\quad b'_0 = t(e),\ \underline{b'}(e') =  \underline{b}(e') + \sign(e)  \delta_{ee'},
  \end{equation}
  where $\delta_{ee}=1$ and $\delta_{ee'}=0$ for $e\neq e'$. The signum $\sign(e)$ of $e \in E(\Gamma)$ is $+1$ for positive edges ($e \in E_+$) and
  $-1$ otherwise ($e \in E_-$). The whole operation is best described graphically as shown in figure \ref{fig:confmap}.
\item 
  If the edge $e$ does not start at the current level of $b$ the latter is mapped to $\Nil$; i.e. $i(e) \neq b_0$ $\Rightarrow$ $e \cdot b = \Nil$. In
  particular we have $e \cdot \Nil = \Nil$ for all edges $e \in E(\Gamma)$. 
\end{enumerate}
Note that $e\cdot b$ is basically only a partially defined operation. For notational purposes we have, however, introduced the nil-configuration to
turn this into a proper operation on the set $\ConfE(\Gamma)$.

\begin{figure}[b]
  \centering
  \includegraphics[width=0.8\textwidth]{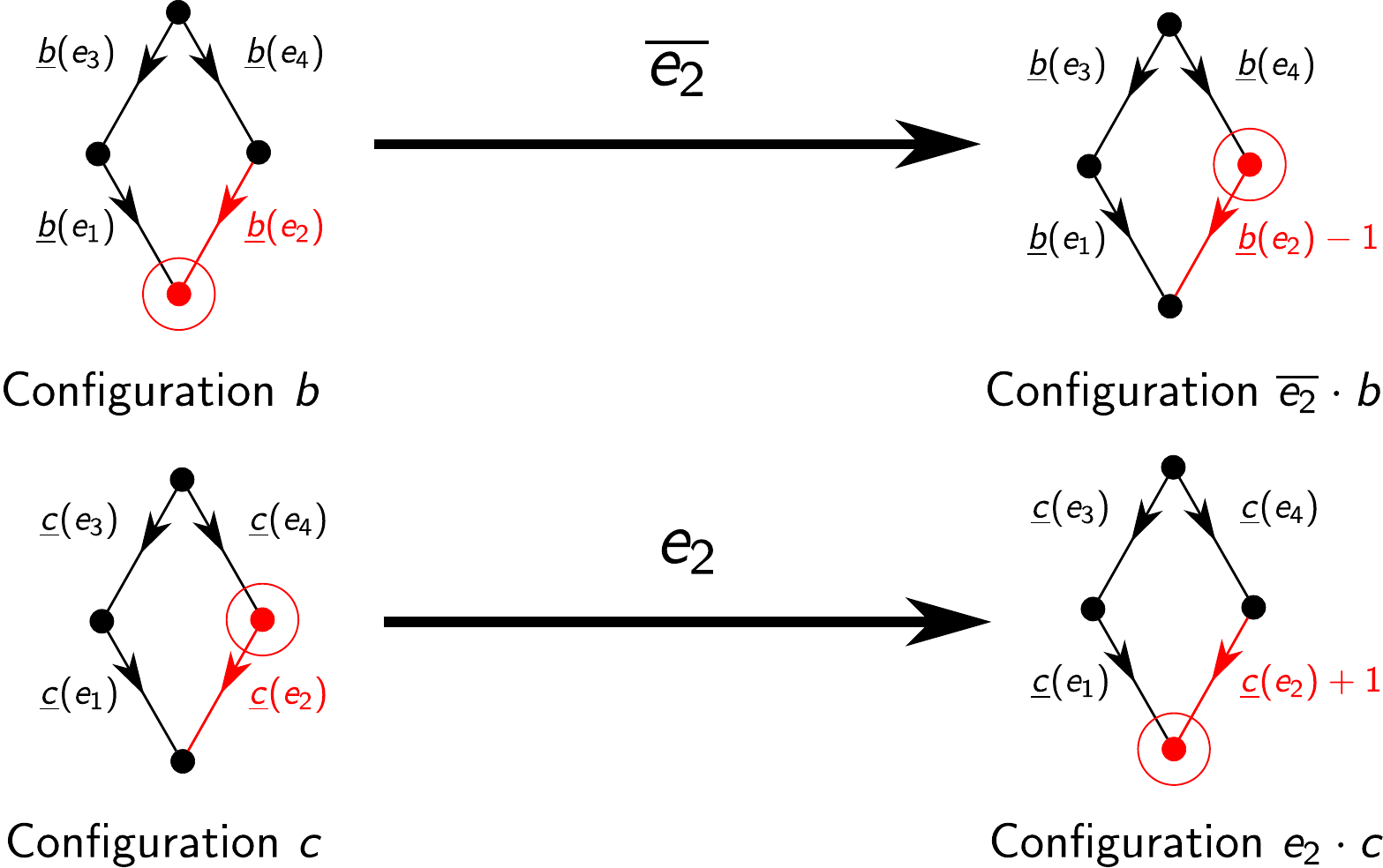}
  \caption{Graphical representation of the map $b \mapsto e \cdot b$ on configurations; cf. also Fig. \ref{fig:conf}. In the first line the
    configuration $b$ is mapped along the \emph{negative} edge $\overline{e_2}$. Hence the photon number $\underline{b}(e_2)$ is decremented. In the second line
    $c$ is mapped along the \emph{positive} edge $e_2$ and therefore $\underline{c}(e_2)$ is incremented. In both cases the current level is shifted
    from the beginning to the end of the current edge -- against the arrow for $\overline{e}_2$ and in the direction of the arrow for
    $e_2$. \label{fig:confmap}} 
\end{figure}

Using the standard basis $\ket{b}$, $b \in \Conf(\Gamma)$ of $\mathcal{H}$, each edge defines a bounded operator by $\ket{b} \mapsto \ket{e\cdot b}$. Note
here that all cases where $e \cdot b$ is not regular leads to $\ket{e \cdot b}=0$. We have in particular $\ket{\Nil} = 0$. Another important case where
the operator just defined gives $0$ arise for $e \in E_-$ with $b_0 = i(e)$, if $\underline{b}(\overline{e}) = 0$, since decrementing $\underline{b}(\overline{e})$ leads to
a non-regular configuration $e \cdot b$. Now we define in addition 
\begin{equation} \label{eq:20}
  \alpha(b,e) =
  \begin{cases}
    \sqrt{\underline{b}(e)+1} &  \text{if $e \in E_+$, $b \in \ConfP(\Gamma)$} \\
    \sqrt{\underline{b}(\overline{e})} & \text{if $e \in E_-$, $b \in \ConfP(\Gamma)$}\\
    0 & \text{if $b$ is not regular}
  \end{cases}
\end{equation}
and the operators
\begin{equation}\label{eq:16}
  A_e : D_\Gamma \rightarrow D_\Gamma \subset \mathcal{H},\quad A_e \ket{b} = \alpha(b,e) \ket{e \cdot b},
\end{equation}
which can alternatively be written as:
\begin{equation} \label{eq:13}
  A_e =
  \begin{cases}
    \sigma_-^{(e)} \otimes a_e & \text{for}\ e \in E_+\\
    \sigma_+^{(\overline{e})} \otimes a_{\overline{e}}^* & \text{for}\ e \in E_-.
  \end{cases}
\end{equation}
The $A_e$ leave the domain $D_\Gamma$ invariant. Therefore arbitrary products and linear combinations of them are well
defined. This leads to

\begin{defi}
  The associative, complex algebra $\PA(\Gamma)$ generated by the family of operators $A_e$, $e \in E(\Gamma)$ is
  called \emph{path algebra}. If we add all operators which are diagonal in the basis $\ket{b}$, $b \in
  \ConfP(\Gamma)$ as generators, we get the \emph{extended path algebra} $\PAext(\Gamma)$. 
\end{defi}

The name of $\PA(\Gamma)$ arises from the fact that a monomial $A_{e_N} \dots A_{e_1}$ of $A$-operators is
nonzero  iff $t(e_j) = i(e_{j+1})$ holds for all $k =1, \dots, N-1$. In other words the collection $\gamma =
(e_1,\dots,e_N)$ has to be a path in $\Gamma$ and elements of $\PA(\Gamma)$ represent in a certain way
``superpositions'' of paths. For a path $\gamma = (e_1, \dots, e_N)$ we will write 
\begin{equation} \label{eq:3}
  \gamma \cdot b = e_N \cdot \dots e_1 \cdot b\quad \text{and}\quad A_\gamma = A_{e_N} \dots A_{e_1} = \alpha(b,\gamma)
  \ket{\gamma \cdot b} 
\end{equation}
with
\begin{equation}
  \alpha(b,\gamma) = \alpha(e_{N-1} \cdot \dots \cdot e_1 \cdot b, e_N) \dots \alpha(e_1 \cdot b, e_2) \alpha(b, e_1).
\end{equation}
In addition we can define the subpath $\gamma_k$ of $\gamma = (e_1, \dots, e_N)$ by
\begin{equation}
  \gamma_k = (e_1, \dots, e_k),\ \text{if}\ k=1,\dots, N; \quad \gamma_0 = (),
\end{equation}
where $()$ denotes the empty path. Using this notation the quantity $\alpha(b,\gamma)$ gets an alternative, recursive definition:
\begin{equation}
  \alpha(b,\gamma_0) = 1,\quad \alpha(b, \gamma_k) = \alpha(\gamma_{k-1},\cdot b,e_k) \alpha(b,\gamma_{k-1}).
\end{equation}
Again, there is a nice graphical representation of the action $b \mapsto \gamma \cdot b$ which is shown in Fig. \ref{fig:confmap2}. By evaluating them
on the basis $\ket{b}$, $b \in \ConfP(\Gamma)$, it easily seen that the $A_\gamma$ form a linearly independent family, which therefore becomes a basis
of $\PA(\Gamma)$.

Path algebras are a well known and important concept in the theory of quivers \cite{boevey1992lectures,savage2006finite}. In that context they are
defined in a more abstract way as the associative algebra over a field $F$ which has (as a vector space) the paths of $\Gamma$ as a basis and with
multiplication given by concatenation of paths (if a path $\gamma_1$ does not end at the vertex where  a second path $\gamma_2$ starts the product
$\gamma_2 \gamma_1$ is zero). The discussion of the previous paragraph clearly shows that $\PA(\Gamma)$ and this abstractly defined path algebra are closely related. We might even think that
$\PA(\Gamma)$ is a representation of the latter (in the case $F=\Bbb{C}$). However, this is not the case since our setup and quiver theory work with different definitions of paths. In our case a path
can consist of positive and negative edges (i.e. we are allowed to move back and forth), while in quiver theory only positive edges are allowed. As a result the abstract path algebra for oriented
graphs $\Gamma$ satsifying the condition from Sect. \ref{sec:description-problem} is always finite dimensional \cite{boevey1992lectures}, while $\PA(\Gamma)$ is always infinite dimensional. A second
more subtle difference arises from the treatment of the empty path. We are using one empty path which can be concatenated with any other path. In quiver theory there is a different empty path for each 
vertex $v$ (which can only be concatenated with path starting or ending at $v$).

The importance of the path algebra for our purposes arise from the fact that all operators $H_X$ from Eq. (\ref{eq:4}) with diagonal $X$ are elements
of $\PAext(\Gamma)$. Furthermore we have the following theorem: 

\begin{figure}[t]
  \centering
  \includegraphics[width=0.9\textwidth]{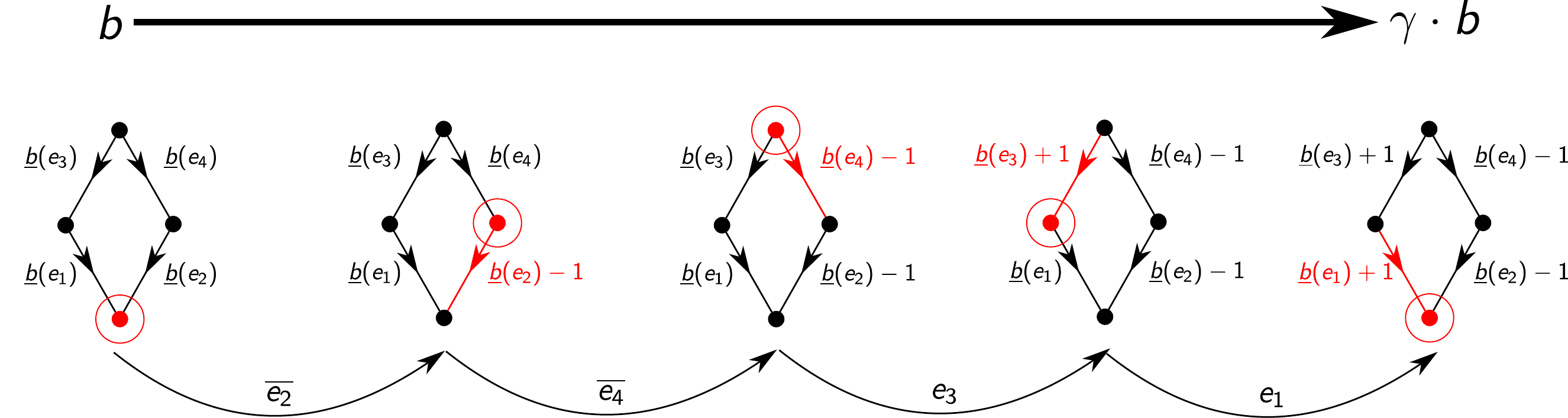}
  \caption{Graphical representation of the map $b \mapsto \gamma \cdot b$ with $\gamma = (\overline{e_2}, \overline{e_4}, e_3, e_1)$; i.e. one cycle
    around the graph $\Gamma$; cf. also Figure \ref{fig:confmap}. \label{fig:confmap2}} 
\end{figure}

\begin{thm} \label{thm:3}
  The Hilbert space $\mathcal{H}$ decomposes into a direct sum $\mathcal{H}=\bigoplus_{n \in \Bbb{N}} \mathcal{H}^{(n)}$ of
  finite dimensional subspaces $\mathcal{H}^{(n)} \subset D_\Gamma \subset \mathcal{H}$ with corresponding projections $P^{(n)}:
  \mathcal{H} \rightarrow \mathcal{H}^{(n)}$ such that
  \begin{enumerate}
  \item 
    $\psi \in D_\Gamma$ iff $P^{(n)} \psi = 0$ for all but a finite number of $n \in \Bbb{N}$.
  \item 
    $\PAext(\Gamma) \mathcal{H}^{(n)} \subset \mathcal{H}^{(n)}$; i.e. the $\mathcal{H}^{(n)}$ are invariant subspaces for the
    extended path algebra $\PAext(\Gamma)$.
  \end{enumerate}
\end{thm}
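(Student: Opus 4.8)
The plan is to exhibit an explicit $\Bbb{N}$-valued quantity $Q$ on the regular configurations which is conserved by every generator of $\PAext(\Gamma)$ and whose level sets are finite, and then to take $\mathcal{H}^{(n)}$ to be the closed span of the basis vectors $\ket{b}$ with $Q(b)=n$. The single hypothesis on $\Gamma$ that really does the work here is the absence of ordered cycles; the remaining assumptions enter only through the constructions already set up.

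First I would fix a \emph{potential} $\psi : V(\Gamma) \to \Bbb{N}$ with $\psi(i(e)) > \psi(t(e))$ for every positive edge $e \in E_+$. Since there are no ordered cycles the directed graph $(V(\Gamma),E_+)$ is acyclic, and as $V(\Gamma)$ is finite one may take $\psi(v)$ to be the length of a longest ordered path starting at $v$ (finite, since arbitrarily long ordered paths would force a repeated vertex and hence an ordered cycle); this $\psi$ drops by at least one along every positive edge. Put $c_e := \psi(i(e)) - \psi(t(e)) \geq 1$ for $e \in E_+$ and define
\begin{equation}
  Q(b) \;=\; \psi(b_0) \;+\; \sum_{e \in E_+} c_e\,\underline{b}(e), \qquad b=(b_0,\underline{b}) \in \ConfP(\Gamma).
\end{equation}
Unwinding the definition of $e\cdot b$ then shows $Q(e\cdot b) = Q(b)$ whenever $b_0 = i(e)$ and $e\cdot b$ is regular: for $e \in E_+$ the current level contributes $\psi(t(e))-\psi(i(e)) = -c_e$ while $\underline{b}(e)$ grows by one, and for $e \in E_-$ the level contributes $+c_{\overline{e}}$ while $\underline{b}(\overline{e})$ drops by one; the two changes cancel in each case. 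In the remaining situations ($b_0 \neq i(e)$, or $e\cdot b \notin \ConfP(\Gamma)$, in particular $e\cdot b = \Nil$) one has $A_e\ket{b}=0$, so there is nothing to check.

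Then I would set $\mathcal{H}^{(n)} := \SP\{\ket{b} \mid b \in \ConfP(\Gamma),\ Q(b)=n\}$ with $P^{(n)}$ the orthogonal projection onto it (some of these may be $\{0\}$, which is harmless). Finiteness follows since $\psi \geq 0$ and $c_e \geq 1$: the equation $Q(b)=n$ forces $\psi(b_0) \leq n$ and $\sum_e \underline{b}(e) \leq \sum_e c_e\underline{b}(e) \leq n$, so only finitely many regular $b$ solve it and $\dim\mathcal{H}^{(n)} < \infty$; moreover $\mathcal{H}^{(n)} \subset D_\Gamma$ by construction. Since the $\ket{b}$, $b \in \ConfP(\Gamma)$, form an orthonormal basis of $\mathcal{H}$ and each lies in exactly one $\mathcal{H}^{(n)}$ (namely $n=Q(b)$), the subspaces are mutually orthogonal with dense total span, i.e. $\mathcal{H} = \bigoplus_n \mathcal{H}^{(n)}$. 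Property~(1) follows from $D_\Gamma = \SP\{\ket{b}\}$: a finite combination of basis vectors meets only finitely many levels, and conversely $P^{(n)}\psi = 0$ for all but finitely many $n$ places $\psi$ in a finite sum of the finite-dimensional spaces $\mathcal{H}^{(n)} \subset D_\Gamma$. For property~(2), $\PAext(\Gamma)$ is generated by the $A_e$ together with the operators diagonal in the basis $\ket{b}$; the diagonal generators clearly preserve each $\mathcal{H}^{(n)}$, and $A_e \mathcal{H}^{(n)} \subseteq \mathcal{H}^{(n)}$ by the invariance of $Q$, hence so does every element of the algebra.

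The only step that is more than bookkeeping is the construction of $\psi$ and the observation that ``no ordered cycles'' is exactly the condition that makes such a strictly decreasing potential available (and thereby makes the level sets of $Q$ finite, since the total photon number alone is not conserved when edges skip levels). Once $Q$ is in hand its conservation is the short edge-by-edge cancellation above, and the decomposition and domain statements are formal consequences of the basis $\{\ket{b}\}$ refining the level decomposition; the only mild care needed is with the non-regular and nil outputs, on which the relevant operators vanish and therefore respect the decomposition trivially.
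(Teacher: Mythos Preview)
Your proof is correct and follows essentially the same route as the paper: build a height function on vertices that strictly decreases along positive edges, turn it into an $\Bbb{N}$-valued conserved quantity on configurations, and use its finite level sets to produce the decomposition. Your potential $\psi$ (longest ordered path from $v$) is in fact exactly the paper's function $h_V$, which is defined there by successively removing minimal elements; the resulting conserved quantities $Q$ and the paper's $h$ coincide.

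The one genuine difference is what you take for $\mathcal{H}^{(n)}$. You use the full level sets $\{b : Q(b)=n\}$, which already suffices for the theorem as stated. The paper instead takes the \emph{orbits} $\mathcal{H}^{(b)}=\SP\{\ket{\gamma\cdot b}\}$ under the path-algebra action and uses the conserved quantity only to bound their size; these orbits may be strictly smaller than your level sets (a single level set can contain several orbits). Your version is more direct and avoids the auxiliary equivalence relation, but the paper's finer decomposition into orbits is not wasted effort: it is reused later (Sections~\ref{sec:example-1:-two}--\ref{sec:example-3:-delta}) where minimality of the invariant subspaces matters for the structural analysis of $\PAext(\Gamma)$.
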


\begin{proof}
  Consider $b \in \ConfP(\Gamma)$ and a path $\gamma = (e_1,\dots, e_N)$. Then $A_\gamma \ket{b}$ is according to Eq. (\ref{eq:3})
  either a scalar multiple of another basis element (i.e. $\ket{\gamma \cdot b}$) or zero. 
  Hence 
  \begin{equation}\label{eq:28}
    \mathcal{H}^{(b)} = \SP  \{ \ket{\gamma \cdot b}\, | \, \gamma\ \text{path in $\Gamma$}\,\} \subset D_\Gamma
  \end{equation}
  is an invariant subspace of $\PA(\Gamma)$, and since $\PA(\Gamma)$ and $\PAext(\Gamma)$
  differ only by elements which are diagonal in the basis $\ket{b}$ it is an invariant subspace of
  $\PAext(\Gamma)$ as well. Also note that $\ket{\gamma \cdot b} \neq 0$ holds iff
  \begin{equation} \label{eq:15}
    b_0 = i(\gamma) = i(e_1) \quad \text{and}\quad \gamma_k \cdot b \ \text{is regular}\ \forall k = 0, \dots, N,
  \end{equation}
  because, $\ket{\gamma_k \cdot b} = 0$ if $\gamma_k \cdot b$ becomes non-regular (i.e. one of the numbers $\underline{\gamma \cdot b}(e)$ becomes
  negative). This observation motivates the following lemma:

  \begin{lem} \label{lem:14}
    For $b,c \in \ConfP(\Gamma)$ define $b \sim c$ $:\Leftrightarrow \exists$ path $\gamma$ with $c = \gamma \cdot b$, and the regularity condition
    (\ref{eq:15}) holds. $b \sim c$ is an equivalence relation.
  \end{lem}

  \begin{proof}
    The relation is reflexive since $()\cdot b = b$ holds with the empty path $()$. It is transitive since $b \sim c$ and $c \sim d$ implies $c =
    \gamma \cdot b$ and $d = \xi \cdot c$ with two path $\gamma=(e_1,\dots,e_N)$, $\xi = (e_{N+1}, \dots, e_M)$ both satisfying (\ref{eq:15}). Hence
    $d = (\xi\gamma) \cdot b$ with the concatenated path $\xi\gamma = (e_1, \dots, e_M)$, which obviously satisfies (\ref{eq:15}) since $\gamma$ and $\xi$
    do. The relation is symmetric since $c = \gamma \cdot b$ implies $b = \overline{\gamma} \cdot c$ with the reversed path $\overline{\gamma} =
    (\overline{e}_N, \dots, \overline{e}_1)$. It is again easy to see that (\ref{eq:15}) holds with $\overline{\gamma}$ and $c$ iff it holds with
    $\gamma$ and $b$. This concludes the proof of the lemma. 
  \end{proof}

  By construction $b \sim c$ is equivalent to $\ket{c} \in \mathcal{H}^{(b)}$ with $\ket{c} \neq 0$. Hence $\mathcal{H}^{(b)}$ is the linear hull of all basis vectors $\ket{c}$
  belonging to configurations in the equivalence class $[b]$ of $b \in \ConfP(\Gamma)$. This shows that for $b, c \in \ConfP(\Gamma)$ the Hilbert spaces
  $\mathcal{H}^{(b)}$, $\mathcal{H}^{(c)}$ are either identical or orthogonal (since $\ket{b}$, $b \in \ConfP(\Gamma)$ is a complete orthonormal system). 

  The next step is to show that the equivalence classes $[b]$ are finite sets and the Hilbert spaces $\mathcal{H}^{(b)}$ therefore finite dimensional. To
  this end recall from Sect. \ref{sec:description-problem} that there is a partial ordering $\leq$ on $V(\Gamma)$ which is uniquely determined by the
  condition: $t(e) < i(e)$ $\forall e \in E_+$. Since $V(\Gamma)$ is a finite set it contains elements which are minimal with respect to $\leq$,
  i.e. vertices $v \in V(\Gamma)$ such that $w \leq v$ implies $w=v$.  We use this fact to decompose $V(\Gamma)$ into a disjoint union of
  subsets $V_k(\Gamma)$. The latter are recursively defined as follows:
  \begin{enumerate}
  \item 
    $V_{-1}(\Gamma) = \emptyset$.
  \item 
    If $k\geq 0$ the set $V_k(\Gamma)$ consists of the minimal elements in $V(\Gamma) \setminus \bigcup_{j=-1}^{k-1} V_j(\Gamma)$.
  \item 
    The process terminates at $k=M$, with an $M \in \Bbb{N}$, when all of $V(\Gamma)$ is covered, i.e. $V(\Gamma) = \bigcup_{k=0}^M V_k(\Gamma)$ and
    all the $V_k(\Gamma)$ with $k \geq 0$ are non-empty. 
  \end{enumerate}
  The whole procedure is demonstrated in Figure \ref{fig:pseudoenergy}. The sets $V_k(\Gamma)$, $k=1,\dots,M$ are obviously disjoint and cover
  $V(\Gamma)$. Hence we can define functions $h_V: V(\Gamma) \rightarrow \Bbb{N}$ and $h_E : E(\Gamma) \rightarrow \Bbb{Z}$ by
  \begin{equation}
    h_V(v) = k \Leftrightarrow v \in V_k(\Gamma),\quad h_E(e) = h_V(i(e)) - h_V(t(e)).
  \end{equation}
  Both functions together leads to 
  \begin{equation}
    h: \Conf(\Gamma) \rightarrow \Bbb{Z}, b \mapsto h_V(b_0) + \sum_{e \in E_+} \underline{b}(e) h_E(e)
  \end{equation}

  \begin{figure}[t]
    \centering
    \includegraphics[width=0.9\textwidth]{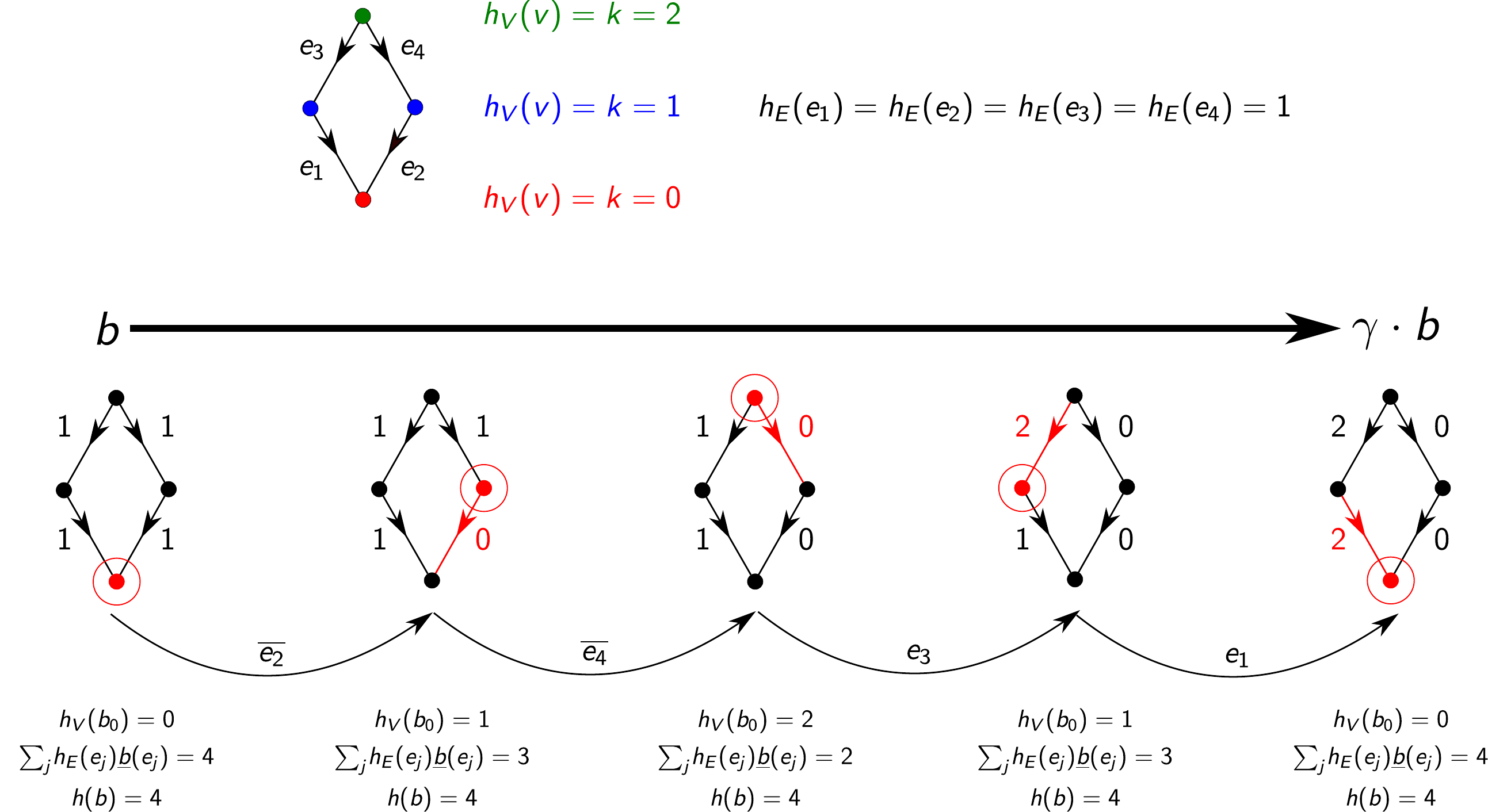}
    \caption{Definition of the pseudo-energy $h: \Conf(\Gamma) \rightarrow \Bbb{Z}$ and its invariance under the partial action $b \mapsto \gamma
      \cdot b$. \label{fig:pseudoenergy}} 
  \end{figure}

  \begin{lem} \label{lem:5}
    The function $h: \Conf(\Gamma) \rightarrow \Bbb{Z}$ just defined has the following properties
    \begin{enumerate}
    \item \label{item:2}
      If $b \in \Conf(\Gamma)$ and $e \in E(\Gamma)$ are chosen such that $e \cdot b \neq \Nil$ we have $h(e \cdot b) = h(b)$; i.e. $h$ is invariant under the
      (partial) action of $E(\Gamma)$ on $\Conf(\Gamma)$.
    \item \label{item:3}
      For each $n \in \Bbb{N}$ the level sets $\{b \in \ConfP(\Gamma)\,|\, h(b) = n\}$ are finite.
    \end{enumerate}
  \end{lem}

  \begin{proof}
    Let $e \cdot b \neq \Nil$ and assume without loss of generality that $e \in E_+$ (the other case can be handled similarly with a sign flip, and by
    using $h_E(\overline{e}) = -h_E(e)$). By definition we have  
    \begin{align}
      h(e \cdot b) &= h_V\bigl((e\cdot b)_0) + \sum_{f \in E_+} (\underline{e \cdot b})(f) h_E(f) \\
      &= h_V(t(e)) + \sum_{f \in E_+} \underline{b}(f) h_E(f) + h_E(e) \\
      &= h_V(t(e)) + \sum_{f \in E_+} \underline{b}(f) h_E(f) + h_V(i(e)) - h_V(t(e)) = h(b)
    \end{align}
    This proves the first statement. To show the second let us introduce the auxiliary function
    \begin{equation}
      \tilde{h}: \ConfP(\Gamma) \rightarrow \Bbb{N}_0,\quad b \mapsto \sum_{e \in E_+} \underline{b}(e).
    \end{equation}
    Obviously we have $0 \leq \tilde{h}(b) \leq h(b)$ for all $b \in \ConfP(\Gamma)$, and therefore
    \begin{equation}
      \{b \in \ConfP(\Gamma)\,|\, h(b) = n\} \subset  \{ b \in \ConfP(\Gamma)\,|\, \tilde{h}(b) \leq n\}.
    \end{equation}
    It is easy to see (e.g. by induction) that the set on the right hand side is finite (only non-negative $\underline{b}(e)$ allowed), which
    concludes the proof. 
  \end{proof}

  Now consider $b, c \in \ConfP(\Gamma)$ with $b \sim c$. Hence there is a path $\gamma$ satisfying $c = \gamma \cdot b$ and the condition in
  (\ref{eq:15}). Hence we can apply item \ref{item:2} of the last lemma recursively to show that $h(c) = h(\gamma \cdot b) = h(b)$. In other words,
  the function $h$ is constant on equivalence classes $[b]$. Finiteness of $[b]$ follows from item \ref{item:3} of Lemma \ref{lem:5}, and this shows that
  the $\mathcal{H}^{(b)}$ are finite dimensional.
  
  Therefore we have constructed a countable family of pairwise orthogonal, finite dimensional Hilbert spaces $\mathcal{H}^{(n)} \subset D_\Gamma \subset
  \mathcal{H}$, $n \in \Bbb{N}$ (where we have applied an arbitrary relabelling of the $\mathcal{H}^{(b)}$ in terms of positive integers -- this is
  obviously possible for any countable family). By construction each basis element $\ket{c}$ is contained in exactly one $\mathcal{H}^{(n)}$. Hence
  $\mathcal{H} = \bigoplus_{n=1}^\infty \mathcal{H}^{(n)}$ as stated. Since the $\mathcal{H}^{(n)}$ are by construction invariant subspaces for the path
  algebra $\PA(\Gamma)$, the second statement of the theorem is proved. The first folloiws immediately from the previous construction and the definition of $D_{\Gamma}$. 
\end{proof}

\section{Spectral analysis}
\label{sec:spectral-analysis}

From Eq. (\ref{eq:13}) it is easy to see that for any $A \in \PAext(\Gamma)$ the adjoint $A^*$ admits $D_\Gamma$ as an invariant domain as
well, and its restriction $A^+$ to $D_\Gamma$ is again an element of $\PAext(\Gamma)$. Hence we have defined a *-operation
$\PAext(\Gamma) \ni A \mapsto A^+ \in \PAext(\Gamma)$ which turns $\PAext$ into a *-algebra. To distinguish
selfadjoint operators on $\mathcal{H}$ from selfadjoint elements in $\PAext(\Gamma)$ we call the latter \emph{formally selfadjoint} (i.e. $A
= A^+$ holds). The difference between the two notions is, however, not too big, since any formally selfadjoint operator $A \in
\PAext(\Gamma)$ is -- as an operator $A$ on $\mathcal{H}$ with domain $D_\Gamma$ -- essentially selfadjoint, as the following proposition
shows. 

\begin{prop} \label{prop:3}
  A formally selfadjoint element $A$ of $\PAext(\Gamma)$ is (as an operator on $\mathcal{H}$) essentially selfadjoint on the domain
  $D_\Gamma$. The selfadjoint extension $\overline{A}$ of $A$ has a pure point spectrum
\end{prop}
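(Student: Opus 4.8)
The key is to use Theorem~\ref{thm:3}, which gives a decomposition $\mathcal{H} = \bigoplus_{n} \mathcal{H}^{(n)}$ into finite dimensional subspaces that are invariant under all of $\PAext(\Gamma)$, together with the characterization $\psi \in D_\Gamma$ iff $P^{(n)}\psi = 0$ for all but finitely many $n$. So first I would observe that, given a formally selfadjoint $A \in \PAext(\Gamma)$, the restriction $A^{(n)} = A|_{\mathcal{H}^{(n)}}$ is a well-defined operator on the finite dimensional space $\mathcal{H}^{(n)}$ (by invariance), and it is selfadjoint there: for $\psi, \phi \in \mathcal{H}^{(n)} \subset D_\Gamma$ we have $\langle A\psi, \phi\rangle = \langle \psi, A^+\phi\rangle = \langle\psi, A\phi\rangle$ using $A = A^+$ on $D_\Gamma$. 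Hence each $A^{(n)}$ has a real orthonormal eigenbasis of $\mathcal{H}^{(n)}$, and collecting these over all $n$ yields an orthonormal basis of $\mathcal{H}$ consisting of eigenvectors of $A$ (as an operator on $D_\Gamma$), with real eigenvalues.

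**Assembling essential selfadjointness.** Next I would define $\hat{A} = \bigoplus_n A^{(n)}$ as the operator with domain equal to all $\psi = \sum_n \psi_n$ (with $\psi_n \in \mathcal{H}^{(n)}$) for which $\sum_n \|A^{(n)}\psi_n\|^2 < \infty$, acting by $\hat{A}\psi = \sum_n A^{(n)}\psi_n$. A standard fact about direct sums of selfadjoint operators on mutually orthogonal subspaces is that $\hat{A}$ is selfadjoint on this domain. It remains to check that $\hat{A}$ is the closure of $A$ restricted to $D_\Gamma$, equivalently that $D_\Gamma$ is a core for $\hat{A}$. This follows because $D_\Gamma$ contains all finite sums $\sum_{n} \psi_n$ (finitely many nonzero terms, by item~1 of Theorem~\ref{thm:3}), which is manifestly a core for a direct-sum operator: any $\psi$ in the domain of $\hat{A}$ is approximated in the graph norm by its partial sums $\sum_{n\le N}\psi_n \in D_\Gamma$. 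Since a symmetric operator with a selfadjoint extension obtained as the closure is essentially selfadjoint, and $A|_{D_\Gamma}$ is symmetric (as $A = A^+$), we conclude $\overline{A} = \hat{A}$ is selfadjoint, i.e. $A$ is essentially selfadjoint on $D_\Gamma$.

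**Pure point spectrum.** Finally, the spectrum: $\overline{A} = \bigoplus_n A^{(n)}$ with each $A^{(n)}$ a finite dimensional selfadjoint operator, so $\sigma(A^{(n)})$ is a finite set of eigenvalues, and $\sigma(\overline{A}) = \overline{\bigcup_n \sigma(A^{(n)})}$. The eigenvectors found above span a dense subspace of $\mathcal{H}$; since $\overline{A}$ admits a complete orthonormal system of eigenvectors, it has pure point spectrum by definition. (One should be slightly careful that ``pure point spectrum'' here means the eigenvectors are complete, not that the spectrum is a discrete set — accumulation points of the eigenvalue set may well lie in $\sigma(\overline{A})$, but these are still in the closure of the point spectrum, which is the standard convention.)

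**Main obstacle.** The only genuinely nontrivial point is verifying that the abstract direct sum $\bigoplus_n A^{(n)}$ really is selfadjoint \emph{and} equals the closure of $A|_{D_\Gamma}$, i.e. that no domain subtleties are hidden when reassembling the blocks — in particular that $A$ does not ``leak'' between the blocks (ruled out by the invariance in Theorem~\ref{thm:3}) and that $D_\Gamma$ is exactly the algebraic direct sum so that it is a core. Everything else is routine finite dimensional linear algebra applied blockwise. An alternative, essentially equivalent route is to exhibit an explicit orthonormal eigenbasis of $\mathcal{H}$ inside $D_\Gamma$ and invoke the standard criterion that a symmetric operator possessing an orthonormal basis of eigenvectors in its domain is essentially selfadjoint with pure point spectrum; I would likely present it this way since it is the most transparent.
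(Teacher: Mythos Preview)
Your proof is correct and rests on the same structural input as the paper (Theorem~\ref{thm:3}), but the technical route to essential selfadjointness differs. The paper does not build the direct-sum operator $\hat{A}$ explicitly; instead it observes that for any $\psi \in D_\Gamma$ there is a finite partial sum $\mathcal{K}_N = \bigoplus_{k\le N}\mathcal{H}^{(k)}$ containing $\psi$, so that $A^k\psi = A_N^k\psi$ for all $k$ with the finite-rank truncation $A_N = Q_N A Q_N$. This makes every $\psi \in D_\Gamma$ an analytic vector for $A$, and essential selfadjointness follows from Nelson's analytic vector theorem. Your argument is more elementary in that it avoids Nelson entirely and instead verifies directly that $D_\Gamma$ is a core for the manifestly selfadjoint block-diagonal operator $\bigoplus_n A^{(n)}$; the paper's route is shorter to state but invokes a heavier theorem. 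For the pure point spectrum both arguments are effectively identical: assemble a complete orthonormal eigenbasis from the finite-dimensional blocks.
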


\begin{proof}
  We use the subspaces $\mathcal{H}^{(n)}$ from Thm. \ref{thm:3} and define for all $N \in \Bbb{N}$: $\mathcal{K}_N = \bigcup_{k=1}^N \mathcal{H}_k$. The
  corresponding projections $\mathcal{H} \rightarrow \mathcal{K}_N$ are denoted by $Q_N$. The $\mathcal{K}_N$ are finite dimensional, invariant
  subspaces of $\PAext(\Gamma)$ and $D_\Gamma = \bigcup_{N \in \Bbb{N}} \mathcal{K}_N = D_\Gamma$. Hence, with $A_N = Q_N A Q_N$ we can
  find for each $\psi \in D_\Gamma$ an $N \in \Bbb{N}$ with $A \psi = A_N \psi$. Since the $A_N$ are finite rank (and therefore bounded) this implies
  \begin{equation}
    \sum_{k=0}^\infty \frac{\|A^k \psi\|}{k!} = \sum_{k=0}^\infty \frac{\|A^k_N \psi\|}{k!} < \infty.
  \end{equation}
  In other words all elements of $D_\Gamma$ are analytic vectors for $A$ and therefore $A$ is essentially selfadjoint on $D_\Gamma$ by Nelson's
  analytic vector theorem.

  To show the second statement note that each $A_N$ is selfadjoint on the finite dimensional Hilbert space $\mathcal{K}_N$. It therefore admits an
  orthonormal basis of eigenvectors $\phi_k$, $k=1,\dots,\dim \mathcal{H}_N$ satisfying $A_N \phi_k = \lambda_k \phi_k$ with eigenvalues $\lambda_k
  \in \Bbb{R}$. For $M > N$ we have $A_M \phi = A_N \phi$ $\forall \phi \in \mathcal{K}_N$. In other words the $\phi_k$ are eigenvectors of $A_M$, too
  (with the same eigenvalues), and we can extend the basis $\phi_k$, $k=1, \dots, \dim \mathcal{K}_N$ to an eigenbasis $\phi_k$, $k=1,\dots,\dim
  \mathcal{K}_M$ of $A_M$. Obviously the eigenvectors $\phi_k$ of an $A_N$ are eigenvectors of $A$ (note that $\psi_k \in D_\Gamma$ since
  $\mathcal{K}_N \subset D_\Gamma$). Hence, by increasing $N$ arbitrarily large we can construct a complete, orthonormal set of eigenvectors, which
  proves that $A$ has pure point spectrum. 
\end{proof}

Now recall the operators $H_X$ from Eq. (\ref{eq:4}). If $X$ is selfadjoint on $\mathcal{H}^A$ and diagonal in the canonical basis $\ket{v} \in
\mathcal{H}^A$, $v \in V(\Gamma)$, we get $H_X\in \PAext(\Gamma)$. Therefore $H_X$ is essentially selfadjoint on $D_\Gamma$, by
Prop. \ref{prop:3}. If $X$ is selfadjoint but not diagonal, we can still apply Prop. \ref{prop:3}, since $X$ is bounded and therefore relatively
bounded (with an arbitrary relative bound $0 < a < 1$) by any $H_Y$ with diagonal $Y$. Essential selfadjointness of $H_X$ on $D_\Gamma$ then follows
from the Kato-Rellich Theorem \cite[Thm. X.12]{RESI2}. However, the methods used in Prop. \ref{prop:3} and Thm. \ref{thm:3} does not tell us anything
about the eigenvalues. We do not even know (by Prop. \ref{prop:3}) whether $H_X$ (with non-diagonal $X$) has any discrete spectrum. To fill this gap
we will prove that all $H_X$ have compact resolvent (this is not true for all formally selfadjoint elements of $\PAext(\Gamma)$). To this
end we introduce on the domain $D_\Gamma$ the operators 
\begin{equation}
  H_0 = \sum_{e \in E_+} \omega_{C,e} \Bbb{1} \otimes a_e a_e^*,\quad H_I = \sum_{e \in E_+} \omega_{I,e}\left(\sigma_+^{(e)} \otimes a_e + \sigma_-^{(e)} \otimes a^*_e\right)
\end{equation} 
Both are elements of $\PAext(\Gamma)$ and therefore essentially selfadjoint on $D_\Gamma$. At least for $H_0$ this is well known since this
is (up to an additive constant) the Hamiltonian of an $|E_+|$-dimensional harmonic oscillator. We will write $\overline{H_0}$ for its (unique) selfadjoint 
extension and $D_0$ for the domain of the latter. For later use let us recall from Eq. (\ref{eq:13}) that the $A_e$ can be rewritten as $A_e =
\sigma_+^{(e)} \otimes a_e$ for $e \in E_+$ and $A_e = \sigma_-^{(\overline{e})} \otimes a^*_{\overline{e}}$ for $e \in E_-$. Therefore $H_I$ just
becomes the sum over all $A_e$ 
\begin{equation} \label{eq:19}
  H_I = \omega_{I,e} \sum_{e \in E(\Gamma)} A_e,
\end{equation}
with $\omega_{I,e} = \omega_{I,\overline{e}}$ for $e \in E_-(\Gamma)$. We will use this in the next lemma to prove a relative bound on $H_I$ in terms of $H_0$.

\begin{lem} \label{lem:1}
  There are constants $a, \eta > 0$, $a < 1$ such that
  \begin{equation}
    \|H_I \psi\| \leq a \|H_0 \psi\| + \eta \|\psi\| \quad \forall \psi \in D_\Gamma
  \end{equation}
\end{lem}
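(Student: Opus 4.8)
The plan is to estimate $H_I$ by a Kato-type relative bound, exploiting that each $A_e$ involves at most one annihilation or creation operator $a_e$ or $a_e^*$, while $H_0$ contains all the number operators $a_e^*a_e$ (shifted by a constant). First I would reduce to working mode by mode: since $H_I = \sum_{e \in E(\Gamma)} \omega_{I,e} A_e$ and $H_0 = \sum_{e \in E_+} \omega_{C,e}\,\Bbb{1}\otimes a_ea_e^*$, and both sums are over (pairs of) edges, it suffices by the triangle inequality to bound $\|A_e\psi\|$ and $\|A_{\overline e}\psi\|$ for each single $e \in E_+$ in terms of $\|\Bbb{1}\otimes a_e a_e^*\,\psi\|$ and $\|\psi\|$, and then re-assemble with the (finitely many) constants $\omega_{I,e}, \omega_{C,e}$.

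For a fixed edge $e \in E_+$, on a basis vector $\ket{b}$ with $n = \underline b(e)$ the operator $A_e = \sigma_+^{(e)}\otimes a_e$ produces a vector of norm at most $\sqrt{n}$, and $A_{\overline e} = \sigma_-^{(e)}\otimes a_e^*$ produces a vector of norm at most $\sqrt{n+1}$; in particular $\|A_e\ket b\|,\|A_{\overline e}\ket b\| \le \sqrt{n+1} = \|a_e a_e^*$-eigenvalue$\|^{1/2}$ on that vector. Since the Pauli factors $\sigma_\pm^{(e)}$ act on the finite-dimensional atom factor and move the photon-number profile only in the $e$-slot, one checks that $A_e$ and $A_{\overline e}$ map the (orthogonal) eigenspaces of $\Bbb{1}\otimes a_e a_e^*$ in a way that lets us compare norms eigenspace by eigenspace. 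Concretely, writing $\psi = \sum_n \psi_n$ with $\psi_n$ in the eigenspace of $\Bbb{1}\otimes a_ea_e^*$ for eigenvalue $n+1$, one gets $\|A_e\psi\|^2 \le \sum_n (n+1)\|\psi_n\|^2 = \|(\Bbb{1}\otimes a_ea_e^*)^{1/2}\psi\|^2$, and similarly for $A_{\overline e}$ (care is needed with the $\sigma_\pm$ bookkeeping, but the bound $\sqrt{n+1}$ is uniform). Then the elementary inequality $\sqrt{t} \le \varepsilon t + \tfrac{1}{4\varepsilon}$ for $t \ge 0$, applied to the self-adjoint operator $\Bbb{1}\otimes a_ea_e^*$ via its spectral decomposition, yields $\|(\Bbb{1}\otimes a_ea_e^*)^{1/2}\psi\| \le \varepsilon\,\|(\Bbb{1}\otimes a_ea_e^*)\psi\| + \tfrac{1}{4\varepsilon}\|\psi\|$ for any $\varepsilon>0$.

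Assembling: $\|H_I\psi\| \le \sum_{e\in E_+}|\omega_{I,e}|\big(\|A_e\psi\| + \|A_{\overline e}\psi\|\big) \le \sum_{e\in E_+} 2|\omega_{I,e}|\big(\varepsilon \|(\Bbb{1}\otimes a_ea_e^*)\psi\| + \tfrac{1}{4\varepsilon}\|\psi\|\big)$. Finally $\|(\Bbb{1}\otimes a_ea_e^*)\psi\| \le |\omega_{C,e}|^{-1}\|\omega_{C,e}(\Bbb{1}\otimes a_ea_e^*)\psi\|$ and, since the $\omega_{C,e}(\Bbb{1}\otimes a_ea_e^*)$ are mutually commuting positive operators summing to $H_0$, one has $\|\omega_{C,e}(\Bbb{1}\otimes a_ea_e^*)\psi\| \le \|H_0\psi\|$; choosing $\varepsilon$ small enough (depending on the finitely many constants $\omega_{I,e},\omega_{C,e}$) makes the coefficient of $\|H_0\psi\|$ strictly less than $1$, giving the claimed $a<1$ and an explicit $\eta$. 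The main obstacle is the bookkeeping in the second step: one must be careful that $H_0$ as written uses $a_ea_e^*$ (not $a_e^*a_e$), so the relevant eigenvalue is $n+1$ rather than $n$, and must verify that the $\sigma_\pm^{(e)}$ factors do not spoil the eigenspace-by-eigenspace comparison — but since they act trivially on the photon sector and boundedly on the atom sector, and the $a_e$'s already carry the $\sqrt{n}$ growth, this is a routine (if slightly fiddly) check rather than a genuine difficulty. One subtlety worth flagging is that the case $\omega_{C,e}=0$ would break the last step; presumably the paper intends $\omega_{C,e}\neq 0$ (or one absorbs such modes separately), and in any event the $A_e$ for that mode are then genuinely unbounded relative to $H_0$, so some non-degeneracy of the $\omega_{C,e}$ is needed for the statement as phrased.
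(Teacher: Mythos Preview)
Your proof is correct and follows a genuinely different, more operator-theoretic route than the paper. The paper expands $H_I\psi$ explicitly in the configuration basis $\ket{b}$, regroups the sum by target configuration $c=e\cdot b$, applies Cauchy--Schwarz to the finite inner sums, and then solves an explicit quadratic inequality $\lambda^2|E(\Gamma)|(n+1)\le\mu^2(an+\beta)^2$ to extract admissible constants $a<1$ and $\eta=\mu\beta$. Your argument instead works mode by mode: you bound $\|A_e\psi\|$ and $\|A_{\overline e}\psi\|$ by $\|(\Bbb{1}\otimes a_ea_e^*)^{1/2}\psi\|$ using that $A_e$ shifts eigenspaces of the $e$-th number operator rigidly, then invoke the scalar inequality $\sqrt{t}\le\varepsilon t+\tfrac{1}{4\varepsilon}$ together with $0\le \omega_{C,e}\,\Bbb{1}\otimes a_ea_e^*\le H_0$ (valid since the summands of $H_0$ are commuting positive operators) to pass to $\|H_0\psi\|$, and finally choose $\varepsilon$ small. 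This is the standard Kato--Rellich style relative-bound argument and is arguably cleaner; the paper's approach is more hands-on with the combinatorics of configurations but yields essentially the same estimate. Two minor remarks: your formula $A_e=\sigma_+^{(e)}\otimes a_e$ has the wrong sign on the Pauli factor compared to the paper's $\sigma_-^{(e)}\otimes a_e$, but since only $\|\sigma_\pm^{(e)}\|\le 1$ is used this is immaterial; and your caveat about $\omega_{C,e}=0$ is well taken --- the paper does assume $\omega_{C,e}>0$ in its own proof (via $\mu=\min_e\omega_{C,e}>0$), so this hypothesis is indeed needed.
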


\begin{proof}
  From Eq. (\ref{eq:19}) and the definition of the $A_e$ in Eq. (\ref{eq:16}) we get
  \begin{equation}
    H_I \ket{b} = \sum_{e \in \mathcal{N}(b_0)} \omega_{I,e} \alpha(b,e) \ket{e\cdot b},
  \end{equation}
  where
  \begin{equation}
    \mathcal{N}(v) = \{ e \in E(\Gamma)\,|\, i(e) = v\}
  \end{equation}
  is the set of all edges starting at the vertex $v \in V(\Gamma)$. With $\psi \in D_\Gamma$:
  \begin{equation}
    \psi = \sum_{b \in \ConfP(\Gamma)} \psi_b \ket{b}\quad \text{with}\quad \psi_b = 0 \ \forall b \not\in \Delta
    \quad\text{and}\quad \Delta \subset
    \ConfP(\Gamma)\ \text{finite}
  \end{equation}
  we get
  \begin{equation}
    H_I \psi = \sum_{b \in \Delta} \sum_{e \in \mathcal{N}(b_0)}  \psi_b \, \omega_{I,e}\,\alpha(b,e) \ket{e\cdot b}. 
  \end{equation}
  With
  \begin{equation}
    \tilde{\Delta} = \{ e\cdot b\,|\, b \in \Delta,\ e \in \mathcal{N}(b_0) \}
  \end{equation}
  and for $c \in \tilde{\Delta}$
  \begin{equation}
    \Sigma_c = \{ (e,b) \, | \, b \in \Delta,\ e \in \mathcal{N}(b_0)\ \text{with}\ e\cdot b = c\}
  \end{equation}
  we can rewrite $H_I \psi$ further as 
  \begin{equation}
    H_I \psi = \sum_{ c \in \tilde{\Delta}} \left(\sum_{(e,b) \in \Sigma_c} \psi_{b} \, \omega_{I,e}\,\alpha(b,e)\right) \ket{c}.
  \end{equation}
  Hence
  \begin{equation} \label{eq:47}
    \|H_I\psi\|^2 \leq \lambda^2 \sum_{c \in \tilde{\Delta}} \left|  \sum_{(e,b) \in \Sigma_c} \psi_{b} \, \alpha(b,e) \right|^2
  \end{equation}
  with $\lambda = \max_{e \in E_+} |\omega_{I,e}|$. The cardinality $|\mathcal{N}(b_0)|$ of $\mathcal{N}(b_0)$ is
  bounded from above by $|E(\Gamma)|$ hence with $N = |\Delta|$ we get $|\Sigma_c| \leq N |E(\Gamma)|$, and therefore  
  \begin{equation}
    \left|  \sum_{(e,b) \in \Sigma_c} \psi_{b} \, \alpha(b,e) \right|^2 \leq N |E(\Gamma)| \sum_{(e,b) \in \Sigma_c} |\psi_{b}|^2 \alpha(b, e)^2.  
  \end{equation}
  Hence, with Eqs. (\ref{eq:20}) and (\ref{eq:47})
  this leads to
  \begin{equation}
    \|H_I\psi\|^2 \leq \lambda^2 N |E(\Gamma)| \sum_{b \in \Delta} \sum_{e \in \mathcal{N}(b_0)} |\psi_b|^2 \bigl(\underline{b}(e)+1\bigr) \leq
    \lambda^2 N |E(\Gamma)| \sum_{b \in \Delta} \sum_{e \in E_+} |\psi_b|^2 \bigl(\underline{b}(e)+1\bigr),
  \end{equation}
  where we have used the fact that only positive terms are added on the right hand side of the second inequality. We compare this to $\|H_0 \psi\|$:
  \begin{align}
    \|H_0 \psi\|^2 &= \sum_{b \in \Delta} \sum_{e \in E_+} \omega_{C,e}^2 |\psi_b|^2 \underline{b}(e)^2\\
    & \geq \mu^2 \sum_{b \in \Delta} \sum_{e \in E_+} |\psi_b|^2 \underline{b}(e)^2
  \end{align}
  with $\mu = \min_{e \in E_+} \omega_{C,e}$, which is strictly positive, since we have assumed $\omega_{C,e} > 0$ for
  all $e \in E_+$.

  The next step is to find constants $a \in (0,1)$, $\beta \geq 0$ such that $\lambda^2 N |E(\Gamma)| (n + 1) \leq \mu^2 (a n + \beta)^2$ holds for
  all $n \in \Bbb{N}_0$. To this end we choose $a \in (0,1)$ arbitrarily and introduce the polynomial
  \begin{equation}
    f(x) = -a^2x^2 + (d - 2a\beta) x + d - \beta^2,\quad x \in \Bbb{R},
  \end{equation}
  with the abbreviation $d=\lambda^2 \mu^{-2} N |E(\Gamma)|$. It has a global maximum at $\hat{x} = (d -2a\beta)/(2a^2)$ and
  an easy calculation shows that $f(\hat{x}) = 0$ holds if we choose $\beta = (d+4a^2)/(4a)$. Hence with this $\beta$ we have
  $f(x) \leq 0$ for all $x \in \Bbb{R}$ and in particular $f\bigl(\underline{b}(e)\bigr) \leq 0$ for all $e \in E_+$ and $b \in
  \ConfP(\Gamma)$. This shows that 
  \begin{equation}
    \lambda^2 |E(\Gamma)| (\underline{b}(e)+1) \leq (a \mu \underline{b}(e) + \mu \beta)^2 \quad \forall b \in \ConfP(\Gamma)\quad \forall e \in E_+
  \end{equation}
  holds with the chosen $a,\beta$ and therefore with $\eta = \mu \beta$
  \begin{multline}
    \| H_I\psi\|^2 \leq \lambda^2 |E(\Gamma)| \sum_{b \in \Delta} \sum_{e \in E_+} |\psi_b|^2 \bigl(\underline{b}(e)+1\bigr)
    \leq \sum_{b \in \Delta} \sum_{e \in E_+} |\psi_b|^2 (a \mu \underline{b}(e) + \eta)^2 \\ \leq \| a H_0 \psi + \eta \psi\|^2.
  \end{multline}
  Taking square roots at both sides and applying the triangle inequality to the right hand side leads to $\|H_I\psi\| \leq
  a \|H_0\psi\| + \eta \|\psi\|$ for all $\psi \in D_\Gamma$ as stated.  
\end{proof}

Now consider $X \in \mathcal{B}(\mathcal{H}^A)$ and $K_X = X \otimes \Bbb{1}^C + \lambda H_I$ with $\lambda \in
\Bbb{R}$. The latter is well defined and symmetric on $D_\Gamma$, hence it is closable with closure $\overline{K_X}$ and
domain $\mathrm{Dom}(\overline{K_X})$. We can use Lemma \ref{lem:1} and the Kato-Rellich Theorem \cite{RESI1} to proof
selfadjointness of the operators $H_X = H_0 + K_X$ introduced in Eq. (\ref{eq:4}).

\begin{lem} \label{lem:2}
  The operator $\overline{K_X}$ is relatively $\overline{H_0}$ bounded with relative bound $a < 1$, i.e. 
  $D_0 \subset \mathrm{Dom}(\overline{K_X})$ and
  \begin{equation}
    \| \overline{K_X} \psi\| \leq a \|\overline{H_0} \psi\| + (\eta+\|X\|) \|\psi\| \quad \forall \psi \in D_0
  \end{equation}
  holds with a constant $\eta>0$.
\end{lem}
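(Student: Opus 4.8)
The strategy is to establish the estimate first on the core $D_\Gamma$, where it reduces to Lemma \ref{lem:1}, and then to transport it to $D_0=\mathrm{Dom}(\overline{H_0})$ by approximation. For $\psi\in D_\Gamma$ the triangle inequality together with boundedness of $X$ gives
\begin{equation}
  \|K_X\psi\| \le \|X\otimes\Bbb{1}^C\psi\| + |\lambda|\,\|H_I\psi\| \le a\|H_0\psi\| + (\eta+\|X\|)\|\psi\|,
\end{equation}
where Lemma \ref{lem:1} was applied to the interaction term — with a relative bound chosen small enough to absorb the factor $|\lambda|$, which is possible because the proof of Lemma \ref{lem:1} permits any $a\in(0,1)$ (adjusting $\eta$ accordingly). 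Hence the asserted inequality already holds verbatim on $D_\Gamma$.

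Next I would enlarge the domain. Since $H_0=\sum_{e\in E_+}\omega_{C,e}\,\Bbb{1}\otimes a_ea_e^*$ is a formally selfadjoint element of $\PAext(\Gamma)$ — up to the additive constant $\sum_e\omega_{C,e}$ it is the Hamiltonian of an $|E_+|$-mode harmonic oscillator — it is essentially selfadjoint on $D_\Gamma$ by Proposition \ref{prop:3}; in particular $D_\Gamma$ is a core for $\overline{H_0}$. Given $\psi\in D_0$, choose $\psi_n\in D_\Gamma$ with $\psi_n\to\psi$ and $H_0\psi_n\to\overline{H_0}\psi$. Applying the estimate of the first step to the differences $\psi_n-\psi_m$ shows that $(K_X\psi_n)_n$ is Cauchy, hence convergent; and since $\overline{K_X}$ is by definition the closure of the closable symmetric operator $K_X$, the limit of $(\psi_n,K_X\psi_n)$ in $\mathcal{H}\times\mathcal{H}$ lies in the graph of $\overline{K_X}$. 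Thus $\psi\in\mathrm{Dom}(\overline{K_X})$ with $\overline{K_X}\psi=\lim_n K_X\psi_n$, which proves $D_0\subset\mathrm{Dom}(\overline{K_X})$.

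Finally, passing to the limit $n\to\infty$ in $\|K_X\psi_n\|\le a\|H_0\psi_n\|+(\eta+\|X\|)\|\psi_n\|$ and using continuity of the norm together with the convergences just recorded yields $\|\overline{K_X}\psi\|\le a\|\overline{H_0}\psi\|+(\eta+\|X\|)\|\psi\|$ for all $\psi\in D_0$, as claimed. I do not expect a genuine obstacle here; the one point requiring care is that $D_\Gamma$ be a core for $\overline{H_0}$ — so that the approximants can be chosen to converge in graph norm, i.e. with $H_0\psi_n\to\overline{H_0}\psi$ and not merely $\psi_n\to\psi$ — and this is precisely where Proposition \ref{prop:3} enters.
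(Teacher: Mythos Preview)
Your proof is correct and follows essentially the same route as the paper's: establish the estimate on $D_\Gamma$ via Lemma~\ref{lem:1}, then use that $D_\Gamma$ is a core for $\overline{H_0}$ to approximate any $\psi\in D_0$ in graph norm, conclude $(K_X\psi_n)$ is Cauchy, and pass to the limit using closedness of $\overline{K_X}$. You are in fact slightly more explicit than the paper in two places --- spelling out how the bounded term $X\otimes\Bbb{1}^C$ and the factor $|\lambda|$ are absorbed (the paper tacitly folds this into ``due to Lemma~\ref{lem:1}''), and citing Proposition~\ref{prop:3} for the core property rather than just invoking $\mathrm{Gr}(\overline{H_0})=\overline{\mathrm{Gr}(H_0)}$ --- but the argument is the same.
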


\begin{proof}
  Consider $\psi \in D_0$. Since the graph $\mathrm{Gr}(\overline{H_0})$ of $\overline{H_0}$ satisfies
  $\mathrm{Gr}(\overline{H_0}) = \overline{\mathrm{Gr}(H_0)}$, we can find a sequence $\psi_n \in D_\Gamma$, $n \in
  \Bbb{N}$ converging to $\psi$ such that $\lim_{n \rightarrow \infty} H_0 \psi_n = H_0 \psi$. Hence $H_0 \psi_n$, and
  $\psi_n$, $n \in \Bbb{N}$ are Cauchy sequences and due to Lemma \ref{lem:1} $K_X \psi_n$ is a Cauchy
  sequence, too. Hence it converges to a $\phi \in \mathcal{H}$, and due to $\mathrm{Gr}(\overline{K_X}) =
  \overline{\mathrm{Gr}(K_X)}$ we can conclude that $\psi \in \mathrm{Dom}(\overline{K_X})$ with $K_X \psi =
  \phi$. Therefore we have $D_0 \subset \mathrm{Dom}(\overline{K_X})$ as stated. Using again Lemma \ref{lem:1} and
  monotonicity of limits we see that in addition
  \begin{equation}
    \| K_X \psi\| \leq a \|H_0 \psi\| + (\eta+\|X\|) \psi \quad \forall \psi \in D_0
  \end{equation}
  holds for $a, \eta > 0$ and $a < 1$, which concludes the proof.
\end{proof}

From now on we drop the bar over operators which are essentially selfadjoint on the domain $D_\Gamma$, i.e. whenever
necessary the corresponding selfadjoint extension is automatically understood.

\begin{prop}
  The operator $H_X = H_0 + X \otimes \Bbb{1}^C + H_I$ introduced in Eq. (\ref{eq:4}) is selfadjoint on the domain $D_0$
  and bounded from below. $D_\Gamma$ is a core of $H_X$.
\end{prop}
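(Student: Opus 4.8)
The plan is to obtain all three assertions from a single application of the Kato--Rellich theorem, with Lemma~\ref{lem:2} supplying exactly the hypothesis it needs. First I would record the two facts about $\overline{H_0}$ that enter: it is selfadjoint on $D_0$ --- being, up to the $a_ea_e^*$ versus $a_e^*a_e$ convention, the Hamiltonian of an $|E_+|$-mode harmonic oscillator, essentially selfadjoint on $D_\Gamma$ as a formally selfadjoint element of $\PAext(\Gamma)$ by Prop.~\ref{prop:3} --- and it is bounded from below, with $\overline{H_0}\ge\sum_{e\in E_+}\omega_{C,e}>0$ since we assumed all $\omega_{C,e}>0$. I would also note at this point that $D_\Gamma$, being precisely the domain on which $H_0$ is essentially selfadjoint, is a core for $\overline{H_0}$.

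Next I would identify $K_X = X\otimes\Bbb{1}^C + H_I$ as the case $\lambda=1$ of the operator treated in Lemma~\ref{lem:2}, so that $\overline{H_0}+\overline{K_X}$ reproduces $H_X$ of Eq.~(\ref{eq:4}) up to the irrelevant additive constant $\sum_e\omega_{C,e}$. Since $X$ is bounded selfadjoint and $H_I$ is symmetric (indeed formally selfadjoint) on $D_\Gamma$, the operator $K_X$ is symmetric there, hence its closure $\overline{K_X}$ is symmetric; and Lemma~\ref{lem:2} gives $D_0\subset\mathrm{Dom}(\overline{K_X})$ together with
\begin{equation}
  \|\overline{K_X}\psi\| \le a\,\|\overline{H_0}\psi\| + (\eta+\|X\|)\,\|\psi\|,\qquad \psi\in D_0,
\end{equation}
with $a<1$. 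This is precisely the statement that $\overline{K_X}$ is $\overline{H_0}$-bounded with relative bound strictly below one.

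Finally I would invoke the Kato--Rellich theorem \cite{RESI2}: it yields that $H_X=\overline{H_0}+\overline{K_X}$ is selfadjoint on $\mathrm{Dom}(\overline{H_0})=D_0$, that it is bounded from below (since $\overline{H_0}$ is), and that it is essentially selfadjoint on every core of $\overline{H_0}$ --- in particular on $D_\Gamma$, which is therefore a core of $H_X$. I do not expect any genuine obstacle here: the analytic content is already packaged in Lemmas~\ref{lem:1} and~\ref{lem:2} and in Prop.~\ref{prop:3}, and what remains is only the routine verification that $\overline{K_X}$ is symmetric, that $\overline{H_0}$ is bounded below, and that $H_0+K_X$ is the Hamiltonian of Eq.~(\ref{eq:4}) modulo a constant. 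The one spot meriting a moment's care is applying the ``core'' clause of Kato--Rellich with $D_\Gamma$ playing the role of a core for $\overline{H_0}$, which is legitimate precisely because $H_0$ is essentially selfadjoint on $D_\Gamma$.
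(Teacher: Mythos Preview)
Your proposal is correct and follows exactly the paper's approach: the paper's proof is the single sentence ``This follows from Lemma~\ref{lem:2} and the Kato--Rellich Theorem,'' and you have simply unpacked that sentence in full detail, including the bookkeeping about the additive constant and the core clause.
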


\begin{proof}
  This follows from Lemma \ref{lem:2} and the Kato-Rellich Theorem \cite{RESI1}.
\end{proof}

We have recovered the selfadjointness already proven in Prop. \ref{prop:3}, and in addition we have got a statement about the domain of
selfadjointness. This proves the first statement of Theorem \ref{thm:1}. The main tool towards the second is the following proposition
which states that $H_X$ has compact resolvent. In this context note that $H_0$ has (obviously) a pure point
spectrum with $\ket{b}$ as a complete basis of eigenvectors, i.e. 
\begin{equation} \label{eq:2}
  H_0 \ket{b} = \lambda_b \ket{b} = \left(\sum_{e\in E_+} \omega_{C,e} \, \underline{b}(e)\right) \ket{b}.  
\end{equation}
 By induction we can easily construct an enumeration of $\ConfP(\Gamma)$, i.e. a
bijective map $\Bbb{N}_0 \ni n \mapsto b_n \in \ConfP(\Gamma)$ such that the eigenvalues $\lambda_n =
\lambda_{b_n}$ satisfy $\lambda_n \leq \lambda_{n+1}$. Since $\lim_{n\rightarrow\infty} \lambda_n = \infty$
Thm. XIII.64 of \cite{RESI4} shows that $H_0$ has compact resolvent. In the following we will use Lemma \ref{lem:2} and
the min-max principle to show that $H_X$ shares this property. This leads to 

\begin{prop}
  The operator $H_X = H_0 + X \otimes \Bbb{1}^C + H_I$ has compact resolvent.
\end{prop}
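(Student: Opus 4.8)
The plan is to prove compactness of the resolvent at a single point and then propagate it to all of $\rho(H_X)$ by the first resolvent identity. First I would fix $\mu\in\Bbb{R}$ with $\mu<0$ and $\mu<\inf\sigma(H_X)$ (possible since $H_X$ is bounded below by the preceding proposition), so that $\mu\in\rho(H_X)\cap\rho(H_0)$. Writing $H_X-\mu=(H_0-\mu)+K_X$ with $K_X=X\otimes\Bbb{1}^C+H_I$, the key observation is that $K_X(H_0-\mu)^{-1}$ is a bounded operator on all of $\mathcal{H}$: indeed $(H_0-\mu)^{-1}$ maps $\mathcal{H}$ into $D_0=\mathrm{Dom}(H_0)\subset\mathrm{Dom}(\overline{K_X})$ (Lemma \ref{lem:2}), and applying the bound of Lemma \ref{lem:2} to $\psi=(H_0-\mu)^{-1}\phi$ gives $\|K_X(H_0-\mu)^{-1}\phi\|\le a\,\|H_0(H_0-\mu)^{-1}\phi\|+(\eta+\|X\|)\,\|(H_0-\mu)^{-1}\phi\|$.

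Next I would invoke the spectral theorem for the positive operator $H_0$: since $\mu<0$ one has $\|H_0(H_0-\mu)^{-1}\|=\sup_{\lambda\in\sigma(H_0)}\lambda/(\lambda+|\mu|)\le 1$ and $\|(H_0-\mu)^{-1}\|\le 1/|\mu|$, hence $\|K_X(H_0-\mu)^{-1}\|\le a+(\eta+\|X\|)/|\mu|$. Because $a<1$, choosing $|\mu|$ large enough makes this norm strictly less than $1$, so $T:=\Bbb{1}+K_X(H_0-\mu)^{-1}$ is invertible with bounded inverse given by a Neumann series. A short computation then identifies $(H_X-\mu)^{-1}=(H_0-\mu)^{-1}T^{-1}$: for $\phi\in\mathcal{H}$ the vector $(H_0-\mu)^{-1}T^{-1}\phi$ lies in $D_0=\mathrm{Dom}(H_X)$, and applying $H_X-\mu=(H_0-\mu)+K_X$ to it yields $T^{-1}\phi+K_X(H_0-\mu)^{-1}T^{-1}\phi=TT^{-1}\phi=\phi$, so this bounded operator is a right inverse of the bijection $H_X-\mu:D_0\to\mathcal{H}$ and therefore its inverse. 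Since $H_0$ has compact resolvent (as recalled above via Thm. XIII.64 of \cite{RESI4}), $(H_0-\mu)^{-1}$ is compact, so $(H_X-\mu)^{-1}=(H_0-\mu)^{-1}T^{-1}$ is compact, and the first resolvent identity transfers compactness to $(H_X-z)^{-1}$ for every $z\in\rho(H_X)$.

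I expect the only genuinely delicate point to be quantitative: the crude estimate $\|H_0(H_0-\mu)^{-1}\|\le\|\Bbb{1}+\mu(H_0-\mu)^{-1}\|\le 2$ would force $\|K_X(H_0-\mu)^{-1}\|<1$ only when the relative bound satisfies $a<1/2$, which is not what Lemma \ref{lem:2} provides in general; replacing it by the sharp spectral bound $\|H_0(H_0-\mu)^{-1}\|\le 1$ is precisely what makes $a<1$ suffice. The rest is bookkeeping — tracking that the resolvent of $H_0$ has range exactly $D_0$ so that Lemma \ref{lem:2} is applied to legitimate vectors, and that $H_X$ is genuinely self-adjoint on $D_0$ (the preceding proposition) so that $H_X-\mu$ is a bijection $D_0\to\mathcal{H}$.

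Alternatively, closer to the route announced in the text, one can argue via the min--max principle: a self-adjoint operator bounded below has compact resolvent iff its $n$-th min--max value tends to $+\infty$, and the form inequality $H_X\ge(1-\varepsilon)H_0-C$ on $D_\Gamma$ — obtained by estimating $|\langle\psi,H_I\psi\rangle|$ mode by mode through $|\langle\psi,(\sigma_+^{(e)}\otimes a_e+\sigma_-^{(e)}\otimes a^*_e)\psi\rangle|\le\varepsilon\,\omega_{C,e}\langle\psi,a_e^*a_e\psi\rangle+\varepsilon^{-1}\|\psi\|^2$ together with $\|X\otimes\Bbb{1}^C\|<\infty$ — reduces the claim to the corresponding property of the harmonic oscillator $H_0$. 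The obstacle in that variant is that Lemmas \ref{lem:1}--\ref{lem:2} furnish operator relative bounds, so one needs the (easy, from the explicit structure) form version of the estimate rather than Lemma \ref{lem:2} verbatim.
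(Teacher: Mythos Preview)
Your main argument (resolvent factorization via a Neumann series for $T=\Bbb{1}+K_X(H_0-\mu)^{-1}$) is correct and entirely self-contained; it is the standard proof that compact resolvent survives relatively bounded perturbations with bound $<1$. The paper, however, does not take this route: it follows precisely your \emph{alternative}, namely the min--max argument, showing $\mu_n(H_X)\ge(1-a)\mu_n(H_0)-\eta$ and then invoking Thm.~XIII.64 of \cite{RESI4}. The ``obstacle'' you flag --- that Lemma~\ref{lem:2} gives an operator bound rather than a form bound --- is dispatched in the paper not by a mode-by-mode quadratic-form estimate as you sketch, but simply by citing Thm.~X.18 of \cite{RESI2}, which converts the operator relative bound $\|K_X\psi\|\le a\|H_0\psi\|+\eta'\|\psi\|$ directly into the form inequality $|\langle\psi,K_X\psi\rangle|\le a\langle\psi,H_0\psi\rangle+\eta\langle\psi,\psi\rangle$ with the same $a<1$. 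So your worry there was unnecessary, and both of your proposed arguments are valid; the first is arguably cleaner since it avoids the min--max machinery altogether and uses only the resolvent identity and the compactness of $(H_0-\mu)^{-1}$.
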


\begin{proof}
  This is a slightly modified version of the proof of Theorem XIII.68 from \cite{RESI4}. We define for $n \in \Bbb{N}$: 
  \begin{equation}
    \mu_n(H_X) = \sup_{\psi_1,\dots,\psi_{n-1}} \inf \{ \langle \phi, H_X \phi\rangle \, | \, \phi \in
    \mathrm{Dom}(H_X),\ \| \phi\|=1,\ \phi \bot \psi_j\ \forall j=1,\dots,n-1\}.
  \end{equation}
  By the min max principle, $\mu_n(H_X)$ is either the $n^{\mathrm{th}}$ eigenvalue (counting multiplicities) or the
  infimum of the essential spectrum. In the latter case we have $\mu_n(H_X) = \mu_k(H_X)$ $\forall k \geq n$.

  Lemma \ref{lem:2} shows together with Thm X.18 of \cite{RESI2} that there are positive constants $a<1, \eta$ such
  that
  \begin{equation}
    |\langle \psi, K_X  \psi\rangle| \leq a \langle \psi, H_0 \psi\rangle + \eta \langle \psi, \psi\rangle
  \end{equation}
  holds for all $\psi \in D_0 = \mathrm{Dom}(H_0)$. Hence we get
  \begin{equation}
    \langle \psi, H_X \psi \rangle = \langle \psi, H_0 \psi\rangle + \langle\psi, K_X\psi\rangle 
    \geq (1-a) \langle \psi, H_0 \psi \rangle - \eta \langle \psi, \psi \rangle.
  \end{equation}
  This shows that $\mu_n(H_X) \geq \alpha \mu_n(H_0) - \eta$ with $\alpha = 1-a > 0$. Since due to Eq. (\ref{eq:2}) we have $\mu_n(H_0) = \lambda_n
  \rightarrow \infty$ for $n \rightarrow \infty$. We get $\lim_{n\rightarrow\infty} \mu_n(H_X) = \infty$. This excludes
  the possibility of a non-empty essential spectrum and therefore the $\mu_n(H_X)$ are the eigenvalues of $H_X$ and they
  converge to $\infty$. Hence the statement follows from Thm. XIII.64 of \cite{RESI4}. 
\end{proof}

Finally, we are ready to proof the recurrence statement of Thm. \ref{thm:1}.

\begin{prop}
  For all $t_- \in \Bbb{R}$, $t_- \leq 0$ and all strong neighborhoods $V$ of $\exp(i t_- H_X)$ in the unitary group
  $\mathrm{U}(\mathcal{H})$ of $\mathcal{H}$, there is a time $t_+ \in \Bbb{R}$, $t_+ > 0$ with $\exp(i t_+ H_X) \in V$.  
\end{prop}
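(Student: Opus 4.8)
The plan is to exploit the fact, just established, that $H_X$ has pure point spectrum with eigenvalues $\lambda_n \to \infty$. Write $\{\mu_k\}_{k \in \Bbb{N}}$ for the distinct eigenvalues and $\phi_j$ for a corresponding orthonormal eigenbasis of $\mathcal{H}$. For any $t \in \Bbb{R}$, $\exp(i t H_X)$ acts diagonally on this basis by $\phi_j \mapsto e^{i t \mu_{k(j)}} \phi_j$. A strong neighborhood $V$ of $\exp(i t_- H_X)$ contains a basic neighborhood of the form $\{ W \mid \| (W - \exp(i t_- H_X)) \psi_\ell \| < \varepsilon,\ \ell = 1,\dots,m \}$ for finitely many vectors $\psi_1,\dots,\psi_m$ and some $\varepsilon > 0$. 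Since the $\psi_\ell$ can be approximated in norm by finite linear combinations of the $\phi_j$, it suffices to make $\| (\exp(i t_+ H_X) - \exp(i t_- H_X)) \phi_j \|$ small for $j$ in a finite set $F$; equivalently, to make $|e^{i t_+ \mu_k} - e^{i t_- \mu_k}|$ small for all $k$ in the corresponding finite set $S$ of eigenvalue-indices.

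So the whole problem reduces to the following elementary statement: given finitely many real numbers $\mu_1,\dots,\mu_r$, a real $t_- \le 0$, and $\varepsilon > 0$, there exists $t_+ > 0$ with $|e^{i t_+ \mu_k} - e^{i t_- \mu_k}| < \varepsilon$ for all $k$. The standard tool here is simultaneous Diophantine approximation / the Kronecker–Weyl equidistribution theorem: the orbit $\{ ( (t_+ - t_-)\mu_1, \dots, (t_+ - t_-)\mu_r ) \bmod 2\pi : t_+ - t_- \ge 0 \}$ is dense in the closure of the one-parameter subgroup of the torus $(\Bbb{R}/2\pi\Bbb{Z})^r$ generated by $(\mu_1,\dots,\mu_r)$, and that closure is a subtorus containing $0$; hence we can choose $s = t_+ - t_- > 0$ (in fact $s > -t_-$, so $t_+ > 0$) with $s \mu_k \equiv 0 \pmod{2\pi}$ to within any prescribed accuracy for every $k$. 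For such $s$, $e^{i t_+ \mu_k} = e^{i s \mu_k} e^{i t_- \mu_k}$ is within $\varepsilon$ of $e^{i t_- \mu_k}$. Tracking the approximation constants backwards through the reduction of the previous paragraph then places $\exp(i t_+ H_X)$ in $V$.

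I would organize the write-up as: (i) reduce a strong neighborhood to finitely many eigenvector conditions using density of $D_\Gamma$ (spanned by eigenvectors) and the definition of the strong topology; (ii) reduce those to the finite Diophantine statement about phases; (iii) invoke Kronecker's theorem (or just the pigeonhole principle applied to $\{(n s_0 \mu_k \bmod 2\pi)_k : n \in \Bbb{N}\}$ for a fixed small $s_0 > 0$, which already gives recurrence arbitrarily far out and hence $t_+ > 0$ as large as we like) to produce $t_+$; (iv) assemble the estimates. The one genuine subtlety — the main obstacle — is the bookkeeping in step (i): the approximating finite combinations of eigenvectors introduce an error that must be controlled uniformly in the unitaries $\exp(i t_- H_X)$ and $\exp(i t_+ H_X)$, but since both are unitary this is immediate ($\|U\psi - U\phi\| = \|\psi-\phi\|$), so really no difficulty arises. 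Everything else is routine; the heart of the matter is simply that a one-parameter group of diagonal unitaries on a pure-point space is almost periodic, so it returns near any of its past values at arbitrarily large future times.
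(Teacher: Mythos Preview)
Your argument is correct and is essentially the paper's own proof: reduce to a basic strong neighborhood, truncate the test vectors to a finite span of $H_X$-eigenvectors (using that $H_X$ has pure point spectrum from the compact-resolvent result), and then invoke finite simultaneous Diophantine approximation on the phases to find $t_+>0$; the paper phrases this last step as ``recurrence always holds on a finite-dimensional invariant subspace,'' which is exactly your Kronecker/pigeonhole step. One small slip: in your step (i) the parenthetical ``density of $D_\Gamma$ (spanned by eigenvectors)'' is off, since for non-diagonal $X$ the vectors $\ket{b}$ spanning $D_\Gamma$ are not eigenvectors of $H_X$---but this is harmless, as what you actually use (and what the paper uses) is the completeness of the $H_X$-eigenbasis $\phi_j$ itself.
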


\begin{proof}
  We can assume without loss of generality that $V$ is of the form
  \begin{equation}
    V = \{ U \in \mathrm{U}(\mathcal{H})\, | \, \|(U- \exp(i t_- H_X))\psi_1\| < \epsilon, \dots, \|(U- \exp(i t_-
    H_X))\psi_m\| \leq \epsilon \}, 
  \end{equation}
  (with $\epsilon >0$ and normalized vectors $\psi_1, \dots, \psi_m \in \mathcal{H}$), since neighborhoods of this type
  form a strong neighborhood base of $\exp(i t_- H_X)$. Now let us consider a complete basis $\phi_n$, $n \in \Bbb{N}$
  of eigenvectors of $H_X$ with eigenvalues $\lambda_n = \langle\phi_n, H_X \phi_n\rangle$. Furthermore $N \in \Bbb{N}$
  is chosen such that $\| (\Bbb{1} - P_N) \psi_j\| \leq \epsilon/3$ holds for the projection $P_N$ onto the span of
  $\phi_1, \dots, \phi_N$. Since $P_N\mathcal{H}$ is invariant under $H_X$, the latter defines a one-parameter group of
  unitaries $\tilde{U}_t = \exp(itH_X)|_{P_N\mathcal{H}}$ on $P_N\mathcal{H}$. On a finite dimensional Hilbert space
  (like $P_N\mathcal{H}$) recurrence in the required sense is always satisfied (since a finite number of eigenvalues can be approximated with
  arbitrary precision by rational numbers with common denominator).
  In other words, there is a $t_+ \in
  \Bbb{R}$ such that $\| \tilde{U}_{t_+} - \tilde{U}_{t_-}\| < \epsilon/3$ in the operator norm.  Now the statement follows from 
  \begin{align}
    \|\exp(i t_+ H_X) \psi_j &- \exp(i t_- H_X) \psi_j\|\\
    &\leq \|(\tilde{U}_{t_+} - \tilde{U}_{t_-}) P_N \psi_j\| + \|
    (\exp(it_+ H_X) - \exp(it_- H_X))(\Bbb{1} -
    P_N)\psi_j\| \\
    &\leq \frac{1}{3} + \| \exp(it_+ H_X) \| \| (\Bbb{1} - P_N)\psi_j\| + \| \exp(it_- H_X) \| \| (\Bbb{1} - P_N)\psi_j\| \leq \epsilon .
  \end{align} 
\end{proof}

\section{The dynamical group}
\label{sec:dynamical-group}

An important technical tool for the analysis of controllability in infinite dynamical systems is the dynamical group,
which is defined as follows:
 
\begin{defi}\label{def:1}
  Consider a Hilbert space $\mathcal{K}$ and the selfadjoint (unbounded) operators $H_1,$ $\dots,$ $H_N$. The smallest
  strongly closed (as a subset of  $\mathrm{U}(\mathcal{K})$ equipped with the strong topology) subgroup of
  $\mathrm{U}(\mathcal{K})$ containing the unitaries $\exp(itH_j)$, $j=1,...,N$ for all $t \in \Bbb{R}$ is called the
  \emph{dynamical group} generated by $H_1, \dots, H_N$ and denoted by $\mathcal{G}(H_1,\dots,H_N)$. 
\end{defi}
 
Using Eq. (\ref{eq:6}) we see that the strong closure of the set $\mathcal{M}$ in Thm. \ref{thm:2} coincides with the
dynamical group $\mathcal{G}\left(H_{x,y};\, (x,y) \in \Bbb{R}^{2|E_+|}\right)$, where $H_{x,y}$, $x,y \in
\Bbb{R}^{|E_+|}$ denotes the family of Hamiltonians from Eq. (\ref{eq:7}). Note in this context that -- although only
positive times are allowed in the definition  of $\mathcal{M}$ -- the \emph{strong closure} of $\mathcal{M}$ contains,
due to the recurrence property from Thm. \ref{thm:1}, the unitaries $\exp(i t H_{x,y})$ for negative $t$, as well. Hence
the control system (\ref{eq:5}) is strongly controllable iff
\begin{equation} \label{eq:18}
  \mathcal{G}\left(\Bbb{1}, H_{x,y};\, (x,y) \in \Bbb{R}^{E_+} \times \Bbb{R}^{E_+}\right) = \mathrm{U}(\mathcal{H})
\end{equation}
holds. Note that we have added the unit operator $\Bbb{1}$ as a control Hamiltonian, to handle the fact that the strong closure of $\mathcal{M}$ has
to contain a unitary $U$ on $\mathcal{H}$ only up to a constant phase factor $e^{i\alpha}$. 

The main task of this section are several lemmata which simplify calculations with dynamical groups significantly. The first is a simple application of
Trotter's product formula.

\begin{lem} \label{lem:3}
  Consider a separable Hilbert space $\mathcal{K}$ and two selfadjoint operators $H_1$, $H_2$ with domains
  $\mathrm{dom}(H_1), \mathrm{dom}(H_2) \subset \mathcal{K}$. If $H_1 + H_2$ is essentially selfadjoint on
  $\mathrm{dom}(H_1) \cap \mathrm{dom}(H_2)$ we have $\exp\left(it\overline{H_1+H_2}\right) \in
  \mathcal{G}(H_1,H_2)$.  
\end{lem}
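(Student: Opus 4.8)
The plan is to invoke the Trotter product formula in its standard form. Since $H_1+H_2$ is essentially selfadjoint on $\mathrm{dom}(H_1)\cap\mathrm{dom}(H_2)$, the Trotter–Kato theorem (see e.g. \cite[Thm. VIII.31]{RESI1}) gives, for every fixed $t\in\Bbb{R}$ and every $\psi\in\mathcal{K}$,
\begin{equation}
  \exp\bigl(it\,\overline{H_1+H_2}\bigr)\psi
  = \lim_{n\to\infty}\left(\exp\!\left(\tfrac{it}{n}H_1\right)\exp\!\left(\tfrac{it}{n}H_2\right)\right)^{n}\psi,
\end{equation}
so the left-hand unitary is the strong limit of the sequence $U_n = \bigl(\exp(\tfrac{it}{n}H_1)\exp(\tfrac{it}{n}H_2)\bigr)^n$.

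Next I would observe that each factor $\exp(\tfrac{it}{n}H_j)$ lies in $\mathcal{G}(H_1,H_2)$ by definition (it is one of the generating one-parameter groups evaluated at the real parameter $\tfrac{t}{n}$), and $\mathcal{G}(H_1,H_2)$ is a group, so it is closed under products. Hence $U_n\in\mathcal{G}(H_1,H_2)$ for every $n$. Since $\mathcal{G}(H_1,H_2)$ is by definition \emph{strongly closed} as a subset of $\mathrm{U}(\mathcal{K})$, and $U_n\to\exp(it\,\overline{H_1+H_2})$ strongly, the limit also lies in $\mathcal{G}(H_1,H_2)$. As $t\in\Bbb{R}$ was arbitrary, this proves the claim.

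The only subtlety worth spelling out is that strong convergence of unitaries $U_n\to U$ in $\mathrm{B}(\mathcal{K})$ with $U$ unitary coincides with convergence in the strong operator topology used to define $\mathcal{G}$, i.e.\ that the strong-operator closure of a subgroup of $\mathrm{U}(\mathcal{K})$ is taken within $\mathrm{U}(\mathcal{K})$ and the limit is automatically unitary here because the Trotter limit is a priori a strongly continuous unitary group. So no extra argument about the limit being unitary is needed; it is handed to us by the essential selfadjointness hypothesis via Trotter. I do not anticipate a genuine obstacle: the one mild point of care is matching the topology in Definition~\ref{def:1} with the mode of convergence in the Trotter formula, which is immediate since both are the strong operator topology, and ensuring the bookkeeping of the parameter $\tfrac{t}{n}$ (legitimate since all real times are allowed for the generating groups).
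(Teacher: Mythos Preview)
Your proof is correct and follows the same approach as the paper: invoke the Trotter product formula, note that each approximant is a finite product of generating unitaries and hence lies in $\mathcal{G}(H_1,H_2)$, and conclude by strong closedness. If anything, your version is slightly more carefully written, since you include the $n$-th power $\bigl(\exp(\tfrac{it}{n}H_1)\exp(\tfrac{it}{n}H_2)\bigr)^n$ explicitly.
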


\begin{proof}
  By the Trotter product formula we have
  \begin{equation}
    s-\lim_{n\rightarrow\infty} \exp(itH_1/n)\exp(itH_2/n) = \exp\left(it\overline{H_1+H_2}\right) =: U\quad \forall t \in \Bbb{R}
  \end{equation}
  Hence for each strong neighbourhood $\mathcal{N}$ of $U$ there is a $n \in \Bbb{N}$ such that the product $U_{1,n}
  U_{2,n}$ of $U_{1,n}= \exp(itH_1/n)$ and $U_{2,n}=\exp(itH_2/n)$ is in $\mathcal{N}$. However, the operator
  $U_{1,n}U_{2,n}$ is an element of $\mathcal{G}(H_1,H_2)$. Since the latter is strongly closed by assumption the
  statement follows. 
\end{proof}

We need a similar result concerning commutators of anti-selfadjoint operators $i\overline{X}$, $i\overline{Y}$. This is, however, more difficult to
achieve in general. For our purposes, however it is sufficient to consider the case where $X, Y$ are formally selfadjoint elements of the extended path
algebra. By Prop. \ref{prop:3} such operators are essentially selfadjoint on $D_\Gamma$ and therefore admit unique selfadjoint extensions
$\overline{X},\overline{Y}$. Moreover, by Thm. \ref{thm:3} the Hilbert space can be decomposed into a direct sum $\mathcal{H} = \bigoplus_{k=1}^\infty
\mathcal{H}_k$ of finite dimensional $\PAext(\Gamma)$-invariant subspaces. This implies that $X,Y$ are block diagonal of the form
\begin{equation}
  X\psi = \sum_{k=0}^\infty X_k \psi, \quad Y\psi = \sum_{k = 1}^\infty Y_k \psi \quad \forall \psi \in D_\Gamma
\end{equation}
with sequences $X_k$, $Y_k$, $k \in \Bbb{N}$ of selfadjoint (and bounded) operators on the finite dimensional spaces $\mathcal{H}_k$. This case was
studied in detail in \cite{KZSH} such that we can directly apply Thm. 2.1 of this previous work to get:

\begin{lem} \label{lem:6}
  Consider two formally selfadjoint elements $X, Y \in \PAext(\Gamma)$ with selfadjoint extensions $\overline{X}$, $\overline{Y}$.  The
  commutator $i[X,Y]$ is in $\PAext(\Gamma)$ again, and essentially selfadjoint on $D_\Gamma$. The unitaries $\exp(t \overline{[X,Y]})$ are
  elements of $\mathcal{G}(X,Y)$  
\end{lem}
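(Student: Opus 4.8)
The plan is to reduce everything to the finite-dimensional block structure supplied by Theorem \ref{thm:3} and then invoke the corresponding result (Thm. 2.1) of \cite{KZSH} blockwise. First I would observe that since $X,Y\in\PAext(\Gamma)$ and $\PAext(\Gamma)$ is an algebra closed under the $+$-operation (as noted at the start of Section \ref{sec:spectral-analysis}), the commutator $i[X,Y]=i(XY-YX)$ is again a formally selfadjoint element of $\PAext(\Gamma)$; by Proposition \ref{prop:3} it is then essentially selfadjoint on $D_\Gamma$, which gives the first two assertions. The real content is the last sentence: $\exp(t\,\overline{[X,Y]})\in\mathcal{G}(\overline X,\overline Y)$.

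For that, I would use the decomposition $\mathcal{H}=\bigoplus_{k}\mathcal{H}_k$ into finite-dimensional $\PAext(\Gamma)$-invariant subspaces from Theorem \ref{thm:3}, so that $X$, $Y$ and hence $[X,Y]$ are block-diagonal with blocks $X_k$, $Y_k$, $i[X_k,Y_k]$ on $\mathcal{H}_k$. On each finite-dimensional block the classical statement ``$\exp(t[X_k,Y_k])$ lies in the (compact) group generated by $\{\exp(sX_k),\exp(sY_k):s\in\Bbb R\}$'' is standard Lie theory; the point of Thm. 2.1 of \cite{KZSH} is precisely that one can do this \emph{simultaneously over all $k$} in the strong topology, i.e. that the block-diagonal unitary $\bigoplus_k\exp(t[X_k,Y_k])$ lies in $\mathcal{G}(\overline X,\overline Y)$. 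Concretely, given a strong neighbourhood of $\exp(t\,\overline{[X,Y]})$, determined by finitely many vectors $\psi_1,\dots,\psi_m$ and an $\epsilon>0$, one truncates to a finite partial sum $\mathcal{K}_N=\bigoplus_{k\le N}\mathcal{H}_k$ containing the essential part of the $\psi_j$ (cf. item 1 of Thm. \ref{thm:3} and the truncation argument already used in Prop. \ref{prop:3}), approximates $\exp(t[X,Y])$ restricted to $\mathcal{K}_N$ by a word in $\exp(sX)$, $\exp(sY)$ using the finite-dimensional result, and extends that word by the identity on the complement; since $\mathcal{G}(\overline X,\overline Y)$ is strongly closed this produces the desired element in the limit.

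The step I expect to be the main obstacle is the careful matching of hypotheses of Thm. 2.1 of \cite{KZSH} to the present situation: one must check that the block structure here (finite-dimensional, $\PAext(\Gamma)$-invariant, exhausting $D_\Gamma$ in the sense of Thm. \ref{thm:3}) is exactly the kind of ``increasing sequence of finite-dimensional invariant subspaces'' hypothesis under which \cite{KZSH} proves the commutator lies in the dynamical group, and that the selfadjoint extensions $\overline X$, $\overline Y$, $\overline{[X,Y]}$ obtained from Prop. \ref{prop:3} coincide with the operators $\bigoplus_k X_k$, $\bigoplus_k Y_k$, $\bigoplus_k i[X_k,Y_k]$ that \cite{KZSH} works with. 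Once these identifications are in place the conclusion is immediate from the cited theorem, so I would keep the proof short: verify the algebraic closure and essential selfadjointness from Prop. \ref{prop:3}, record the block-diagonal form from Thm. \ref{thm:3}, and then quote Thm. 2.1 of \cite{KZSH}.
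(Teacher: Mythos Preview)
Your proposal is correct and matches the paper's approach exactly: the paper sets up the block-diagonal decomposition from Theorem~\ref{thm:3} in the paragraph preceding the lemma and then gives a one-line proof citing Thm.~2.1 of \cite{KZSH}. Your additional remarks (closure of $\PAext(\Gamma)$ under the commutator, essential selfadjointness via Prop.~\ref{prop:3}, and the sketch of why the block structure meets the hypotheses of \cite{KZSH}) are more detailed than what the paper actually writes but are precisely the verifications one should make before invoking that theorem.
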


\begin{proof}
  Follows from \cite[Thm. 2.1]{KZSH}
\end{proof}

The reduction to an increasing sequence of finite dimensional problems (as in the last lemma) is often useful. The next result provides a general
recipe for this type of approximations.

\begin{lem} \label{lem:8}
  Consider a separable Hilbert space $\mathcal{H}$ and an increasing sequence $(\mathcal{K}_n)_{n \in \Bbb{N}}$ of finite dimensional subspaces such
  that $\bigcup_n \mathcal{K}_n$ is dense in $\mathcal{K}$. For each $n \in \Bbb{N}$ define
  \begin{equation}
    \mathcal{U}_n = \{ U \in \mathrm{U}(\mathcal{K})\,|\, U \psi = \psi \ \forall \psi \in \mathcal{K}_n^{\perp}\},
  \end{equation}
  where $\mathcal{K}_n^{\perp}$ denotes the orthocomplement of $\mathcal{K}_n$ in $\mathcal{K}$ (i.e. $\mathcal{U}_n$ consists of all unitaries which
  act trivially on $\mathcal{K}_n^{\perp}$). The strong closure of $\bigcup_n \mathcal{U}_n$ coincides with $\mathrm{U}(\mathcal{K})$.
\end{lem}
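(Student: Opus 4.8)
The plan is to prove density of $\bigcup_n \mathcal{U}_n$ in $\mathrm{U}(\mathcal{K})$; the reverse inclusion is trivial. Fix a target unitary $U$. A strong neighbourhood base of $U$ is given by the sets $\{ V \in \mathrm{U}(\mathcal{K})\,|\, \|(V-U)\psi_j\| < \epsilon,\ j=1,\dots,m\}$ with $\epsilon>0$ and $\psi_1,\dots,\psi_m \in \mathcal{K}$. Applying Gram--Schmidt to the $\psi_j$ and absorbing the resulting (finite) change of constants into $\epsilon$, we may assume $\psi_1,\dots,\psi_m$ are orthonormal; then $U\psi_1,\dots,U\psi_m$ are orthonormal as well. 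So it suffices to find, for each such data, an $n$ and a $V \in \mathcal{U}_n$ with $\|(V-U)\psi_j\| < \epsilon$ for all $j$.

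Next I would cut the problem down to $\mathcal{K}_n$. Since the orthogonal projections $P_n$ onto $\mathcal{K}_n$ increase strongly to $\Bbb{1}$ (density of $\bigcup_n \mathcal{K}_n$), choose $n$ so large that $\dim\mathcal{K}_n \geq m$ and $\|(\Bbb{1}-P_n)\psi_j\| < \delta$, $\|(\Bbb{1}-P_n)U\psi_j\| < \delta$ for all $j$, where $\delta = \delta(\epsilon,m) > 0$ will be fixed at the end. Set $x_j = P_n\psi_j$ and $y_j = P_n U\psi_j$ in $\mathcal{K}_n$. Then $\|x_j - \psi_j\| < \delta$ and $\|y_j - U\psi_j\| < \delta$, and since $\{\psi_j\}$ and $\{U\psi_j\}$ are orthonormal, the Gram matrices $(\langle x_i,x_j\rangle)$ and $(\langle y_i,y_j\rangle)$ differ from the identity by $O(\delta)$; in particular, for $\delta$ small, $\{x_j\}$ and $\{y_j\}$ are linearly independent and ``nearly orthonormal''.

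The remaining step is finite-dimensional linear algebra inside $\mathcal{K}_n$. Orthonormalise $\{x_j\}$ and $\{y_j\}$ by Gram--Schmidt to get orthonormal $m$-tuples $\hat x_j$, $\hat y_j$ with $\|\hat x_j - x_j\| \leq C_m\delta$ and $\|\hat y_j - y_j\| \leq C_m\delta$ --- this uses that Gram--Schmidt is continuous and fixes orthonormal tuples, hence perturbs tuples whose Gram matrix is near the identity by a controlled amount. Since $\mathrm{U}(\mathcal{K}_n)$ acts transitively on orthonormal $m$-tuples in $\mathcal{K}_n$, there is a unitary $W$ on $\mathcal{K}_n$ with $W\hat x_j = \hat y_j$. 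Put $V = W \oplus \Bbb{1}|_{\mathcal{K}_n^{\perp}} \in \mathcal{U}_n$. Using $V\hat x_j = \hat y_j$ and unitarity of $V$,
\[
  \|(V-U)\psi_j\| \leq \|\psi_j - x_j\| + \|x_j - \hat x_j\| + \|W\hat x_j - \hat y_j\| + \|\hat y_j - y_j\| + \|y_j - U\psi_j\| \leq (2 + 2C_m)\delta,
\]
so choosing $\delta < \epsilon/(2+2C_m)$ finishes the proof. The one point deserving care --- and the only real obstacle --- is the quantitative control of the Gram--Schmidt perturbation $\|\hat x_j - x_j\|$, i.e. that $C_m$ depends only on $m$ and not on $n$ or $U$; this is immediate because the estimate involves only the at most $2m$ vectors $x_j,y_j$ together with their Gram matrices, which lie in a fixed compact neighbourhood of the identity.
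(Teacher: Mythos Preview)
Your argument is correct. The paper itself does not prove this lemma but simply cites Lemma~5.4 of \cite{KZSH}, so there is no in-paper proof to compare against; your self-contained argument is a genuine addition. The strategy---approximate both the test vectors $\psi_j$ and their images $U\psi_j$ inside $\mathcal{K}_n$, re-orthonormalise, and use transitivity of $\mathrm{U}(\mathcal{K}_n)$ on orthonormal $m$-tuples---is the natural one and works cleanly.

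Two small remarks on presentation. First, your telescoping estimate is really
\[
  V\psi_j - U\psi_j = V(\psi_j - x_j) + V(x_j - \hat x_j) + (V\hat x_j - \hat y_j) + (\hat y_j - y_j) + (y_j - U\psi_j),
\]
with the middle term vanishing; writing it this way makes the use of unitarity of $V$ explicit. Second, the Gram matrix of the $x_j$ actually satisfies $|\langle x_i,x_j\rangle - \delta_{ij}| = |\langle (\Bbb{1}-P_n)\psi_i,(\Bbb{1}-P_n)\psi_j\rangle| \leq \delta^2$, so the perturbation is even of order $\delta^2$; your $O(\delta)$ is of course sufficient. The claim that $C_m$ depends only on $m$ follows, as you say, from the smoothness of the Cholesky (or QR) factorisation of the $m\times m$ Gram matrix near the identity, which involves only finite-dimensional data independent of $n$.
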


\begin{proof}
  This follows immediately from Lemma 5.4 of \cite{KZSH}.
\end{proof}

In the next lemma we will use this result to prove a statement about dynamical groups on overlapping tensor products; cf. also \cite{HeiMa}

\begin{lem} \label{lem:10}
  Consider three Hilbert spaces $\mathcal{K}_j$, $j=1,\dots,3$ with $\dim \mathcal{K}_j \geq 2$ (can be infinite) and self adjoint operators $H_1,
  \dots, H_N$ on $\mathcal{K}_1 \otimes \mathcal{K}_2$ and $K_1, \dots, K_M$ on $\mathcal{K}_2 \otimes \mathcal{K}_3$. Assume that
  $\mathcal{G}(\Bbb{1}, H_1, \dots, H_N) = \mathrm{U}(\mathcal{K}_1 \otimes \mathcal{K}_2)$ and $\mathcal{G}(\Bbb{1}, K_1, \dots, K_M) =
  \mathrm{U}(\mathcal{K}_2 \otimes \mathcal{K}_3)$ holds. Then we have
  \begin{equation}
    \mathcal{G}(\Bbb{1}, H_1 \otimes \Bbb{1}, \dots, H_N \otimes \Bbb{1}, \Bbb{1} \otimes K_1, \dots, \Bbb{1} \otimes K_M) = \mathrm{U}(\mathcal{K}_1
    \otimes \mathcal{K}_2 \otimes \mathcal{K}_3)
  \end{equation}
\end{lem}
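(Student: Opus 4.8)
\emph{Setup of the plan.} Write $\mathcal{G}$ for the dynamical group on the left hand side of the claimed identity and $\mathcal{K}:=\mathcal{K}_1\otimes\mathcal{K}_2\otimes\mathcal{K}_3$. The plan is to identify two large, explicitly controllable subgroups of $\mathcal{G}$, namely $\mathrm{U}(\mathcal{K}_1\otimes\mathcal{K}_2)\otimes\Bbb{1}_{\mathcal{K}_3}$ and $\Bbb{1}_{\mathcal{K}_1}\otimes\mathrm{U}(\mathcal{K}_2\otimes\mathcal{K}_3)$, and then to glue them together by a finite-dimensional overlap argument. I expect the finite-dimensional overlap statement to be the only genuinely hard part; the passage to infinite dimensions is soft.

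\emph{Step 1 (two subgroups).} First I would show $\mathrm{U}(\mathcal{K}_1\otimes\mathcal{K}_2)\otimes\Bbb{1}_{\mathcal{K}_3}\subset\mathcal{G}$ and, symmetrically, $\Bbb{1}_{\mathcal{K}_1}\otimes\mathrm{U}(\mathcal{K}_2\otimes\mathcal{K}_3)\subset\mathcal{G}$. The map $\phi\colon u\mapsto u\otimes\Bbb{1}_{\mathcal{K}_3}$ is an injective, strongly continuous group homomorphism $\mathrm{U}(\mathcal{K}_1\otimes\mathcal{K}_2)\to\mathrm{U}(\mathcal{K})$, so $\phi^{-1}(\mathcal{G})$ is a strongly closed subgroup of $\mathrm{U}(\mathcal{K}_1\otimes\mathcal{K}_2)$. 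It contains $\Bbb{1}$ and every $\exp(itH_j)$, because $\phi(\exp(itH_j))=\exp(it\,H_j\otimes\Bbb{1})\in\mathcal{G}$; hence, by minimality of the dynamical group, $\phi^{-1}(\mathcal{G})\supseteq\mathcal{G}(\Bbb{1},H_1,\dots,H_N)=\mathrm{U}(\mathcal{K}_1\otimes\mathcal{K}_2)$, i.e. $\phi$ maps $\mathrm{U}(\mathcal{K}_1\otimes\mathcal{K}_2)$ into $\mathcal{G}$. Interchanging the roles of $\mathcal{K}_1$ and $\mathcal{K}_3$ gives the second inclusion.

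\emph{Step 2 (finite-dimensional overlap).} Here I would use --- or prove --- the following statement, which I expect to be the main obstacle: if $A,B,C$ are finite dimensional with $\dim B\geq 2$, then $\mathrm{U}(A\otimes B)\otimes\Bbb{1}_C$ together with $\Bbb{1}_A\otimes\mathrm{U}(B\otimes C)$ generate $\mathrm{U}(A\otimes B\otimes C)$; cf. \cite{HeiMa}. One proves it by passing to Lie algebras and complexifying: it suffices that $\mathcal{B}(A\otimes B)\otimes\Bbb{1}$ and $\Bbb{1}\otimes\mathcal{B}(B\otimes C)$ generate $\mathcal{B}(A\otimes B\otimes C)$ as a \emph{complex} Lie algebra, since then a real-form dimension count forces the real Lie algebra generated by their skew-Hermitian parts to be all skew-Hermitian operators on $A\otimes B\otimes C$, and a subgroup of the compact connected group $\mathrm{U}(A\otimes B\otimes C)$ generated by two connected Lie subgroups whose Lie algebras generate everything is the whole group. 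The complex statement is checked on matrix units: for $b_1\neq b_4$ the commutator $[\,|a_1b_1\rangle\langle a_2b|\otimes\Bbb{1}_C,\ \Bbb{1}_A\otimes|b\,c_1\rangle\langle b_4c_2|\,]$ collapses (the opposite-order product vanishes, which is exactly where $\dim B\geq 2$ enters) to the single matrix unit $|a_1b_1c_1\rangle\langle a_2b_4c_2|$ with arbitrary $A$- and $C$-labels; chaining two such units through an auxiliary $B$-label (again using $\dim B\geq 2$) yields the remaining off-diagonal units, and these together with $\Bbb{1}$ and the diagonal commutators give all matrix units. This is the place where the hypothesis $\dim\mathcal{K}_2\geq 2$ is really used.

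\emph{Step 3 (gluing).} Since $\mathcal{G}$ is strongly closed, it suffices to prove that $\mathcal{G}$ is strongly dense in $\mathrm{U}(\mathcal{K})$. Given $W\in\mathrm{U}(\mathcal{K})$, unit vectors $\psi_1,\dots,\psi_m$ and $\epsilon>0$, choose finite dimensional $A'\subset\mathcal{K}_1$, $B'\subset\mathcal{K}_2$ (with $\dim B'\geq 2$), $C'\subset\mathcal{K}_3$ large enough that --- after a routine perturbation matching Gram matrices --- there is a unitary $\tilde W$ on $\mathcal{L}:=A'\otimes B'\otimes C'$ with $\|\tilde W\psi_i-W\psi_i\|<\epsilon$ for all $i$. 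By Step 2 write $\tilde W=g_Lh_L\cdots g_1h_1$ with $g_j\in\mathrm{U}(A'\otimes B')\otimes\Bbb{1}_{C'}$ and $h_j\in\Bbb{1}_{A'}\otimes\mathrm{U}(B'\otimes C')$, say $g_j=w_j\otimes\Bbb{1}_{C'}$. Then $\hat g_j:=(w_j\oplus\Bbb{1})\otimes\Bbb{1}_{\mathcal{K}_3}$ (identity on the orthocomplement of $A'\otimes B'$ in $\mathcal{K}_1\otimes\mathcal{K}_2$) lies in $\mathrm{U}(\mathcal{K}_1\otimes\mathcal{K}_2)\otimes\Bbb{1}_{\mathcal{K}_3}\subset\mathcal{G}$, and --- this is the point --- it leaves $\mathcal{L}$ invariant (because $w_j$ leaves $A'\otimes B'$ invariant) and restricts to $g_j$ on $\mathcal{L}$. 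Lifting the $h_j$ in the mirror-image way to $\hat h_j\in\Bbb{1}_{\mathcal{K}_1}\otimes\mathrm{U}(\mathcal{K}_2\otimes\mathcal{K}_3)\subset\mathcal{G}$, the product $\hat V:=\hat g_L\hat h_L\cdots\hat g_1\hat h_1\in\mathcal{G}$ leaves $\mathcal{L}$ invariant and restricts to $\tilde W$ on it, so $\|\hat V\psi_i-W\psi_i\|<\epsilon$. Hence $\mathcal{G}$ is strongly dense and, being strongly closed, equals $\mathrm{U}(\mathcal{K})$. (The finite-dimensional bookkeeping of this last step can alternatively be organised through Lemma \ref{lem:8}.)
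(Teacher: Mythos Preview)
Your proof is correct and follows essentially the same route as the paper: reduce to showing that $\mathrm{U}(\mathcal{K}_1\otimes\mathcal{K}_2)\otimes\Bbb{1}$ and $\Bbb{1}\otimes\mathrm{U}(\mathcal{K}_2\otimes\mathcal{K}_3)$ generate the full unitary group, handle the finite-dimensional case by the matrix-unit commutator computation (exactly Eq.~(\ref{eq:10}) in the paper), and pass to infinite dimensions by exhausting with finite-dimensional tensor subspaces. Your Step~3 is organised slightly more explicitly (lifting a finite-dimensional factorization by extending each factor with the identity on the orthocomplement), whereas the paper packages the same bookkeeping through Lemma~\ref{lem:8} --- which you also mention as an alternative --- so the two arguments are effectively the same.
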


\begin{proof}
  By assumption we have
  \begin{equation}
      \mathcal{G}(\Bbb{1}, H_1 \otimes \Bbb{1}, \dots, H_N \otimes \Bbb{1}) = \mathrm{U}(\mathcal{K}_1 \otimes \mathcal{K}_2) \otimes \Bbb{1}, \quad \mathcal{G}(\Bbb{1}, \Bbb{1} \otimes 
      K_1, \dots, \Bbb{1} \otimes K_M) = \Bbb{1} \otimes \mathrm{U}(\mathcal{K}_2 \otimes \mathcal{K}_3)\,.
  \end{equation}
  Hence it is sufficient to show that the smallest, strongly closed subgroup of $\mathrm{U}(\mathcal{K}_1 \otimes \mathcal{K}_2
  \otimes \mathcal{K}_3)$ containing $\mathrm{U}(\mathcal{K}_1 \otimes \mathcal{K}_2) \otimes \Bbb{1}$ and $\Bbb{1} \otimes \mathrm{U}(\mathcal{K}_2
  \otimes \mathcal{K}_3)$ is $\mathrm{U}(\mathcal{K}_1 \otimes \mathcal{K}_2 \otimes \mathcal{K}_3)$ itself. If the Hilbert spaces are finite
  dimensional we can check equivalently, whether the real Liealgebras $\mathfrak{u}(\mathcal{K}_1 \otimes \mathcal{K}_2) \otimes \Bbb{1}$ and
  $\Bbb{1} \otimes \mathfrak{u}(\mathcal{K}_2 \otimes \mathcal{K}_3)$ together generate $\mathfrak{u}(\mathcal{K}_1 \otimes \mathcal{K}_2 \otimes
  \mathcal{K}_3)$, where $\mathfrak{u}(\,\cdot\,)$ denote the real Liealgebra of anti-selfadjoint operators on the given Hilbert space. For calculations
  of commutators it is easier to look at the complexification of $\mathfrak{u}(\,\cdot\,)$, i.e. the complex Liealgebra $\mathfrak{gl}(\,\cdot\,)$ of
  all operators, and therefore we have to check that the smallest Liesubalgebra of $\mathfrak{gl}(\mathcal{K}_1 \otimes \mathcal{K}_2 \otimes
  \mathcal{K}_3)$ containing $\mathfrak{gl}(\mathcal{K}_1 \otimes \mathcal{K}_2) \otimes \Bbb{1}$ and $\Bbb{1} \otimes \mathfrak{gl}(\mathcal{K}_2
  \otimes \mathcal{K}_3)$ is $\mathfrak{gl}(\mathcal{K}_1 \otimes \mathcal{K}_2 \otimes \mathcal{K}_3)$ itself. This is easily done by looking at
  commutators of the form
  \begin{multline} \label{eq:10}
    \left[\KB{e^{(1)}_{j_1} \otimes e^{(2)}_{j_2}}{e^{(1)}_{k_1} \otimes e^{(2)}_{k_2}} \otimes \Bbb{1}, \Bbb{1} \otimes \KB{e^{(2)}_{l_2} \otimes
        e^{(3)}_{l_3}}{e^{(2)}_{m_2} \otimes e^{(3)}_{m_3}}\right] = \\
    \delta_{k_2l_2} \KB{e_{j_1} \otimes e_{j_2} \otimes e_{l_3}}{e_{k_1} \otimes e_{m_2} \otimes e_{m_3}} - \delta_{m_2j_2} \KB{e_{j_1} \otimes e_{l_2}
      \otimes e_{l_3}}{e_{k_1} \otimes e_{k_2} \otimes e_{m_3}}
  \end{multline}
  where $e^{(j)}_k$, $j=1,\dots,3$, $k=1,\dots,\dim(\mathcal{K}_j)$ denote orthonormal bases of the Hilbert spaces $\mathcal{K}_j$. It is easy to see
  that we can generate all operators $\KB{e^{(1)}_{k_1}\otimes e^{(2)}_{k_2} \otimes ^{(3)}_{k_3}}{e^{(1)}_{m_1} \otimes e^{(2)}_{m_2} \otimes
    e^{(3)}_{m_3}}$ with commutators from (\ref{eq:10}) and with linear combinations of them we can get any operator on $\mathcal{K}_1 \otimes
  \mathcal{K}_2 \otimes \mathcal{K}_3$. Hence, by applying the reasoning from above the statement follows.

  Now assume all the Hilbert spaces are infinite dimensional (but separable). For each $j=1,2,3$ we choose a strictly increasing sequence $P^{(j)}_n$
  of finite dimensional, orthonormal projections on $\mathcal{K}_j$, converging strongly to $\Bbb{1}$ and define for $n_1, n_2, n_3 \in \Bbb{N}$
  \begin{equation}
    \mathcal{U}_{n_1n_2n_3} = \{ U \in \mathrm{U}(\mathcal{K}_1 \otimes \mathcal{K}_2 \otimes \mathcal{K}_3) \, | \, U \psi = \psi \ \forall \psi \
    \text{with}\ P_{n_1} \otimes P_{n_2} \otimes P_{n_3} \psi = 0 \} .
  \end{equation}
  In other words, $\mathcal{U}_{n_1n_2n_3}$ consists of all unitaries on $\mathcal{K}_1 \otimes \mathcal{K}_2 \otimes \mathcal{K}_3$ acting trivially
  on the orthocomplement of $ \bigotimes_{j=1}^3 [P_{n_j} \mathcal{K}_j]$. Similarly we define $\mathcal{U}_{n_1n_2} \subset \mathrm{U}(\mathcal{K}_1
  \otimes \mathcal{K}_2)$ and $\mathcal{U}_{n_2n_3} \subset \mathrm{U}(\mathcal{K}_2 \otimes \mathcal{K}_3)$. All these groups are finite dimensional
  Lie groups (since the projections $P_j$ are finite dimensional) and therefore we can apply the result from the last paragraph to conclude that
  $\mathcal{U}_{n_1n_2n_3}$ is the smallest Liegroup (and therefore the smallest (strongly) closed group as well) which contains $\mathcal{U}_{n_1n_2}
  \otimes \Bbb{1}$ and $\Bbb{1} \otimes \mathcal{U}_{n_2n_3}$. Since the subspaces $\bigotimes_{j=1}^3 [P_{n_j} \mathcal{K}_j]$ exhaust in the limit
  the whole Hilbert space we can apply Lemma \ref{lem:8} and the statement follows. If only one or two of the Hilbert spaces are infinite dimensional,
  we can proceed in the same way by exhausting only one (or two) Hilbert spaces with finite dimensional subspaces. This concludes the proof.
\end{proof}

\section{Full controllability}
\label{sec:full-controlability}

We are now prepared to provide the full proof of Thm. \ref{thm:2}. To this end we will use Eq. (\ref{eq:18}) and the discussion in
Sect. \ref{sec:dynamical-group}. In the first step we will simplify the set of generators of the dynamical group on the left
hand side of Eq. (\ref{eq:18}).

\begin{lem}\label{lem:4}
  The control problem (\ref{eq:5}) is strongly controllable (in the sense of Theorem \ref{thm:2}) iff $\mathcal{G}(\Bbb{1}, H_D,
  X^{(e)}, Y^{(e)}; e \in E_+) = \mathrm{U}(\mathcal{H})$ holds. 
\end{lem}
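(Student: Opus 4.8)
The plan is to deduce the lemma from Eq.~(\ref{eq:18}), which says that (\ref{eq:5}) is strongly controllable iff $\mathcal{G}_1 := \mathcal{G}\bigl(\Bbb{1},H_{x,y};\,(x,y)\in\Bbb{R}^{E_+}\times\Bbb{R}^{E_+}\bigr) = \mathrm{U}(\mathcal{H})$; it therefore suffices to show $\mathcal{G}_1 = \mathcal{G}_2$, where $\mathcal{G}_2 := \mathcal{G}\bigl(\Bbb{1},H_D,X^{(e)},Y^{(e)};\,e\in E_+\bigr)$. Throughout I write $B_{x,y} := \sum_{e\in E_+}\bigl(x_e X^{(e)}+y_e Y^{(e)}\bigr)$, a bounded self-adjoint operator, so that $H_{x,y} = H_D + B_{x,y}$ by Eq.~(\ref{eq:7}); by Section~\ref{sec:spectral-analysis} each $H_{x,y}$ (being an $H_X$ with bounded atomic part) is essentially self-adjoint on $D_\Gamma$, and $D_\Gamma$ is a common core for $H_D$ and for every $H_{x,y}$, while the $X^{(e)},Y^{(e)}$ are bounded.

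For the inclusion $\mathcal{G}_1\subseteq\mathcal{G}_2$, note that since $B_{x,y}$ is bounded, $H_D+B_{x,y}$ is self-adjoint already on $\mathrm{dom}(H_D)\cap\mathrm{dom}(B_{x,y})=\mathrm{dom}(H_D)$, so Lemma~\ref{lem:3} gives $\exp(itH_{x,y})\in\mathcal{G}(H_D,B_{x,y})$ for all $t$. It then remains to see $\exp(itB_{x,y})\in\mathcal{G}_2$: as $B_{x,y}$ is a finite real-linear combination of the bounded operators $X^{(e)},Y^{(e)}$, the Lie--Trotter product formula (norm convergence, hence strong) writes $\exp(itB_{x,y})$ as a limit of finite products of the unitaries $\exp(isX^{(e)})$, $\exp(isY^{(e)})$, $s\in\Bbb{R}$, all of which lie in the strongly closed group $\mathcal{G}_2$. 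Hence every generator $\exp(itH_{x,y})$ of $\mathcal{G}_1$ belongs to $\mathcal{G}_2$.

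For $\mathcal{G}_2\subseteq\mathcal{G}_1$, first $\exp(itH_D)=\exp(itH_{0,0})\in\mathcal{G}_1$ trivially. To extract the individual control unitaries, fix $(x,y)$ and $t\in\Bbb{R}$ and observe that for $n\in\Bbb{N}$
\begin{equation*}
  \frac{t}{n}\,H_{nx,ny} \;=\; \frac{t}{n}\,H_D + t\,B_{x,y},
\end{equation*}
whose restriction to the common core $D_\Gamma$ converges pointwise, as $n\to\infty$, to $t\,B_{x,y}$. By the standard criterion for strong resolvent convergence of self-adjoint operators that agree in the limit on a common core, together with the resulting strong convergence of the associated unitary groups (cf.\ \cite{RESI1}), we get $\exp\bigl(i\tfrac{t}{n}H_{nx,ny}\bigr)\to\exp(it\,B_{x,y})$ strongly. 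Each $\exp\bigl(i\tfrac{t}{n}H_{nx,ny}\bigr)$ lies in $\mathcal{G}_1$, which is strongly closed, so $\exp(it\,B_{x,y})\in\mathcal{G}_1$ for every $(x,y)$; specializing $(x,y)$ to the unit vector supported on a single edge $e$ yields $\exp(itX^{(e)}),\exp(itY^{(e)})\in\mathcal{G}_1$. Thus all generators of $\mathcal{G}_2$ lie in $\mathcal{G}_1$, and combining the two inclusions gives $\mathcal{G}_1=\mathcal{G}_2$, which together with Eq.~(\ref{eq:18}) proves the lemma.

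I expect the only genuine obstacle to be the limiting argument in the second inclusion: one must push the control amplitude to infinity while shrinking the control time to zero, and the delicate point is to confirm that the \emph{unbounded} drift contribution $\tfrac{t}{n}H_D$ does not obstruct convergence. This is precisely why one works on the common core $D_\Gamma$ and appeals to strong resolvent convergence rather than operator-norm convergence; all remaining steps are routine applications of Lemma~\ref{lem:3} and the Lie--Trotter product formula.
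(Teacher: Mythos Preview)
Your argument is correct, and for the inclusion $\mathcal{G}_1\subseteq\mathcal{G}_2$ it coincides with the paper's proof (Lemma~\ref{lem:3} applied to $H_{x,y}=H_D+B_{x,y}$, together with Trotter for the bounded part $B_{x,y}$).

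For the reverse inclusion $\mathcal{G}_2\subseteq\mathcal{G}_1$, however, the paper takes a noticeably simpler route than your scaling/strong-resolvent-convergence argument. The key observation is that the generators of $\mathcal{G}_2$ are themselves \emph{real linear combinations} of the $H_{x,y}$: indeed $H_D=H_{0,0}$, and with $\delta_e\in\Bbb{R}^{E_+}$ the unit vector at $e$ one has $X^{(e)}=H_{\delta_e,0}-H_{0,0}$ and $Y^{(e)}=H_{0,\delta_e}-H_{0,0}$. Since $\exp(is(-H_{0,0}))=\exp(-isH_{0,0})\in\mathcal{G}_1$, and since $X^{(e)}$ (being bounded) is trivially essentially self-adjoint on $D_\Gamma\subset\mathrm{dom}(H_{\delta_e,0})\cap\mathrm{dom}(H_{0,0})$, Lemma~\ref{lem:3} applies again and gives $\exp(itX^{(e)})\in\mathcal{G}_1$ directly; similarly for $Y^{(e)}$. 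Thus the paper disposes of both inclusions uniformly with Trotter alone, whereas your approach invokes an additional (correct but heavier) approximation principle. Your method has the merit of not requiring one to spot the ``vice versa'' linear-combination identity, but once that identity is noticed the extra machinery is unnecessary.
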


\begin{proof}
  By Thm.\ \ref{thm:1} all the operators $H_{x,y}$ are essentially selfadjoint on the domain $D_\Gamma$. The same is true for $H_D$, $X^{(e)}$ and
  $Y^{(e)}$ (the latter two are even bounded). Since all the $H_{x,y}$ are linear combinations of $H_D$, $X^{(e)}$, $Y^{(e)}$ and vice versa the
  statement follows from Lemma \ref{lem:3} and Eq. (\ref{eq:18}).
\end{proof}

The next step concentrates on the group generated by the bounded operators $X^{(e)}$, $Y^{(e)}$. Here we can easily use Lie-algebraic methods. 

\begin{lem} \label{lem:7}
  $\mathcal{G}(X^{(e)},Y^{(e)};e\in E_+) = \mathrm{SU}(\mathcal{H}^A) \otimes \Bbb{1} \subset \mathrm{U}(\mathcal{H})$.
\end{lem}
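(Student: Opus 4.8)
The plan is to reduce the assertion to a finite-dimensional Lie-algebraic statement about the atom and then settle that statement using connectedness of $\Gamma$. Note first that each generator $X^{(e)}=\sigma_1^{(e)}\otimes\Bbb{1}^C$ and $Y^{(e)}=\sigma_3^{(e)}\otimes\Bbb{1}^C$ is of the form $M\otimes\Bbb{1}^C$ with $M\in\mathcal{B}(\mathcal{H}^A)$ traceless (the $\sigma_\alpha^{(e)}$ are traceless by construction). The map $\mathrm{U}(\mathcal{H}^A)\ni U\mapsto U\otimes\Bbb{1}^C\in\mathrm{U}(\mathcal{H})$ is an injective, strongly continuous homomorphism whose image is compact, hence strongly closed; consequently $\mathcal{G}(X^{(e)},Y^{(e)};e\in E_+)=G_0\otimes\Bbb{1}^C$, where $G_0\subseteq\mathrm{SU}(\mathcal{H}^A)\cong\mathrm{SU}(d)$ is the closure of the subgroup generated by the one-parameter groups $t\mapsto\exp(it\sigma_1^{(e)})$ and $t\mapsto\exp(it\sigma_3^{(e)})$, $e\in E_+$ (containment in $\mathrm{SU}$ is because the generators are traceless). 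Since $\mathrm{SU}(d)$ is a compact connected Lie group, every closed subgroup of it is a Lie subgroup, and $G_0$ contains the connected Lie subgroup whose Lie algebra is the real Lie algebra $\mathfrak{g}\subseteq\mathfrak{su}(\mathcal{H}^A)$ generated by $\{i\sigma_1^{(e)},i\sigma_3^{(e)}\mid e\in E_+\}$. Hence it suffices to prove $\mathfrak{g}=\mathfrak{su}(\mathcal{H}^A)$, for then that connected subgroup is already all of $\mathrm{SU}(d)$ and $G_0=\mathrm{SU}(\mathcal{H}^A)$.

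To compute $\mathfrak{g}$ I would first extract an $\mathfrak{su}(2)$ for every edge: the single-edge calculation $[i\sigma_1^{(e)},i\sigma_3^{(e)}]=2\,i\sigma_2^{(e)}$ shows $\mathfrak{g}\supseteq\mathrm{span}_\Bbb{R}\{i\sigma_1^{(e)},i\sigma_2^{(e)},i\sigma_3^{(e)}\}$, the copy of $\mathfrak{su}(2)$ acting on $\mathcal{H}^A_e=\mathrm{span}\{\ket{i(e)},\ket{t(e)}\}$. Passing to the complexification $\mathfrak{g}_\Bbb{C}\subseteq\mathfrak{sl}(\mathcal{H}^A)$ — the device already used in the proof of Lemma \ref{lem:10} — this puts $\sigma_\pm^{(e)}$, i.e.\ the matrix units $E_{i(e)\,t(e)}$ and $E_{t(e)\,i(e)}$, into $\mathfrak{g}_\Bbb{C}$ for every $e\in E_+$; by the ``no double edges'' assumption these are precisely the matrix units attached to the geometric edges of $\Gamma$. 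A short induction on the graph distance $d(v,u)$ then yields $E_{vu}\in\mathfrak{g}_\Bbb{C}$ for all $v\neq u$: picking on a shortest $v$–$u$ path a vertex $w$ adjacent to $u$ with $d(v,w)=d(v,u)-1$, the indices $v,w,u$ are pairwise distinct, $E_{vw}\in\mathfrak{g}_\Bbb{C}$ by induction, $E_{wu}\in\mathfrak{g}_\Bbb{C}$ since $\{w,u\}$ is a geometric edge, and $[E_{vw},E_{wu}]=E_{vu}$. Finally $[E_{vu},E_{uv}]=E_{vv}-E_{uu}$ exhausts the traceless diagonal matrices, so $\mathfrak{g}_\Bbb{C}=\mathfrak{sl}(\mathcal{H}^A)$. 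Comparing real dimensions, $\dim_\Bbb{R}\mathfrak{g}=\dim_\Bbb{C}\mathfrak{g}_\Bbb{C}=d^2-1=\dim_\Bbb{R}\mathfrak{su}(\mathcal{H}^A)$, which together with $\mathfrak{g}\subseteq\mathfrak{su}(\mathcal{H}^A)$ forces $\mathfrak{g}=\mathfrak{su}(\mathcal{H}^A)$ and hence the claim.

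I do not anticipate a genuine obstacle: this is essentially the classical statement that ``rotations along the edges of a connected graph generate $\mathfrak{su}(d)$'', and only two points need a little care — first, the bookkeeping that identifies the strongly closed dynamical group inside $\mathrm{U}(\mathcal{H})$ with a closed subgroup of the compact connected group $\mathrm{SU}(d)$, so that the Lie-algebra rank criterion applies legitimately; and second, keeping the three indices in $[E_{vw},E_{wu}]=E_{vu}$ pairwise distinct throughout the distance induction (which is where connectedness and the absence of loops enter). The degenerate case $d=1$ ($E_+=\emptyset$) is trivial, since then $\mathrm{SU}(\mathcal{H}^A)=\{\Bbb{1}\}$.
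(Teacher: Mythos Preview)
Your proof is correct and follows essentially the same approach as the paper: reduce to the finite-dimensional atom, pass to the complexified Lie algebra, and use connectedness of $\Gamma$ to propagate matrix units via commutators $[E_{vw},E_{wu}]=E_{vu}$. The only cosmetic difference is that the paper first passes to a spanning tree and then inducts on the number of vertices (adding one leaf at a time), whereas you induct directly on graph distance in the full graph; your version is slightly more streamlined and your bookkeeping for the group-to-Lie-algebra reduction is in fact more carefully spelled out than the paper's.
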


\begin{proof}
  The operators $X^{(e)}, Y^{(e)}$ are of the form $X^{(e)} = \sigma_1^{(e)} \otimes \Bbb{1}^C$, $Y^{(e)} =
  \sigma_3^{(e)} \otimes \Bbb{1}^C$, with $\sigma_{1,3}^{(e)}$ acting as $\sigma_{1/3}$ on $\mathcal{H}^A_e =
  \SP\{i(e), t(e)\} \cong \Bbb{C}^2$. Hence it is sufficient to show that 
  \begin{equation} \label{eq:14}
    \mathcal{G}(\sigma^{(e)}_1, \sigma^{(e)}_3; e \in E_+) = \mathrm{SU}(\mathcal{H}^A)
  \end{equation}
 holds. However, the Hilbert space $\mathcal{H}^A \cong \Bbb{C}^{|V(\Gamma)|}$
  is finite dimensional, such that  $\mathcal{G}(\sigma^{(e)}_1, \sigma^{(e)}_3;e \in E_+)$ just becomes the smallest
  Lie subgroup of $\mathrm{SU}(\mathcal{H}^A)$ containing all the operators $\exp(it\sigma^{(e)}_{1,3})$ for some $t \in
  \Bbb{R}$. To show that (\ref{eq:14}) holds it is therefore sufficient to prove that the complex Liealgebra $\langle
  \sigma^{(e)}_{1,3}; e \in E_+\rangle_{\mathrm{Lie},\Bbb{C}}$ coincides with $\mathfrak{sl}(\mathcal{H}^A)$, i.e. the
  trace-free matrices on $\mathcal{H}^A$. To this end we will proceed as follows:

  Firstly we can assume that $\Gamma$ is a tree graph, since we can replace a general $\Gamma$ with a spanning tree $\Sigma$ which
  satisfies 
  \begin{equation}
    \langle \sigma^{(e)}_{1,3}; e \in E_+(\Sigma)\rangle_{\mathrm{Lie},\Bbb{C}} \subset \langle \sigma^{(e)}_{1,3}; e \in
    E_+(\Gamma)\rangle_{\mathrm{Lie},\Bbb{C}}.
  \end{equation}
  Secondly, the statement is obviously true for the fully connected graph with two vertices, since the corresponding Hilbert space is two-dimensional
  and the complex Lie-algebra generated by $\sigma_1, \sigma_3$ coincides with $\mathrm{sl}(2,\Bbb{C})$. 

  Finally, the general case follows by induction. Hence, assume that we have proven the result for a tree graph $\Gamma$. In addition consider another
  tree $\Sigma$ with exactly one vertex $v_0$ (and one geometric edge) more than $\Gamma$, i.e. $V(\Sigma) = V(\Gamma) \cup \{ v_0 \}$. The one edge
  $e_0 \in E_+(\Sigma)$ we have to add is of the form $i(e_0) = v_1$, $t(e_0) = v_0$ with $v_1 \in V(\Gamma)$. The operators $\sigma^{(e_0)}_{1,3}$ 
  generate all linear combination of the operators $\KB{v_0}{v_1}$, $\KB{v_1}{v_0}$ and $\kb{v_0} - \kb{v_1}$. The space $\langle
  \sigma^{(e)}_{1,3}; e \in E_+\rangle_{\mathrm{Lie},\Bbb{C}} = \mathfrak{sl}(|V(\Gamma)|,\Bbb{C})$ is on the other hand generated (as a vector space)
  by operators $\KB{v}{w}$, $\KB{w}{v}$, $\kb{v}-\kb{w}$, $v, w \in V(\Gamma)$. Using commutators like $[\KB{v}{v_1},\KB{v_1}{v_0}]$ we can produce
  all operators $\KB{x}{y}$, $\KB{x}{y}$, $\kb{x}-\kb{y}$, $x, y \in V(\Sigma)$. But this set spans (as a vector space) the Lie-algebra
  $\mathfrak{sl}(|V(\Sigma)|,\Bbb{C})$, which concludes the proof. 
\end{proof}

To get a simpler set of generators we split the ``interaction Hamiltonian'' $H_I$ up into its summands
\begin{equation}
  H_I = \sum_{e \in E_+} \omega_{I,e} Z^{(e)},\quad Z^{(e)} = \sigma_+^{(e)} \otimes a_e + \sigma_-^{(e)} \otimes a^*_e = A(e) + A(\overline{e}) 
\end{equation}
and use Lemma \ref{lem:6} to reexpress the $Z^{(e)}$ in terms of double commutators.

\begin{lem} \label{lem:11}
  $\Gamma(X^{(e)}, Y^{(e)}, Z^{(e)}; e \in E_+) \subset \mathcal{G}(H_D, X^{(e)}, Y^{(e)}; e \in E_+)$
\end{lem}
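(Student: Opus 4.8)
The plan is to work entirely inside the extended path algebra $\PAext(\Gamma)$ and to reduce the claimed inclusion to a finite commutator computation. Note first that $H_D\in\PAext(\Gamma)$ (its $\sigma_3^{(e)}$– and $a_e^*a_e$–parts are diagonal in the basis $\ket{b}$, and $\sum_e\omega_{I,e}Z^{(e)}=\sum_{e\in E(\Gamma)}\omega_{I,e}A_e$), and that $Y^{(e)}=\sigma_3^{(e)}\otimes\Bbb{1}^C$ is diagonal, hence also in $\PAext(\Gamma)$; all of these are formally selfadjoint. Consequently Lemma \ref{lem:6} lets us pass to commutators, Lemma \ref{lem:3} to real linear combinations, and Prop. \ref{prop:3} guarantees that everything we build stays in $\PAext(\Gamma)$ and is essentially selfadjoint on $D_\Gamma$; moreover the corresponding one–parameter unitary groups all lie in $\mathcal{G}:=\mathcal{G}(H_D,X^{(e)},Y^{(e)};e\in E_+)$. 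Since a dynamical group is the smallest strongly closed subgroup containing the one–parameter groups of its generators, and since $X^{(e)},Y^{(e)}$ are already generators of $\mathcal{G}$, it suffices to show that $\exp(itZ^{(e)})\in\mathcal{G}$ for every $e\in E_+$ and $t\in\Bbb{R}$.

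Step 1 (double commutators with $Y^{(e)}$). Using $[\sigma_3,\sigma_\pm]=\pm2\sigma_\pm$ and the fact that $\sigma_3^{(e)}$ vanishes on the orthocomplement of $\mathcal{H}^A_e=\SP\{\ket{i(e)},\ket{t(e)}\}$, one computes $[Y^{(e)},Z^{(f)}]=\epsilon_{ef}\bigl(\sigma_+^{(f)}\otimes a_f-\sigma_-^{(f)}\otimes a_f^*\bigr)$ and hence $[Y^{(e)},[Y^{(e)},Z^{(f)}]]=\epsilon_{ef}^2\,Z^{(f)}$, where $\epsilon_{ee}=2$, $\epsilon_{ef}\in\{\pm1\}$ when $e\neq f$ share a vertex, and $\epsilon_{ef}=0$ otherwise (the last case because $\sigma_3^{(e)}$ commutes with $\sigma_\pm^{(f)}$ for vertex–disjoint $e,f$; the value $\pm1$ in the remaining case uses that, by the no–loops/no–double–edges assumptions, two distinct edges share exactly one vertex). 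The diagonal part of $H_D$ commutes with $Y^{(e)}$, so this yields
\[
  C_e:=[Y^{(e)},[Y^{(e)},H_D]]=4\,\omega_{I,e}Z^{(e)}+\sum_{\substack{f\in E_+\\ f\neq e,\ f\sim e}}\omega_{I,f}Z^{(f)},
\]
where $f\sim e$ means that $e$ and $f$ share a vertex.

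Step 2 (inversion along the line graph, and conclusion). Writing $w_f:=\omega_{I,f}Z^{(f)}$, Step 1 reads $C=(4\,\mathrm{Id}+A_L)\,w$, where $A_L$ is the adjacency matrix of the line graph of $\Gamma$ (index set $E_+$, with $e,f$ adjacent iff they share a vertex — a genuine $0/1$–matrix, since $\Gamma$ has no loops and no double edges). If $B$ is the vertex–edge incidence matrix of $\Gamma$, then $B^{\mathrm T}B=2\,\mathrm{Id}+A_L\succeq0$, so $A_L\succeq-2\,\mathrm{Id}$ and $4\,\mathrm{Id}+A_L=2\,\mathrm{Id}+B^{\mathrm T}B\succeq2\,\mathrm{Id}$ is invertible. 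Hence $w=(4\,\mathrm{Id}+A_L)^{-1}C$, i.e. each $\omega_{I,e}Z^{(e)}$ is a real linear combination of the $C_f$, $f\in E_+$. Now apply Lemma \ref{lem:6} twice — first to the pair $Y^{(e)},H_D$, then to the pair $Y^{(e)},-i[Y^{(e)},H_D]$ — to get $\exp(itC_e)\in\mathcal{G}$ for all $t$; since each $C_f\in\PAext(\Gamma)$ is formally selfadjoint, so is any real linear combination of them (essentially selfadjoint on $D_\Gamma$ by Prop. \ref{prop:3}), so repeated use of Lemma \ref{lem:3} gives $\exp(it\,\omega_{I,e}Z^{(e)})\in\mathcal{G}$ and therefore, since $\omega_{I,e}\neq0$, $\exp(itZ^{(e)})\in\mathcal{G}$ for all $t\in\Bbb{R}$. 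This holds for every $e\in E_+$, which proves the inclusion.

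The main obstacle is Step 2: a single double commutator $[Y^{(e)},[Y^{(e)},H_D]]$ does not return $Z^{(e)}$ alone but mixes in every neighbouring interaction term $Z^{(f)}$, so one must disentangle this coupling — and the point that makes this possible for an arbitrary admissible graph is the classical fact that the adjacency spectrum of a line graph lies in $[-2,\infty)$ (equivalently, positivity of the incidence Gram matrix), which forces $4\,\mathrm{Id}+A_L$ to be invertible. Everything else is the routine verification of the commutator identities and the bookkeeping with Lemmas \ref{lem:6}, \ref{lem:3} and Prop. \ref{prop:3}.
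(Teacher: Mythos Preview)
Your argument is correct, but it takes a genuinely different route from the paper. The paper first invokes Lemma~\ref{lem:7} to replace the generators $X^{(e)},Y^{(e)}$ by all $K\otimes\Bbb{1}^C$ with $K\in\mathfrak{su}(\mathcal{H}^A)$, and then works with the \emph{vertex}-indexed diagonal operators $K_v=\kb{v}-|V(\Gamma)|^{-1}\Bbb{1}^A$. With these, a double commutator $[K_w\otimes\Bbb{1}^C,[K_v\otimes\Bbb{1}^C,H_D]]$ for $v=i(e)$, $w=t(e)$ isolates $-\omega_{I,e}Z^{(e)}$ \emph{directly}, with no coupling to neighbouring edges; no linear system has to be solved. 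You instead stay with the \emph{edge}-indexed $Y^{(e)}$, accept the coupling $C_e=4\omega_{I,e}Z^{(e)}+\sum_{f\sim e}\omega_{I,f}Z^{(f)}$, and then invert the map $w\mapsto(4\,\mathrm{Id}+A_L)w$ using the classical bound $A_L\succeq-2\,\mathrm{Id}$ for line graphs. Both approaches are valid; the paper's is shorter and avoids any spectral input, while yours is self-contained in the sense that it never leaves the originally given generators $Y^{(e)}$ and uses only an elementary fact from algebraic graph theory. It is worth noting that the $K_v$ are themselves real linear combinations of the $Y^{(e)}$ (by connectedness of $\Gamma$), so the paper's shortcut is in fact available without the full strength of Lemma~\ref{lem:7}; your line-graph inversion is thus an interesting but avoidable detour.
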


\begin{proof}
  According to Lemma \ref{lem:7} we have
  \begin{equation}
    \mathcal{G}(H_D, X^{(e)}, Y^{(e)}; e \in E_+) = \mathcal{G}(H_D, K \otimes \Bbb{1}^C; K \in \mathfrak{su}(\mathcal{H}^A)).
  \end{equation}
  Furthermore, the tensor product $K \otimes \Bbb{1}^C$ is an element of the extended path-algebra $\PAext(\Gamma)$, provided $i K \in
  \mathfrak{su}(\mathcal{H}^A)$ is diagonal in the canonical basis $\ket{v}$, $v \in V(\Gamma)$. Since $H_D \in \PAext(\Gamma)$ holds as
  well, the one-parameter subgroups $\exp(tQ)$ generated by real linear combinations $Q$ of (repeated) commutators of $iH_D$ and diagonal $iK \otimes
  \Bbb{1}^C$ are subgroups of $\mathcal{G}(H_D, K \otimes \Bbb{1}^C; K \in \mathfrak{su}(\mathcal{H}^A))$ and therefore also of $\mathcal{G}(H_D,
  X^{(e)}, Y^{(e)}; e \in E_+)$. 

  Now consider $K_v = \kb{v} - |V(\Gamma)|^{-1} \Bbb{1}^A$. Obviously $K_v$ is trace-free, selfadjoint and diagonal. Hence it satisfies the requirements
  of the last paragraph. Commutators with $i K_v \otimes \Bbb{1}^C$ equals commutators with $i \kb{v} \otimes \Bbb{1}^C$. Therefore we get for the
  operators $A(e)$ from Eq. (\ref{eq:16}) with $e \in E_+$ and $\psi \in D_\Gamma$
  \begin{equation}
    [ K_v \otimes \Bbb{1}^C, A(e)] \psi = [ \kb{v} \otimes \Bbb{1}^C, \KB{t(e)}{i(e)} \otimes a_e] \psi = \delta_{v,t(e)} A_e \psi -
    \delta_{v,i(e)} A_e \psi,   
  \end{equation}
  where $\delta_{v,w}=1$ for $v, w \in V(\Gamma)$ iff $v=w$ holds. For $e \in E_-$ we get similarly:
  \begin{equation}
    [ K_v \otimes \Bbb{1}^C, A(e)] \psi = [ \kb{v} \otimes \Bbb{1}^C, \KB{t(e)}{i(e)} \otimes a_{\overline{e}}^*] \psi = \delta_{v,t(e)} A_e \psi -
    \delta_{v,i(e)} A_e \psi. 
  \end{equation}
  Now recall from Eq. (\ref{eq:19}) that we can write $H_I$ as a sum of all $A_e$. In other words we get for $H_D$ 
  \begin{equation}
    H_D = \sum_{f \in E_+} \left[\omega_{A,f} \sigma_3^{(f)} \otimes \Bbb{1}^C  + \omega_{C,f} \Bbb{1}^A \otimes a^*a\right] + \sum_{f \in E(\Gamma)}
    \omega_{I,f} A_f  
  \end{equation}
  and therefore with $\psi \in D_\Gamma$
  \begin{align}
    [K_v \otimes \Bbb{1}^C, H_D] \psi &= [K_v \otimes \Bbb{1}^C, H_I] \psi =  \omega_{I,f} \sum_{t(f)=v} A_f \psi - \omega_{I,f} \sum_{i(f)=v} A_f \psi
    \\ &= \sum_{t(f)=v} \omega_{I,f} (A_f - A_{\overline{f}})\psi .
  \end{align}
  Now consider $e \in E_+$ with $i(e)=v$ and $t(e) = w$. Another commutator leads to
  \begin{align}
    [K_w &\otimes \Bbb{1}^C, [K_v \otimes \Bbb{1}^C, H_D]] \psi  \\ 
    &=  \sum_{t(f)=v} \omega_{I,f}\bigl([K_W, A_f] \psi - [K_W, A_{\overline{f}}] \psi\bigr)  \\
    &= \sum_{t(f) = v} \omega_{I,f} \left(\delta_{w,t(f)} A_f \psi - \delta_{w,i(f)} A_f \psi\right) - \sum_{t(f) = v} \omega_{I,f}
    \left(\delta_{w,t(\overline{f})} A_{\overline{f}} \psi - \delta_{w,i(\overline{f})} A_{\overline{f}} \psi\right) \\
    &= - \omega_{I,e} (A_e + A_{\overline{e}}) = - \omega_{I,e} Z^{(e)}.
  \end{align}
  With the reasoning from the last paragraph, the statement follows.
\end{proof}

The last result shows that it is sufficient to show that $\Gamma(\Bbb{1}, X^{(e)}, Y^{(e)}, Z^{(e)}; e \in E_+) = \mathrm{U}(\mathcal{H})$  holds. We
will do this by induction on the set of edges. The first step is to look at the dynamical group which is generated by the operators $X^{(e)}, Y^{(e)},
Z^{(e)}$ for one given edge $e$. To formulate the result we need some additional notation. This includes in particular the Hilbert spaces
$\mathcal{H}_e$ and $\hat{\mathcal{H}}_e$ given by
\begin{equation} \label{eq:11}
  \mathcal{H}_e = \mathcal{H}^A \otimes \mathcal{H}^C_e,\quad \hat{\mathcal{H}}_e = \bigotimes_{f \neq e} \mathcal{H}^C_f,\quad \mathcal{H} \cong
  \mathcal{H}_e \otimes \hat{\mathcal{H}}_e.
\end{equation}
The Hilbert space $\mathcal{H}_e$ contains the subspace
\begin{equation}
  \mathcal{K}_e = \mathcal{H}^A_e \otimes \mathcal{H}^C_e \subset \mathcal{H}_e ,\quad \mathcal{H}^A_e = \SP \{\ket{i(e)}, \ket{t(e)}\} \subset
  \mathcal{H}^A. 
\end{equation}
A unitary $U$ on $\mathcal{K}_e$ can be extended to $\mathcal{H}_e$ by
\begin{equation}
  \tilde{U} \psi = U \psi \quad \text{if $\psi \in \mathcal{K}_e$}\quad \tilde{U}\psi = \psi \quad\text{if $\psi \in \mathcal{K}_e^{\perp}$},
\end{equation}
where $\mathcal{K}_e^{\perp}$ denotes the orthocomplement of $\mathcal{K}_e$ in $\mathcal{H}_e$. In a second step we can extend $\tilde{U}$ to
$\mathcal{H}$ by adjoining a unit operator on $\hat{\mathcal{H}}_e$:
\begin{equation} \label{eq:9}
  \Lambda_e(U) = \tilde{U} \otimes \Bbb{1} = \tilde{U} \otimes \bigotimes_{f\neq e} \Bbb{1}_f 
\end{equation}
with the unit operators $\Bbb{1}_e$ on $\mathcal{H}^C_e$. Using this notations we can reformulate Thm. 3.2. from \cite{KZSH} as follows:

\begin{lem} \label{lem:9}
  For a fixed edge $e \in E_+$ we have
  \begin{equation}
    \mathcal{G}(\Bbb{1}, X^{(e)}, Y^{(e)}, Z^{(e)}) = \{ \Lambda_e(U)\,|\, U \in \mathrm{U}(\mathcal{K}_e)\}.
  \end{equation}
\end{lem}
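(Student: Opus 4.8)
The plan is to reduce the statement, by an essentially kinematical argument, to the two-level/one-mode controllability result of \cite{KZSH}. First I would record that $X^{(e)}=\sigma_1^{(e)}\otimes\Bbb{1}^C$, $Y^{(e)}=\sigma_3^{(e)}\otimes\Bbb{1}^C$ and $Z^{(e)}=\sigma_+^{(e)}\otimes a_e+\sigma_-^{(e)}\otimes a_e^*$ are ``supported on $\mathcal{K}_e$'': under the identifications $\mathcal{H}\cong\mathcal{H}_e\otimes\hat{\mathcal{H}}_e$ and $\mathcal{H}_e=\mathcal{K}_e\oplus\mathcal{K}_e^{\perp}$, each of them acts as a fixed operator on $\mathcal{K}_e$, as $0$ on $\mathcal{K}_e^{\perp}$ (because $\sigma_\alpha^{(e)}$ annihilates $(\mathcal{H}^A_e)^{\perp}$), and trivially on $\hat{\mathcal{H}}_e$; the identity generator $\Bbb{1}$, needed only to supply phases, is taken in the form matching the extension map $\Lambda_e$ of Eq.\ (\ref{eq:9}). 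Consequently each one-parameter group $\exp(it\,\cdot\,)$ of a generator lies in $\mathcal{N}_e:=\{\Lambda_e(U)\mid U\in\mathrm{U}(\mathcal{K}_e)\}$.

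The second step is to check that $\mathcal{N}_e$ is a strongly closed subgroup of $\mathrm{U}(\mathcal{H})$ and that $\Lambda_e:\mathrm{U}(\mathcal{K}_e)\to\mathcal{N}_e$ is an isomorphism of topological groups for the respective strong topologies: injectivity is clear, and strong convergence is transported in both directions since finite linear combinations of vectors $\psi_e\otimes\hat\psi$ with $\psi_e\in\mathcal{H}_e$, $\hat\psi\in\hat{\mathcal{H}}_e$ are dense in $\mathcal{H}$. Given this, $\mathcal{G}(\Bbb{1},X^{(e)},Y^{(e)},Z^{(e)})\subseteq\mathcal{N}_e$, and its preimage $\Lambda_e^{-1}\bigl(\mathcal{G}(\Bbb{1},X^{(e)},Y^{(e)},Z^{(e)})\bigr)$ is a strongly closed subgroup of $\mathrm{U}(\mathcal{K}_e)$ containing the flows generated by the restrictions of $\Bbb{1},X^{(e)},Y^{(e)},Z^{(e)}$ to $\mathcal{K}_e$. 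It therefore coincides with the dynamical group $\mathcal{G}_e$ those restrictions generate on $\mathcal{K}_e$, so that $\mathcal{G}(\Bbb{1},X^{(e)},Y^{(e)},Z^{(e)})=\Lambda_e(\mathcal{G}_e)$.

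It remains to identify $\mathcal{G}_e$. Transporting along $\varphi_e:\Bbb{C}^2\to\mathcal{H}^A_e$ one has $\mathcal{K}_e\cong\Bbb{C}^2\otimes\mathrm{L}^2(\Bbb{R})$, under which the restricted generators become $\Bbb{1}$, $\sigma_1\otimes\Bbb{1}$, $\sigma_3\otimes\Bbb{1}$ and the Jaynes-Cummings interaction $\sigma_+\otimes a+\sigma_-\otimes a^*$ --- precisely the control set for a two-level atom coupled to a single cavity mode treated in \cite{KZSH}. Theorem 3.2 of that work states that the associated dynamical group is all of $\mathrm{U}\bigl(\Bbb{C}^2\otimes\mathrm{L}^2(\Bbb{R})\bigr)$, i.e.\ $\mathcal{G}_e=\mathrm{U}(\mathcal{K}_e)$; combined with the previous step this gives $\mathcal{G}(\Bbb{1},X^{(e)},Y^{(e)},Z^{(e)})=\mathcal{N}_e$ as claimed.

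The hard part is not any single computation but the topological bookkeeping of the middle step: verifying that ``smallest strongly closed subgroup'' is respected by the homeomorphism $\Lambda_e$, so that the dynamical groups on $\mathcal{K}_e$ and on $\mathcal{H}$ genuinely correspond, and that the restricted generator set matches --- up to operators obtainable as repeated commutators, such as the free mode term $\Bbb{1}\otimes a^*a$ --- the precise hypotheses under which Theorem 3.2 of \cite{KZSH} is proven. Once this reduction is set up cleanly, the substantive content, the controllability of the Jaynes-Cummings system itself, is imported wholesale from \cite{KZSH}.
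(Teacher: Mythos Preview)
Your approach is the paper's: the lemma is identified with Thm.~3.2 of \cite{KZSH}, transported through the embedding $\Lambda_e$; the paper's own proof is a one-sentence citation of that theorem, and you have simply unpacked the kinematical reduction that makes the citation legitimate. Your flagged caveat about the identity generator is apt---for $|V(\Gamma)|\geq 3$ the global phase $e^{it}\Bbb{1}_{\mathcal{H}}$ is not literally of the form $\Lambda_e(U)$---but this is a wrinkle in how the lemma is stated rather than a gap in your argument, and it is harmless for the way the lemma is used downstream in Lemma~\ref{lem:12}.
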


\begin{proof}
  This follows immediately from Thm. 3.2. of \cite{KZSH}.
\end{proof}

Now we can combine this result with Lemma \ref{lem:7} to include more generators

\begin{lem} \label{lem:12}
  For a fixed edge $e \in E_+$ we have
  \begin{equation}
    \mathcal{G}(\Bbb{1}, Z^{(e)}, X^{(f)}, Y^{(f)}; f \in E_+) = \{ U \otimes \Bbb{1} \, | \, U \in \mathrm{U}(\mathcal{H}_e)\}
  \end{equation}
  where $\Bbb{1}$ denotes the unit operator on $\hat{\mathcal{H}}_e$; cf. Eq. (\ref{eq:9}).
\end{lem}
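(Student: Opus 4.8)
The plan is to bootstrap from the ``small'' atom-edge block $\mathcal{K}_e = \mathcal{H}^A_e \otimes \mathcal{H}^C_e$, on which Lemma \ref{lem:9} already gives full control, up to the full block $\mathcal{H}_e = \mathcal{H}^A \otimes \mathcal{H}^C_e$: I use the complete atom control provided by Lemma \ref{lem:7} to conjugate the two-dimensional atom subspace $\mathcal{H}^A_e$ around inside $\mathcal{H}^A$, and then finish with the finite-dimensional exhaustion Lemma \ref{lem:8}.

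Write $\mathcal{G}$ for the dynamical group in the statement. First I would note that $\mathcal{G}$ contains $\{\Lambda_e(U)\,|\,U\in\mathrm{U}(\mathcal{K}_e)\}$ by Lemma \ref{lem:9}, and, by Lemma \ref{lem:7} together with the phase generator $\Bbb{1}$, also $\mathrm{U}(\mathcal{H}^A)\otimes\Bbb{1}^C$ (global phases promote $\mathrm{SU}(\mathcal{H}^A)$ to $\mathrm{U}(\mathcal{H}^A)$). Since $\mathcal{G}$ is a group it is stable under conjugation, and conjugating $\Lambda_e(U) = \tilde U\otimes\Bbb{1}^{\hat{\mathcal{H}}_e}$ by $W\otimes\Bbb{1}^C = (W\otimes\Bbb{1}^{\mathcal{H}^C_e})\otimes\Bbb{1}^{\hat{\mathcal{H}}_e}$, $W\in\mathrm{U}(\mathcal{H}^A)$, leaves the $\hat{\mathcal{H}}_e$-factor untouched and turns $\tilde U$ into the unitary on $\mathcal{H}_e$ that acts as $(W\otimes\Bbb{1}^{\mathcal{H}^C_e})\,U\,(W\otimes\Bbb{1}^{\mathcal{H}^C_e})^*$ on $(W\mathcal{H}^A_e)\otimes\mathcal{H}^C_e$ and trivially on the orthogonal complement in $\mathcal{H}_e$. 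Choosing $W$ so that $W\mathcal{H}^A_e$ is an arbitrary two-dimensional subspace $L\subseteq\mathcal{H}^A$ (possible since $d=|V(\Gamma)|\geq 2$ and $\mathrm{U}(\mathcal{H}^A)$ acts transitively on the set of two-dimensional subspaces of $\mathcal{H}^A$) and letting $U$ range over $\mathrm{U}(\mathcal{K}_e)$, I conclude: for every two-dimensional $L\subseteq\mathcal{H}^A$, $\mathcal{G}$ contains every unitary of the form $V\otimes\Bbb{1}^{\hat{\mathcal{H}}_e}$ with $V\in\mathrm{U}(\mathcal{H}_e)$ supported on $L\otimes\mathcal{H}^C_e$ (i.e.\ acting trivially on its orthocomplement in $\mathcal{H}_e$).

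It remains to see that these generate $\mathrm{U}(\mathcal{H}_e)\otimes\Bbb{1}^{\hat{\mathcal{H}}_e}$. I would reduce this to a finite-dimensional statement via Lemma \ref{lem:8} applied with $\mathcal{K}=\mathcal{H}_e$ and $\mathcal{K}_n = \mathcal{H}^A\otimes P_n\mathcal{H}^C_e$, where $P_n$ projects onto the span of the first $n$ number states; these are finite-dimensional, increasing, and exhaust $\mathcal{H}_e$. For fixed $n$, $\mathcal{G}$ contains all unitaries supported on $L\otimes P_n\mathcal{H}^C_e$ for each two-dimensional $L$, and a direct computation with matrix units in the product basis shows that the real linear span of the Lie algebras $\mathfrak{u}(L\otimes P_n\mathcal{H}^C_e)$ (over all two-dimensional $L$) is already all of $\mathfrak{u}(\mathcal{K}_n)$; hence the group these subgroups generate inside the finite-dimensional $\mathrm{U}(\mathcal{K}_n)$ is $\mathrm{U}(\mathcal{K}_n)$ itself, so $\mathcal{U}_n\otimes\Bbb{1}^{\hat{\mathcal{H}}_e}\subseteq\mathcal{G}$ with $\mathcal{U}_n$ as in Lemma \ref{lem:8}. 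Lemma \ref{lem:8} then says $\bigcup_n\mathcal{U}_n$ is strongly dense in $\mathrm{U}(\mathcal{H}_e)$, and since $U\mapsto U\otimes\Bbb{1}^{\hat{\mathcal{H}}_e}$ is strongly continuous and $\mathcal{G}$ strongly closed, $\mathrm{U}(\mathcal{H}_e)\otimes\Bbb{1}^{\hat{\mathcal{H}}_e}\subseteq\mathcal{G}$. The opposite inclusion is immediate, since each of $\Bbb{1}$, $Z^{(e)}$, $X^{(f)}$, $Y^{(f)}$ is of the form (operator on $\mathcal{H}_e$)$\,\otimes\Bbb{1}^{\hat{\mathcal{H}}_e}$ and $\mathrm{U}(\mathcal{H}_e)\otimes\Bbb{1}^{\hat{\mathcal{H}}_e}$ is a strongly closed subgroup of $\mathrm{U}(\mathcal{H})$.

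The main obstacle is really just careful bookkeeping rather than any single hard estimate: keeping the auxiliary factor $\hat{\mathcal{H}}_e$ inert through the conjugation step, making sure the conjugates sweep out $\mathrm{U}(L\otimes\mathcal{H}^C_e)$ for \emph{every} two-dimensional $L$ and not merely $L = \mathcal{H}^A_e$, and legitimately passing from the finite-dimensional/exhaustion picture to $\mathrm{U}(\mathcal{H}_e)\otimes\Bbb{1}$ using strong continuity of $U\mapsto U\otimes\Bbb{1}$ together with strong closedness of $\mathcal{G}$. All the substantive work is already done in Lemmas \ref{lem:7}, \ref{lem:8} and \ref{lem:9}; the content here is in assembling these pieces cleanly.
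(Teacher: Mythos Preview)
Your argument is correct and follows essentially the same architecture as the paper: invoke Lemmas~\ref{lem:7} and~\ref{lem:9} to obtain $\mathrm{U}(\mathcal{H}^A)\otimes\Bbb{1}$ and $\{\Lambda_e(U)\}$, truncate the cavity factor with the number-state projections $P_n$, reduce to a finite-dimensional Lie-algebra generation problem on $\mathcal{H}^A\otimes P_n\mathcal{H}^C_e$, and conclude with Lemma~\ref{lem:8}. The only difference is in how the finite-dimensional step is carried out: the paper computes the commutators $[\,\KB{v}{w}\otimes\Bbb{1},\,\KB{x}{y}\otimes\KB{n;e}{m;e}\,]$ with $x,y\in\{i(e),t(e)\}$ directly to see that $\mathfrak{gl}(\mathcal{H}^A)\otimes\Bbb{1}$ and $\mathfrak{gl}(\mathcal{H}^A_e\otimes P_N\mathcal{H}^C_e)$ generate $\mathfrak{gl}(\mathcal{H}^A\otimes P_N\mathcal{H}^C_e)$, whereas you first conjugate $\Lambda_e(U)$ by atom unitaries to obtain $\mathrm{U}(L\otimes\mathcal{H}^C_e)$ for every two-dimensional $L\subset\mathcal{H}^A$ and then observe that the $\mathfrak{u}(L\otimes P_n\mathcal{H}^C_e)$ already linearly span $\mathfrak{u}(\mathcal{K}_n)$. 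Your route is slightly more conceptual and avoids the explicit commutator bookkeeping; the paper's route is a touch more self-contained since it never needs transitivity on $2$-planes. Both are equally valid and of the same length.
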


\begin{proof}
  According to Lemmata \ref{lem:7} and \ref{lem:9} we have to show that the smallest, strongly closed subgroup of $\mathrm{U}(\mathcal{H}_e)  \otimes
  \Bbb{1}$ containing $\mathrm{U}(\mathcal{H}^A) \otimes \Bbb{1}$ and $\mathcal{U} = \{ \Lambda_e(U)\,|\, U \in \mathrm{U}(\mathcal{K}_e)\}$ is
  $\mathrm{U}(\mathcal{H}_e)  \otimes \Bbb{1}$ itself. We will do this with the same strategy as in the proof of Lemma \ref{lem:10}: We break
  the task up into a series of finite dimensional problems and then we apply Lemma \ref{lem:8}.

  Hence consider for each $N \in \Bbb{N}$ the projections $P_N = \sum_{n=0}^N \kb{n;e}$ from $\mathcal{H}^C_e=\mathrm{L}^2(\Bbb{R})$ onto
  $P_N \mathcal{H}^C_e = \SP\{\ket{n;e}\,|\, n < N\}$. Here $\ket{n,e}$ denotes the number basis (i.e. Hermite functions); cf.\ the notations
  introduced in Sect. \ref{sec:description-problem}. Similarly we define 
  \begin{equation}
    Q_N = \sum_{v =i(e),t(e)} \sum_{n=0}^N \kb{v} \otimes \kb{n;e}
  \end{equation}
  which is the projection onto $\mathcal{H}^A_e \otimes P_N\mathcal{H}^C_e \subset \mathcal{K}_e$. Now we can define
  \begin{equation}
    \mathcal{U}_N = \{ \Lambda_e(U)\,|\, U \in \mathrm{U}(\mathcal{K}_e)\ \text{with:}\ Q_N \psi = 0 \Rightarrow U \psi = \psi \ \forall \psi
    \in \mathcal{K}_e\}. 
  \end{equation}
  The $\mathcal{U}_N$ are (as well as $\mathrm{U}(\mathcal{H}^A)$) finite dimensional, hence we can look at the complex Liealgebras
  \begin{equation}
    \mathfrak{gl}(\mathcal{H}^A) \otimes \Bbb{1} \subset \mathfrak{gl}(\mathcal{H}^A \otimes P_N
    \mathcal{H}^C_e)\quad \text{and}\quad \mathfrak{gl}(Q_N \mathcal{K}_e) \subset \mathfrak{gl}(\mathcal{H}^A \otimes P_N
    \mathcal{H}^C_e) 
  \end{equation}
  and show that both together generate $\mathfrak{gl}(\mathcal{H}^A \otimes P_N \mathcal{H}^C_e)$; cf. the proof of Lemma \ref{lem:10}. With the bases
  $\ket{v} \in \mathcal{H}^A$, $v \in V(\Gamma)$ and $\ket{n;e} \in P_N \mathcal{H}^C_e$, $n=0,\dots,N$ we have to look at commutators
  \begin{equation}
    \bigl[ \KB{v}{w} \otimes \Bbb{1}, \KB{x}{y} \otimes \KB{n;e}{m;e} \bigr] = \delta_{wx} \KB{v}{y} \otimes \KB{n;e}{m;e} + \delta_{vy} \KB{x}{w} \otimes
    \KB{n;e}{m;e},
  \end{equation}
  where $x,y \in \{i(e),t(e)\}$. It is easy to see that we can express all operators $\KB{v}{w}\otimes\KB{n;e}{m;e}$ in terms of such commutators and all 
  elements in $\mathfrak{gl}(\mathcal{H}^A \otimes P_N \mathcal{H}^C_e)$ in terms of linear combinations of them. Hence, with the reasoning from Lemma
  \ref{lem:10} we see that $\mathrm{U}(\mathcal{H}^A) \otimes \Bbb{1}$ and $\mathcal{U}_N$ generate $\mathrm{U}(\mathcal{H}^A \otimes P_N
  \mathcal{H}_e)$. Since the $P_N$ form a strictly increasing sequence of orthonormal projections converging strongly to $\Bbb{1}$, the statement
  follows from Lemma \ref{lem:8}.
\end{proof}

This lemma finally allows us to analyze the structure of the dynamical group $\Gamma(X^{(e)}, Y^{(e)}, Z^{(e)}; e \in E_+)$:

\begin{prop} \label{prop:1}
  $\Gamma(X^{(e)}, Y^{(e)}, Z^{(e)}; e \in E_+) = \mathrm{U}(\mathcal{H})$
\end{prop}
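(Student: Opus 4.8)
The plan is to obtain Proposition \ref{prop:1} from Lemma \ref{lem:12} and the overlapping–tensor–product statement Lemma \ref{lem:10} by an induction that adds one photonic mode at a time, with the atom space $\mathcal{H}^A$ playing the role of the ``shared'' factor in each application of Lemma \ref{lem:10}. Concretely, I would fix an enumeration $E_+ = \{e_1,\dots,e_m\}$ (with $m \geq 1$ since $\Gamma$ is connected and non-trivial), write $\mathcal{H}^C_{e_1\cdots e_k} = \mathcal{H}^C_{e_1}\otimes\cdots\otimes\mathcal{H}^C_{e_k}$ so that $\mathcal{H} \cong \mathcal{H}^A \otimes \mathcal{H}^C_{e_1\cdots e_k} \otimes \mathcal{H}^C_{e_{k+1}}\otimes\cdots\otimes\mathcal{H}^C_{e_m}$, and prove by induction on $k \in \{1,\dots,m\}$ the intermediate claim
\begin{equation}
  \mathcal{G}\bigl(\Bbb{1},\, X^{(f)}, Y^{(f)}\ (f\in E_+),\, Z^{(e_1)},\dots,Z^{(e_k)}\bigr) \;=\; \mathrm{U}\bigl(\mathcal{H}^A \otimes \mathcal{H}^C_{e_1\cdots e_k}\bigr)\otimes \Bbb{1},
\end{equation}
where $\Bbb{1}$ denotes the identity on $\mathcal{H}^C_{e_{k+1}}\otimes\cdots\otimes\mathcal{H}^C_{e_m}$. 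For $k=m$ this is exactly $\mathcal{G}(\Bbb{1}, X^{(f)}, Y^{(f)}, Z^{(f)}; f\in E_+) = \mathrm{U}(\mathcal{H})$, which is the form actually used in Lemma \ref{lem:4} (adjoining the generator $\Bbb{1}$ here is harmless and mirrors its role in Lemma \ref{lem:12}; cf.\ the discussion around Eq.\ (\ref{eq:18})).

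The base case $k=1$ is Lemma \ref{lem:12} applied to $e=e_1$. For the step $k\to k+1$ I would apply Lemma \ref{lem:10} with $\mathcal{K}_1 = \mathcal{H}^C_{e_1\cdots e_k}$, $\mathcal{K}_2 = \mathcal{H}^A$ and $\mathcal{K}_3 = \mathcal{H}^C_{e_{k+1}}$, after suppressing the idle tensor factors $\mathcal{H}^C_{e_{k+2}}\otimes\cdots\otimes\mathcal{H}^C_{e_m}$ (all generators below act as the identity there, so the dynamical group decomposes as a group on $\mathcal{H}^A\otimes\mathcal{H}^C_{e_1\cdots e_{k+1}}$ tensored with $\Bbb{1}$; this is the same reduction carried out inside the proof of Lemma \ref{lem:10}, via Lemma \ref{lem:8} for the infinite–dimensional factors). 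Note $\dim\mathcal{K}_2 = |V(\Gamma)| \geq 2$ as required. As operators on $\mathcal{K}_1\otimes\mathcal{K}_2 \cong \mathcal{H}^A\otimes\mathcal{H}^C_{e_1\cdots e_k}$ I take $X^{(f)}, Y^{(f)}$ ($f\in E_+$) together with $Z^{(e_1)},\dots,Z^{(e_k)}$; the induction hypothesis says these generate $\mathrm{U}(\mathcal{K}_1\otimes\mathcal{K}_2)$. As operators on $\mathcal{K}_2\otimes\mathcal{K}_3 = \mathcal{H}^A\otimes\mathcal{H}^C_{e_{k+1}}$ I take $X^{(f)}, Y^{(f)}$ ($f\in E_+$) together with $Z^{(e_{k+1})}$; Lemma \ref{lem:12} (applied to $e=e_{k+1}$) says these generate $\mathrm{U}(\mathcal{K}_2\otimes\mathcal{K}_3)$. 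Lemma \ref{lem:10} then yields controllability on $\mathcal{K}_1\otimes\mathcal{K}_2\otimes\mathcal{K}_3 = \mathcal{H}^A\otimes\mathcal{H}^C_{e_1\cdots e_{k+1}}$; the combined generating set it produces is precisely $\{\Bbb{1}, X^{(f)}, Y^{(f)}\ (f\in E_+),\, Z^{(e_1)},\dots,Z^{(e_{k+1})}\}$, the atom controls appearing in both lists but yielding the same operator, which is irrelevant. Re–adjoining the idle factors finishes the inductive step.

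The bulk of the work is already done by Lemmata \ref{lem:12} and \ref{lem:10}, so the argument is short; the point that needs care is the bookkeeping of tensor factors. The two things easy to get wrong — and which I would make explicit — are (i) that $\mathcal{H}^A$ must be taken as the \emph{common} factor $\mathcal{K}_2$ in every application of Lemma \ref{lem:10}, and (ii) that the atom controls $X^{(f)}, Y^{(f)}$ have to be listed on \emph{both} sides (the $H$–list and the $K$–list): without them neither $\mathcal{G}(\Bbb{1}, H_j)$ nor $\mathcal{G}(\Bbb{1}, K_i)$ would exhaust the respective full unitary group (recall from Lemma \ref{lem:9} that $\mathcal{G}(\Bbb{1}, Z^{(e)})$ alone is only $\{\Lambda_e(U)\}$). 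No further selfadjointness subtleties enter: $Z^{(e)} = A_e + A_{\overline e}$ is a formally selfadjoint element of $\PAext(\Gamma)$ and hence essentially selfadjoint on $D_\Gamma$ by Proposition \ref{prop:3}, while $X^{(f)}, Y^{(f)}$ are bounded, so all the dynamical groups in play are well defined.
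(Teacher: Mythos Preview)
Your proposal is correct and follows essentially the same route as the paper: an induction over the edges of $E_+$, using Lemma~\ref{lem:12} as the base case and Lemma~\ref{lem:10} for the inductive step with $\mathcal{H}^A$ as the shared tensor factor. Your formulation is in fact slightly cleaner than the paper's, since you explicitly carry all atom controls $X^{(f)}, Y^{(f)}$ ($f\in E_+$) through the induction hypothesis---exactly as required by Lemma~\ref{lem:12}---whereas the paper's Eq.~(\ref{eq:12}) writes only $e\in\Delta$, which is a minor notational looseness.
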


\begin{proof}
  Consider a nonempty set $\Delta \subset E_+$ and define
  \begin{equation}
    \mathcal{H}_\Delta = \mathcal{H}^A \otimes \bigotimes_{e \in \Delta} \mathcal{H}^C_e,\quad \hat{\mathcal{H}}_\Delta = \bigotimes_{e \not\in \Delta}
    \mathcal{H}^C_e. 
  \end{equation}
  If $\Delta = E_+$ we have $\mathcal{H}_\Delta = \mathcal{H}$ and for notational consistency we define in addition $\hat{\mathcal{H}}_{E_+}
  =\Bbb{C}$. Note that this is a natural extension of the notation from Eq. (\ref{eq:11}) since we have $\mathcal{H}_e = \mathcal{H}_{\{e\}}$. Now 
  assume that $\Delta$ satisfies
  \begin{equation} \label{eq:12}
    \mathcal{G}(\Bbb{1}, X^{(e)},Y^{(e)}, Z^{(e)}; e \in \Delta) = \mathrm{U}(\mathcal{H}_\Delta) \otimes \Bbb{1}.
  \end{equation}
  If this holds for $\Delta = E_+$ the proposition is proved. Hence assume $\Delta \neq E_+$ with $f \not\in
  \Delta$. According to Lemma \ref{lem:12} we have $\mathcal{G}(\Bbb{1}, X^{(f)}, Y^{(f)}, Z^{(f)}) = \mathrm{U}(\mathcal{H}_{\{f\}}) \otimes
  \Bbb{1}$. Hence by Eq. (\ref{eq:12}) the groups $\mathcal{G}(\Bbb{1}, X^{(e)},Y^{(e)}, Z^{(e)}; e \in \Delta)$ and $\mathcal{G}(\Bbb{1}, X^{(f)},
  Y^{(f)}, Z^{(f)})$ satisfy the assumptions of Lemma \ref{lem:10}, i.e. they ``overlap'' on the tensor factor $\mathcal{H}^A$. Applying Lemma
  \ref{lem:10} we therefore find that Eq. (\ref{eq:12}) holds with $\Delta$ replaced by $\Delta \cup \{ f \}$. Now we use Lemma \ref{lem:12} again to
  see that $\Delta = \{e\}$ with a fixed but arbitrary $e \in E_+$ satisfies Eq. (\ref{eq:12}), and apply the previous induction argument until $\Delta
  = E_+$ is reached. This concludes the proof. 
\end{proof}

With this proposition at hand Theorem \ref{thm:2} follows from Lemma \ref{lem:4} and Lemma \ref{lem:11}.

\section{Example 1:  Two levels}
\label{sec:example-1:-two}

Let us consider now the fully connected graph $\Gamma=K_2$ with two vertices (and one edge) representing a two-level atom interacting with one
mode. The Hilbert space of the systems becomes $\mathcal{H}=\Bbb{C}^2 \otimes \mathrm{L}(\Bbb{R})$ and the operators $X^{(e)}, Y^{(e)}, Z^{(e)}$ are
(dropping the now redundant superscript $e$): 
\begin{equation}
  X = \sigma_1 \otimes \Bbb{1},\quad Y = \sigma_3 \otimes \Bbb{1},\quad Z = \sigma_+ \otimes a + \sigma_- \otimes a^*,
\end{equation}
which leads to the drift Hamiltonian
\begin{equation}
  H_D = \omega_A \sigma_3 \otimes \Bbb{1} + \omega_C \Bbb{1} \otimes a^*a + \omega_I \left(\sigma_+ \otimes a + \sigma_- \otimes a^*\right), 
\end{equation}
i.e. drift is described by the Jaynes-Cummings Hamiltonian. Theorem \ref{thm:2} now tells us that the control problem
\begin{equation}
  i \frac{d}{dt} U_{u,v}(0,t) \psi = H_D U_{u,v}(0,t) \psi + u(t) X U_{u,v}(0,t) \psi + v(t) Y U_{u,v}(0,t) \psi
\end{equation}
with piecewise constant control functions is strongly controllable. This is closely related to a result from \cite{KZSH} where control without drift
is considered. In other words 
\begin{equation} \label{eq:21}
  i \frac{d}{dt} U_{u,v,w}(0,t) \psi = u(t) X U_{u,v,w}(0,t) \psi + v(t) Y U_{u,v,w}(0,t) + w(t) Z U_{u,v,w}(0,t)\psi
\end{equation}
is strongly controllable, too (again with piecewise constant $u,v,w$. This is equivalent to the statement $\mathcal{G}(\Bbb{1},X,Y,Z) =
\mathrm{U}(\mathcal{H})$ which we have already used within the proof of Theorem \ref{thm:2} (cf. Lemma \ref{lem:9}). Both systems are closely related,
since we can generate the generator $Z$ by linear combinations and repeated commutators of $Y$ and $H_D$; cf. Lemma \ref{lem:11}. Therefore we will
concentrate for the rest of this section on (\ref{eq:21}).

To get more insight into the way how a concrete control task has to be done, we will look at the problem of transforming an arbitrary pure state
$\kb{\psi_i}$ into an arbitrary final state $\kb{\psi_f}$, by appropriately choosing the control functions $u, v, w$. Here we have chosen $\psi_i,
\psi_f \in \mathcal{H}$ with $\|\psi_i\|=\|\psi_f\|=1$. Note that this is obviously possible since we can approximate (strongly) an arbitrary
unitary; i.e. our system is not only strongly controllable but also (approximately) pure state controllable (cf. \cite{KZSH}). Up to a large degree we
only have to review the work done in \cite{KZSH}. Therefore another task of this section is to show how this previous work fits into our current
analysis.

As a first step let us have a look at the canonical basis $\ket{b}$, $b \in \ConfP(\Gamma)$. For $\Gamma = K_2$ it takes the simple form $\ket{\nu} \otimes
\ket{n} \in \mathcal{H}$ with $\ket{\nu} \in \Bbb{C}^2$, $\nu=0,1$ the canonical basis and $\ket{n} \in \mathrm{L}^2(\Bbb{R})$, $n \in \Bbb{N}$ the
Hermite functions. We relabel the basis vectors according to  
\begin{equation}
  \ket{\mu;\nu} = \ket{\nu} \otimes \ket{\mu - \nu},\quad \mu \in \Bbb{N},\ \nu=0,1;\quad \ket{0,0} = \ket{0}\otimes \ket{0}.
\end{equation}
This relabelling is particular useful if we look at the action of the operators $A_e, A_e^+$, $e \in E_+(K_2)$ from Eq. (\ref{eq:13}). By dropping
again the redundant label $e$, we get
\begin{alignat}{2}
  A \ket{\mu;0} &= \sqrt{\mu} \ket{\mu;1} \quad &A \ket{\mu;1} &= 0\\
  A^+ \ket{\mu;0} &= 0 \quad &A^+ \ket{\mu;1} &= \sqrt{\mu} \ket{\mu;0}.
\end{alignat}
Since $\PAext(K_2)$ is generated by $A, A^+$ and all operators diagonal in the basis $\ket{\mu;\nu}$ we immediately see that the subspaces
$\mathcal{H}^{(\mu)} \subset \mathcal{H}$ given by 
\begin{equation} \label{eq:22}
  \mathcal{H}^{(\mu)} = \SP \{ \ket{\mu;0}, \ket{\mu; 1}\}\ \text{if $\mu > 0$}\quad \text{and}\quad \mathcal{H}^{(0)} = \Bbb{C} \ket{0;0},
\end{equation}
are invariant for $\PAext(K_2)$. Obviously the infinite direct sum of the $\mathcal{H}^{(\mu)}$ exhaust the whole Hilbert space
$\mathcal{H}$, i.e.
\begin{equation} \label{eq:27}
  \mathcal{H} = \bigoplus_{\mu=0}^\infty \mathcal{H}^{(\mu)},
\end{equation}
with convergence in norm. Hence, we have recovered the direct sum decomposition from Thm. \ref{thm:3}. 

Now the natural question is, whether $\PAext(K_2)$ contains all operators which are block diagonal in the decomposition (\ref{eq:22}). To answer this
question let us first define ``block diagonal'' in a rigorous way.

\begin{defi} \label{def:2}
  Consider a separable Hilbert space $\mathcal{K}$, a finite or countably infinite index set $I$, a sequence $(E^{(\mu)})_{\mu \in I}$ of
  orthonormal projections on $\mathcal{K}$ satisfying $\sum_\mu E^{(\mu)} = \Bbb{1}$ (converging strongly if $I$ is infinite) and the dense domain
  \begin{equation}
    D = \{ \psi \in \mathcal{K}\, | \, \exists K \in \Bbb{N}\, \forall \mu > K\, : \ E^{(\mu)} \psi = 0 \}.
  \end{equation}
  A (not necessarily bounded) operator $A: D \rightarrow D$ is called \emph{block diagonal} (with respect to the sequence $E^{(\mu)}$),
  if\footnote{Note that the sum in Eq. (\ref{eq:25}) is finite due to the definition of $D$.}
  \begin{equation} \label{eq:25}
    A \psi = \sum_{\mu \in I} A^{(\mu)} \psi, \quad \forall \psi \in D \quad \text{with}\quad A^{(\mu)} = E^{(\mu)} A E^{(\mu)}
  \end{equation}
  holds with bounded operators $A^{(\mu)}$ on $\mathcal{K}^{(\mu)} = E^{(\mu)}\mathcal{K}$.
\end{defi}

We apply this definition to $\mathcal{K} = \mathcal{H}$ and the projections $E^{(\mu)}$ onto the subspaces $\mathcal{H}^{(\mu)}$. Obviously the domain
$D$ becomes $D_{K_2}$ and we can define
\begin{equation} \label{eq:33}
  \PAc(K_2) = \{ A : D_{K_2} \rightarrow D_{K_2}\,|\, A\ \text{is linear and block diagonal}\}
\end{equation}
Now we can restate our question from above as: Does $\PAext(K_2) = \PAc(K_2)$ hold? The answer is: no but almost. To make this clearer note first that
$\PAc(K_2)$ is an associative, complex algebra under operator products and even a *-algebra with $\PAc(K_2) \ni A \mapsto A^+ \in \PAc(K_2)$ given by
\begin{equation}
  A^+\psi = \sum_{\mu \in \Bbb{N}_0} (A^{(\mu)})^* \psi \quad \forall \psi \in D_{K_2}.
\end{equation}
Furthermore we can equip $\PAc(K_2)$ with a family of seminorms
\begin{equation} \label{eq:26}
  \PAc(K_2) \ni A \mapsto \|A\|^{(\mu)} = \| A^{(\mu)}\| \in \Bbb{R} \quad \mu \in \Bbb{N}_0.
\end{equation}
It is easy to see (please check) that $\PAext(K_2)$ becomes with this family a Frechet space. In this topology $\PAext(K_2)$ is a dense subspace of
$\PAc(K_2)$. 

In order to prove the last statement we will use a stronger result, already shown in \cite{KZSH}. It requires some additional notations
\begin{align}
  \label{eq:34} \mathrm{U}(K_2) &= \{ U \in \PAc(K_2)\,|\, U^+U = UU^+ = \Bbb{1} \}\\
  \mathrm{SU}(K_2) &= \{ U \in \mathrm{U}(K_2)\,|\, \det U^{(\mu)} = 1 \ \forall \mu \in \Bbb{N}_0\}\\
  \mathfrak{u}(K_2) &= \{ A \in \PAc(K_2)\, | \, A^+ = - A \}\\
  \mathfrak{su}(K_2) &= \{ A \in \mathfrak{u}(K_2)\,|\, \tr A^{(\mu)} = 0 \ \forall \mu \in \Bbb{N}_0\}\\
  \label{eq:35}\mathfrak{sl}(K_2) &= \{ A \in \PAc(K_2)\,|\, \tr A^{(\mu)}= 0 \ \forall \mu \in \Bbb{N}_0 \}
\end{align}
Note here that the subspaces $\mathcal{H}^{(\mu)}$ are finite dimensional. Hence no problems with the definitions of $\tr$ and $\det$
arises. Furthermore, by restricting it to $D_{K_2}$, we have considered the unit operator $\Bbb{1}$ as an element of $\PAc(K_2)$. 

As an associative algebra $\PAc(K_2)$ becomes a complex Liealgebra if we equip it with the operator commutator as the Liebracket. The subspaces
$\mathfrak{u}(K_2)$ and $\mathfrak{su}(K_2)$ are real Lie-subalgebras of $\PAc(K_2)$ and $\mathfrak{sl}(K_2)$ is the complexification of
$\mathfrak{su}(K_2)$. Furthermore by applying Prop. \ref{prop:3} (or more precisely a slight generalization of it) we see that all formally
selfadjoint elements $A$ of $\PAc(K_2)$ are essentially selfadjoint on $D_{K_2}$. Hence, by using their closures $\overline{A}$, we get an exponential
map 
\begin{equation} \label{eq:24}
  \mathfrak{u}(K_2) \ni A \mapsto \exp(\overline{A}) \in \mathrm{U}(K_2).
\end{equation}
In this way $\mathfrak{u}(K_2)$ becomes the Lie algebra of the Frechet-Lie group $\mathrm{U}(K_2)$. Similarly, $\mathfrak{su}(K_2)$ is the Lie algebra
of $\mathrm{SU}(K_2)$. Also note that $\mathrm{U}(K_2)$ and $\mathrm{SU}(K_2)$ are strongly and weakly closed subgroups of the unitary group
$\mathrm{U}(\mathcal{H})$ of $\mathcal{H}$.

Now, let us return to the operators $Y, Z$. Obviously, they are block diagonal and the blocks are trace free. Hence $iY, iZ \in \mathfrak{su}(K_2)$
and we can ask for the Lie subalgebra $\langle iY, i Z\rangle_{\Bbb{R}, \mathrm{Lie}} \subset \mathfrak{su}(K_2)$ generated by them. According to
\cite{KZSH} it has the following structure:

\begin{lem} \label{lem:13}
  For all $K \in \Bbb{N}$ and all tuples $(\tilde{A}^{(\mu)})_{\mu < K}$ with $\tilde{A}^{(\mu)} \in \mathcal{B}(\mathcal{H}^{(\mu)})$ selfadjoint,
  there is a $A \in \langle iY, i Z\rangle_{\Bbb{R}, \mathrm{Lie}} \subset \mathfrak{su}(K_2)$ such that $A^{(\mu)} = i \tilde{A}^{(\mu)}$ holds for
  all $\mu < K$. 
\end{lem}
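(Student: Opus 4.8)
The goal is to show that the Lie algebra generated by $iY$ and $iZ$ surjects onto each finite "truncation" $\bigoplus_{\mu<K}\mathfrak{u}(\mathcal{H}^{(\mu)})$. The natural approach is to work block by block and exploit the fact that $Y$ and $Z$ act on each two–dimensional block $\mathcal{H}^{(\mu)}=\SP\{\ket{\mu;0},\ket{\mu;1}\}$ (for $\mu>0$) essentially as fixed $2\times2$ matrices, but with $\mu$-dependent coefficients: $Y$ acts as $\sigma_3$ on every block, while $Z$ acts as $\sqrt{\mu}\,\sigma_1$ on $\mathcal{H}^{(\mu)}$ (up to the obvious identification and the fact that $Z$ kills $\mathcal{H}^{(0)}$). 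Writing an element $A\in\langle iY,iZ\rangle_{\Bbb R,\mathrm{Lie}}$ as a sequence $(A^{(\mu)})_\mu$ of $2\times2$ anti-Hermitian matrices, commutators are computed componentwise, so the problem becomes: the matrices $\bigl(i\sigma_3\bigr)_\mu$ and $\bigl(i\sqrt{\mu}\,\sigma_1\bigr)_\mu$, viewed as elements of the (restricted) product Lie algebra $\prod_\mu \mathfrak{su}(2)$, generate a subalgebra that, projected to any finite set of indices $\{1,\dots,K-1\}$ (together with the trivial block $\mu=0$, which requires no condition since $\mathcal{B}(\mathcal{H}^{(0)})\cong\Bbb C$ and selfadjoint elements there are absorbed by the phases already present), is all of $\bigoplus_{\mu<K}\mathfrak{su}(2)$.

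The key steps I would carry out are as follows. First, reduce to a single block: since $\sigma_1,\sigma_3$ generate $\mathfrak{su}(2)$ as a real Lie algebra (via $[\sigma_1,\sigma_3]=-2i\sigma_2$ etc.), on any \emph{one} index $\mu$ the pair $(i\sigma_3, i\sqrt\mu\,\sigma_1)$ already generates all of $\mathfrak{su}(\mathcal{H}^{(\mu)})$. Second — and this is the real content — decouple the blocks: I need to show that one can produce, for any finite $K$, an element of the generated Lie algebra whose $\mu_0$-component is a prescribed element of $\mathfrak{su}(2)$ and whose components at the other indices $\mu<K$, $\mu\neq\mu_0$, vanish. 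This is a Vandermonde-type argument. Iterated brackets of $iY$ and $iZ$ produce elements whose $\mu$-component is a polynomial in $\sqrt{\mu}$ (in fact in $\mu$, once one tracks that each commutator with $iZ$ introduces a factor $\sqrt\mu$ and brackets come in pairs for the "diagonal" directions, or a single $\sqrt\mu$ for the off-diagonal $\sigma_1,\sigma_2$ directions) times a \emph{fixed} element of $\mathfrak{su}(2)$. Concretely, for each of the three basis directions $\sigma_1,\sigma_2,\sigma_3$ one can exhibit a family of Lie-algebra elements $B_{k,j}$ whose $\mu$-component equals $p_k(\mu)\,(i\sigma_j)$ with $p_k$ running over a spanning set of polynomials of degree $<K$ (restricted to the arguments $\mu=1,\dots,K-1$); since the Vandermonde matrix $(p_k(\mu))$ is invertible on these $K-1$ distinct nodes, suitable real linear combinations isolate any single block. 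Third, once any single block can be hit in isolation with an arbitrary element of $\mathfrak{su}(2)$, summing over $\mu<K$ and over the three directions, and adding a multiple of $iY$ to fix the (trace, which is automatically zero here, so nothing extra is needed) gives the desired $A$ with $A^{(\mu)}=i\tilde A^{(\mu)}$ for all $\mu<K$; its components for $\mu\geq K$ are whatever they are, which is allowed. Finally I would invoke the cited \cite[]{KZSH} to note this is exactly the statement proved there, so the proof reduces to quoting that result.

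The main obstacle is the decoupling/Vandermonde step: one must be careful that the brackets really do produce the full range of polynomial coefficients up to degree $K-1$ in each of the three $\mathfrak{su}(2)$-directions, and not, say, only even powers in some directions and only odd in others in a way that the available nodes $\mu=1,\dots,K-1$ cannot separate. In the two-level Jaynes–Cummings setting this works out because bracketing with $iZ\sim i\sqrt\mu\,\sigma_1$ toggles between the $\sigma_1$ (coefficient carrying one extra $\sqrt\mu$) and $\sigma_2,\sigma_3$ sectors, so within each fixed sector one gets all powers of $\mu$ (integer powers, since the $\sqrt\mu$'s appear squared when returning to the same sector), and $K-1$ distinct integer nodes suffice to invert a $(K-1)\times(K-1)$ Vandermonde system. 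Since the statement is attributed to \cite{KZSH}, the cleanest route is simply:

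\begin{proof}
  This is precisely the content of the corresponding result in \cite{KZSH}; we sketch the idea. Writing elements of $\mathfrak{su}(K_2)$ as sequences $(A^{(\mu)})_{\mu\in\Bbb N_0}$ of anti-Hermitian $2\times2$ matrices (with $A^{(0)}$ acting on the one-dimensional space $\mathcal{H}^{(0)}$), the generators act blockwise as $Y^{(\mu)}=\sigma_3$ for all $\mu>0$ and $Z^{(\mu)}=\sqrt{\mu}\,\sigma_1$ for $\mu>0$, while $Y^{(0)}=Z^{(0)}=0$ on $\mathcal{H}^{(0)}$. Since $\sigma_1$ and $\sigma_3$ generate $\mathfrak{su}(2)$, iterated brackets of $iY$ and $iZ$ yield, for each of the three directions $j\in\{1,2,3\}$, elements of $\langle iY,iZ\rangle_{\Bbb R,\mathrm{Lie}}$ whose $\mu$-component is $p(\mu)\,i\sigma_j$ with $p$ ranging over all polynomials of degree $\leq K-2$. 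As the nodes $1,2,\dots,K-1$ are distinct, the associated Vandermonde system is invertible, so real linear combinations of these elements realize any prescribed tuple $(i\tilde A^{(\mu)})_{0<\mu<K}$ in the first $K-1$ nontrivial blocks; adding a multiple of $iY$ if necessary (the blocks are already trace-free) one obtains an $A\in\langle iY,iZ\rangle_{\Bbb R,\mathrm{Lie}}\subset\mathfrak{su}(K_2)$ with $A^{(\mu)}=i\tilde A^{(\mu)}$ for all $\mu<K$. See \cite{KZSH} for details.
\end{proof}
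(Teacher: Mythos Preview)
Your proposal is correct and matches the paper's treatment: the paper itself gives no proof of this lemma but simply attributes it to \cite{KZSH}, and you do the same while supplying the expected Vandermonde-type sketch (blockwise action $Y^{(\mu)}\sim\sigma_3$, $Z^{(\mu)}\sim\sqrt{\mu}\,\sigma_1$, then polynomial interpolation on the distinct nodes $\mu=1,\dots,K-1$), which is indeed the argument used in \cite{KZSH}. One small slip: $Y$ does not vanish on the one-dimensional block $\mathcal{H}^{(0)}$ but acts as a scalar there, so the $\mu=0$ component is handled by that scalar (and by the fact that $\tilde A^{(0)}$ is itself just a real number) rather than being automatically zero; this does not affect the argument.
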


Given the definition of the topology of $\PAc$ in Eq. (\ref{eq:26}) we can immediately rephrase the result as: The \emph{complex} Lie algebra $\langle
iY, i Z\rangle_{\Bbb{C}, \mathrm{Lie}}$ is dense in $\mathfrak{su}(K_2)$, and since $\langle iY, i Z\rangle_{\Bbb{C}, \mathrm{Lie}}$ is a subspace of
$\PAext(K_2)$ we get:

\begin{prop}\label{prop:4}
  The extended path algebra $\PAext(K_2)$ is a dense subspace of $\PAc$.
\end{prop}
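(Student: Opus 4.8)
The plan is to deduce Proposition~\ref{prop:4} directly from Lemma~\ref{lem:13} together with the structural facts about $\PAc(K_2)$ already collected above. First I would recall that the topology on $\PAc(K_2)$ is the Fr\'echet topology induced by the seminorms $\|A\|^{(\mu)} = \|A^{(\mu)}\|$ of Eq.~(\ref{eq:26}); thus a set $S \subset \PAc(K_2)$ is dense iff for every $A \in \PAc(K_2)$, every finite cut-off $K \in \Bbb{N}$, and every $\varepsilon > 0$ there is an $S \ni B$ with $\|A^{(\mu)} - B^{(\mu)}\| < \varepsilon$ for all $\mu < K$ --- in fact, since the blocks are finite-dimensional and mutually independent, it suffices to produce for each $K$ an element of $\PAext(K_2)$ whose first $K$ blocks agree \emph{exactly} with those of $A$.

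The key step is to upgrade Lemma~\ref{lem:13}, which is stated for selfadjoint target blocks, to arbitrary (not necessarily selfadjoint) target blocks. Given $A \in \PAc(K_2)$ and $K \in \Bbb{N}$, write each block as $A^{(\mu)} = S^{(\mu)} + i T^{(\mu)}$ with $S^{(\mu)} = (A^{(\mu)} + (A^{(\mu)})^*)/2$ and $T^{(\mu)} = (A^{(\mu)} - (A^{(\mu)})^*)/(2i)$ selfadjoint. Apply Lemma~\ref{lem:13} twice: once to the tuple $(S^{(\mu)})_{\mu<K}$ and once to $(T^{(\mu)})_{\mu<K}$, obtaining $A_S, A_T \in \langle iY, iZ\rangle_{\Bbb{R},\mathrm{Lie}}$ with $A_S^{(\mu)} = i S^{(\mu)}$ and $A_T^{(\mu)} = i T^{(\mu)}$ for $\mu < K$. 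Then $B := -i A_S + A_T$ is a complex linear combination of elements of $\langle iY, iZ\rangle_{\Bbb{R},\mathrm{Lie}}$, hence lies in $\langle iY, iZ\rangle_{\Bbb{C},\mathrm{Lie}} \subset \PAext(K_2)$ (the latter inclusion because $Y, Z \in \PAext(K_2)$, which is closed under commutators and complex linear combinations by the $*$-algebra structure noted in Sect.~\ref{sec:spectral-analysis}), and by construction $B^{(\mu)} = S^{(\mu)} + i T^{(\mu)} = A^{(\mu)}$ for all $\mu < K$. Therefore $\|A - B\|^{(\mu)} = 0$ for $\mu < K$, so every Fr\'echet neighbourhood of $A$ meets $\PAext(K_2)$.

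Finally I would remark that $\PAext(K_2) \subset \PAc(K_2)$ holds because the generators $A, A^+$ and all diagonal operators preserve $D_{K_2}$ and are block diagonal with respect to the decomposition (\ref{eq:22}) --- this is immediate from the action of $A, A^+$ computed just before Eq.~(\ref{eq:22}) --- so $\PAext(K_2)$ is genuinely a subspace of $\PAc(K_2)$ and the density statement makes sense. Combining this with the previous paragraph gives that $\PAext(K_2)$ is dense in $\PAc(K_2)$, which is the claim.

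The main obstacle is conceptual rather than computational: one must be careful that Lemma~\ref{lem:13} already delivers elements of the \emph{Lie algebra} $\langle iY, iZ\rangle_{\Bbb{R},\mathrm{Lie}}$ (not merely of some larger closure), so that the complex combination $B$ indeed lands in $\PAext(K_2)$ and not just in $\PAc(K_2)$; and one must check that taking complex linear combinations of real-Lie-algebra elements is legitimate here, which it is because $\PAext(K_2)$ is a complex associative algebra closed under the commutator bracket. No unbounded-operator subtleties enter at this stage, since everything happens blockwise on the finite-dimensional spaces $\mathcal{H}^{(\mu)}$ and the approximation is exact on each finite initial segment of blocks.
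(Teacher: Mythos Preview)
Your proof is correct and follows essentially the same route as the paper: deduce density directly from Lemma~\ref{lem:13} together with the observation that $\langle iY,iZ\rangle_{\Bbb{C},\mathrm{Lie}} \subset \PAext(K_2)$. The paper compresses the argument into a single sentence (passing from Lemma~\ref{lem:13} to ``$\langle iY,iZ\rangle_{\Bbb{C},\mathrm{Lie}}$ is dense'' and then to the proposition), whereas you spell out explicitly the real/imaginary decomposition that lifts the selfadjoint-target statement of Lemma~\ref{lem:13} to arbitrary blocks --- this is a detail the paper leaves implicit, so your write-up is if anything more complete.
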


This clarifies the role of $\PAc$ and shows in addition that the subspaces $\mathcal{K}^{(\mu)}$ are the \emph{minimal} invariant subspaces of
$\PAext(K_2)$. Hence the decomposition of $\mathcal{H}$ from (\ref{eq:27}) is uniquely determined by $\PAext(K_2)$ only, and since $\PAext(K_2)$ is
determined by the graph $K_2$ all spaces just introduced (i.e. $\PAc(K_2)$, $\mathrm{U}(K_2)$, etc.) only depend on the graph $K_2$ and not on an
arbitrarily chosen sequence of projections. This justifies in retrospect the notations already used. 

The group $\mathrm{U}(K_2)$ can be introduced alternatively as the set of all unitaries commuting with the operator  $Q \in \PAext(K_2)$ given by
$Q\ket{\mu,\nu} = \mu \ket{\mu,\nu}$ (or more precisely with the selfadjoint extension of $Q$). Hence $\mathrm{U}(K_2)$ is identical with $\mathrm{U}(Q)$
from\footnote{In \cite{KZSH} the operator $Q$ was called $X$ which is, however, already used otherwise in this paper.} \cite{KZSH}. Similarly
$\mathrm{SU}(K_2)$, $\mathfrak{u}(K_2)$ and $\mathfrak{su}(K_2)$ are identical with $\mathrm{SU}(Q)$, $\mathfrak{u}(Q)$ and $\mathfrak{su}(Q)$. This
connects the old symmetry based setting with the path algebra approach introduced in this paper. Along these lines we can reuse a result from
\cite{KZSH} which clarifies the structure of control problem (\ref{eq:21}):

\begin{prop} \label{prop:2}
  $\mathcal{G}(Y,Z) = \mathrm{SU}(K_2)$
\end{prop}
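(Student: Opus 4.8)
The plan is to prove the two inclusions of Proposition~\ref{prop:2} separately. The inclusion $\mathcal{G}(Y,Z)\subseteq\mathrm{SU}(K_2)$ is immediate: since $iY,iZ$ are formally anti-selfadjoint with traceless blocks, i.e. $iY,iZ\in\mathfrak{su}(K_2)$, the exponential map~(\ref{eq:24}) puts $\exp(itY)$ and $\exp(itZ)$ into $\mathrm{SU}(K_2)$ for every $t\in\Bbb{R}$; and since $\mathrm{SU}(K_2)$ is a strongly closed subgroup of $\mathrm{U}(\mathcal{H})$, it must contain the \emph{smallest} such subgroup generated by these one-parameter groups, which is $\mathcal{G}(Y,Z)$ by definition.

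The substance is the reverse inclusion. First I would collect the one-parameter groups produced by Lemmas~\ref{lem:3} and~\ref{lem:6}: let $\mathfrak{g}$ be the set of all formally anti-selfadjoint $A\in\PAext(K_2)$ whose (unique, by Prop.~\ref{prop:3}) anti-selfadjoint extension $\overline{A}$ satisfies $\exp(t\overline{A})\in\mathcal{G}(Y,Z)$ for all $t\in\Bbb{R}$. Trotter's formula (Lemma~\ref{lem:3}, applied on the common core $D_{K_2}$, on which every real linear combination of elements of $\PAext(K_2)$ is again essentially selfadjoint by Prop.~\ref{prop:3}) shows $\mathfrak{g}$ is closed under real linear combinations, and Lemma~\ref{lem:6} shows it is closed under commutators. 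Hence $\mathfrak{g}$ is a real Lie subalgebra of $\mathfrak{su}(K_2)$, and since $iY,iZ\in\mathfrak{g}$ we obtain $\langle iY,iZ\rangle_{\Bbb{R},\mathrm{Lie}}\subseteq\mathfrak{g}$.

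Next I would run an approximation in the strong topology, fed by Lemma~\ref{lem:13}. Fix $U\in\mathrm{SU}(K_2)$; it suffices to show that $U$ lies in the strong closure of $\mathcal{G}(Y,Z)$, which equals $\mathcal{G}(Y,Z)$. So fix normalized $\psi_1,\dots,\psi_m\in\mathcal{H}$ and $\epsilon>0$, and pick $K\in\Bbb{N}$ with $\|(\Bbb{1}-E_K)\psi_j\|<\epsilon/2$ for all $j$, where $E_K=\sum_{\mu<K}E^{(\mu)}$ projects onto $\bigoplus_{\mu<K}\mathcal{H}^{(\mu)}$. Since $U$ is block diagonal with $U^{(\mu)}\in\mathrm{SU}(\mathcal{H}^{(\mu)})$ and $\dim\mathcal{H}^{(\mu)}\le 2$, and the compact connected group $\prod_{\mu<K}\mathrm{SU}(\mathcal{H}^{(\mu)})$ has surjective exponential map, we may write $U^{(\mu)}=\exp(i\tilde{A}^{(\mu)})$ with $\tilde{A}^{(\mu)}\in\mathcal{B}(\mathcal{H}^{(\mu)})$ selfadjoint and traceless for $\mu<K$. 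By Lemma~\ref{lem:13} there is $A\in\langle iY,iZ\rangle_{\Bbb{R},\mathrm{Lie}}\subseteq\mathfrak{g}$ with $A^{(\mu)}=i\tilde{A}^{(\mu)}$ for all $\mu<K$, so $V:=\exp(\overline{A})\in\mathcal{G}(Y,Z)$ is block diagonal with $V^{(\mu)}=U^{(\mu)}$ for $\mu<K$, i.e. $VE_K=UE_K$. Then, using unitarity of $V$ and $U$,
\begin{equation}
  \|(V-U)\psi_j\|\le\|(V-U)E_K\psi_j\|+\|(V-U)(\Bbb{1}-E_K)\psi_j\|\le 2\,\|(\Bbb{1}-E_K)\psi_j\|<\epsilon ,
\end{equation}
so $V$ lies in the given basic strong neighbourhood of $U$. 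As such neighbourhoods form a base and $\mathcal{G}(Y,Z)$ is strongly closed, $U\in\mathcal{G}(Y,Z)$; since $U$ was arbitrary, $\mathrm{SU}(K_2)\subseteq\mathcal{G}(Y,Z)$, which together with the first inclusion gives the claim.

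The hard part will be the passage from Lie-algebraic data to the group in infinite dimensions: one cannot exponentiate a dense Lie subalgebra of $\mathfrak{su}(K_2)$ directly, so the argument has to be localized to the finite-dimensional blocks $\mathcal{H}^{(\mu)}$, $\mu<K$, and it is precisely there that the block-by-block surjectivity statement of Lemma~\ref{lem:13} (the nontrivial input, imported from~\cite{KZSH}) enters, combined with surjectivity of $\exp$ on the compact connected group $\prod_{\mu<K}\mathrm{SU}(\mathcal{H}^{(\mu)})$ and with strong closedness of $\mathcal{G}(Y,Z)$. A secondary point to pin down is that $\mathfrak{g}$ genuinely is a Lie algebra, i.e. that Lemmas~\ref{lem:3} and~\ref{lem:6} really apply -- which they do because sums and commutators of elements of $\PAext(K_2)$ stay in $\PAext(K_2)$ and remain essentially selfadjoint on $D_{K_2}$.
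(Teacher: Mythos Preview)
Your proof is correct and follows essentially the same route the paper sketches: the paper's own proof is a one-line pointer (``follows from Lemma~\ref{lem:13} and properties of the exponential map~(\ref{eq:24}); cf.\ \cite{KZSH} for details''), and what you have written is precisely the unpacking of that pointer --- the easy inclusion via strong closedness of $\mathrm{SU}(K_2)$, the Lie-algebra $\mathfrak{g}\subset\mathfrak{su}(K_2)$ built from Lemmas~\ref{lem:3} and~\ref{lem:6}, the finite-block matching via Lemma~\ref{lem:13} together with surjectivity of $\exp$ on the compact connected group $\prod_{\mu<K}\mathrm{SU}(\mathcal{H}^{(\mu)})$, and the standard $\epsilon/2$ strong-approximation estimate. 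There is no substantive difference in strategy.
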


\begin{proof}
  This follows from Lemma \ref{lem:13} and properties of the exponential map from (\ref{eq:24}); cf. \cite{KZSH} for details.
\end{proof}

In other words, all unitaries in the path algebra (with blocks of determinant $1$) can be implemented (approximately) by only using the Hamiltonians $Y$ and $Z$. To calculate the
corresponding control functions we can cut off the direct sum (\ref{eq:27}) at any index $\mu$ (depending on the accuracy we require) and end up with a
finite dimensional problem. Since the truncated operators $Y, Z$ can be represented by sparse matrices the corresponding optimization can be done
efficiently even for high dimensions. The only remaining problem is, how to implement an arbitrary unitary, or a little bit easier, how to prepare an
arbitrary state from the ground state $\ket{0;0}$. Obviously the ``symmetry breaking'' operator $X$ has to be involved here (``symmetry breaking'' now
should read: not in the path algebra). For the state preparation problem a general algorithm was used in \cite{KZSH} (which was in fact used already in
a number of older papers e.g. \cite{brockett2003controllability,rangan2004control,yuan2007controllability,bloch2010finite}). 

\begin{wrapfigure}{l}{0.45\textwidth}
  \includegraphics[width=0.4\textwidth]{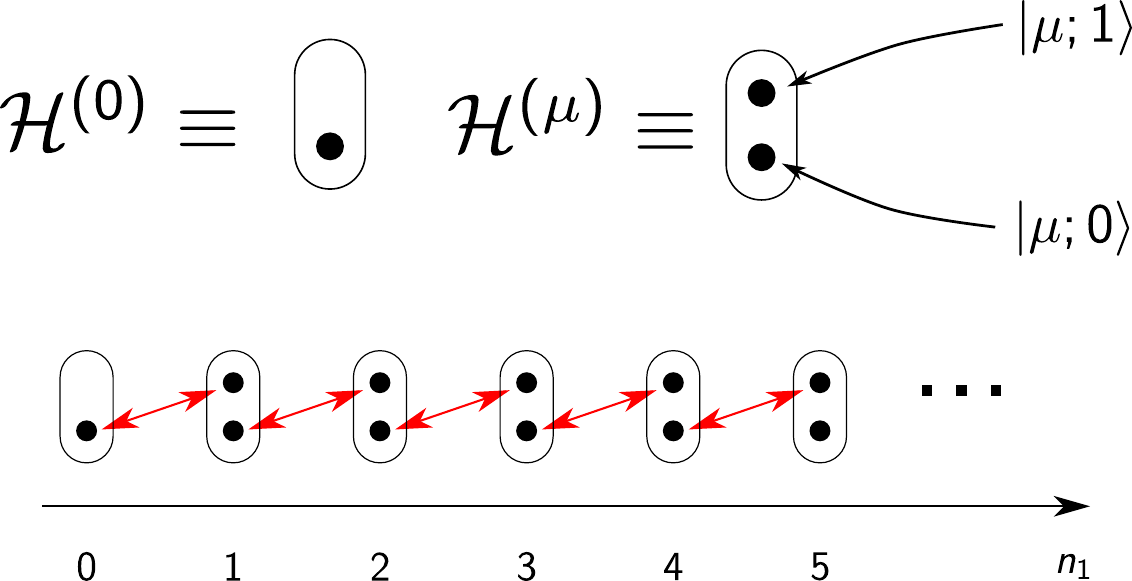}
  \caption{The decomposition of $\mathcal{H}$ into invariant subspaces. The boxes with two (or one) dots represent the subspaces
    $\mathcal{H}^{(\mu)}$, while the dots itself depict the basic vectors. The red arrows indicate the action of $\exp(i\pi X)$. \label{fig:2ddecomp}} 
\end{wrapfigure}
We consider a vector $\psi \in D_{K_2}$. Each such $\psi$ admits a constant $K \in \Bbb{N}$ such that $E^{(\mu)} \psi = 0$ holds for all $\mu > K$ and
$E^{(K)}\psi \neq 0$; cf. the projections $E^{(\mu)}$ introduced in Definition \ref{def:2}. 
Our goal is to transform $\psi$ into $e^{i \alpha} \ket{0;0}$ $\alpha \in
\Bbb{R}$ arbitrary, by using only unitaries of the form $\exp(i t_x X)$, $\exp(i t_y Y)$ and $\exp(i t_Z Z)$ with appropriate $t_x, t_y, t_z \in
\Bbb{R}^+$. Using Proposition \ref{prop:2} and arguments from the last paragraph we assume further that there is a fast algorithm to express (at least
approximately and with arbitrary good accuracy) any $U \in \mathrm{SU}(K_2)$ as a product of $\exp(i t_y Y)$ and $\exp(i t_Z Z)$. We do not care how
this is done explicitly, such that we are looking at a sequence $U_1, \dots, U_N$ of unitaries consisting of elements from $\mathrm{U}(K_2)$ and
$\exp(i t_x X)$. For the latter we only look at $t_x = \pi$ which produces a flip of $\ket{\mu,0}$ and $\ket{\mu+1,1}$. This is indicated by the red
arrows in Fig. \ref{fig:2ddecomp}. Note that we have to flip all pairs of vectors simultaneously, while the elements of $\mathrm{U}(K_2)$ can
manipulate each $\mathcal{H}^{(\mu)}$ (i.e. the boxes in Fig. \ref{fig:2ddecomp}) individually. With this prerequisites we can proceed as follows:

\begin{enumerate}
\item \label{item:1}
  Apply a unitary $U_1 \in \mathrm{SU}(K_2)$ to $\psi$ such that $\langle \mu; 0| U_1 \psi\rangle = 0$ holds for all $\mu > 0$. In other words we
  rotate the vectors $E^{(\mu)} \psi \in \mathcal{H}^{(\mu)} \cong \Bbb{C}^2$ with $0 < \mu \leq K$ towards $\ket{\mu, 1}$ until the $\ket{\mu;0}$
  components become zero. This is always possible, due to the block diagonal structure of $\mathrm{U}(K_2)$.
\item 
  Apply $U_2 = \exp(i \pi X)$ to $\psi_1 = U_1\psi$. This flips $\ket{K-1;0}$ and $\ket{K,1}$. Hence, since the overlap of $\psi_1$ with $\ket{K-1;0}$
  is zero by step \ref{item:1} the resulting vector $\psi_2 = U_2 \psi_1$ satisfies the initial assumptions with $K$ decremented by $1$.
\item 
  We continue this procedure $K-2$ times to get a vector $\psi_{2K}$ which overlaps only with $\mathcal{H}^{(0)}$ and $\mathcal{H}^{(1)}$.
\item 
  We apply $U_{2K+1} \in \mathrm{SU}(K_2)$ to $\psi_{2K}$ such that $E^{(1)} \psi_{2K}$ is rotated towards $\ket{1,1}$. Hence, the only non-zero
  components of $\psi_{2K+1} = U_{2K+1} \psi_{2K}$ are $\langle 0; 0| \psi_{2K+1}\rangle$ and $\langle 1;1 | \psi_{2K+1}\rangle$. Or in other words
  $\psi_{2K+1} = \tilde{\psi}_{2K+1} \otimes \ket{0} \in \Bbb{C}^2 \otimes \mathrm{L}^2(\Bbb{R}) = \mathcal{H}$. 
\item 
  $X$ and $Y$ operate on $\mathcal{H}$ as $\sigma_1 \otimes \Bbb{1}$ and $\sigma_3 \otimes \Bbb{1}$. Hence there is a combination $U_{2K+2}$ of $X$
  and $X$ rotation which transforms $\psi_{2K+1} = \tilde{\psi}_{2K+1} \otimes \ket{0}$ into $e^{i \alpha} \ket{0;0}$ as required.
\end{enumerate}

If $\psi$ is an arbitrary vector in $\mathcal{H}$ we can find for all $\epsilon>0$ a $K \in \Bbb{N}$ such that $\|\psi^{[K]} - \psi\| < \epsilon$
holds with $\psi^{[K]} = \sum_{\mu < K} E^{(\mu)} \psi$. Hence, by applying the algorithm just described to $\psi^{[K]}$ we get a family of unitaries
$U_1, \dots, U_{2K+2}$ which transforms $\psi$ into a final vector $\psi_f = U_{2K+2} \dots U_1 \psi$ with $\| \psi_f - e^{i \alpha}\ket{0;0}\| <
\epsilon$. In other words we can transform any pure state $\kb{\psi}$, $\psi \in \mathcal{H}$ approximately, but with arbitrary precision into
$\kb{0;0}$. By unitarity we can reverse the procedure to reach any state $\kb{\psi}$ up to a an arbitrary small error from the ground state
$\kb{0;0}$. Finally if we want to relate two pure states $\kb{\psi}$ and $\kb{\phi}$ we can stack two sequences of unitaries together: We start by 
transforming $\kb{\psi}$ (approximately) into $\kb{0;0}$ and then we transform $\kb{0;0}$ into $\kb{\phi}$ -- again with an arbitrary error
$\epsilon>0$.  The given procedure is in general very far from being optimal, but it explains how a state preparation can be done (at least in
principle) within the given setup.

This completes the discussion of two levels. We have connected the previous work from \cite{KZSH} to our current setup and seen that the path algebra
basically replaces the symmetry arguments from \cite{KZSH}. We will use this idea as a guide to study 3-level systems and to rediscuss the state
preparation problem.

\section{Example 2: Three level atoms}
\label{sec:example-2:-three}

Our next goal is to translate our discussion from the last section to 3-level atoms. Note  that parts of the material from this and the next section
can also be found in \cite{HofMa}. In contrast to two levels, the structure is already rich enough to indicate what we can expect from the general
case. A short inspection shows that only the four different graphs shown in Figure \ref{fig:3-graphs} satisfy the conditions from Section
\ref{sec:description-problem}. The cases $\Gamma_C$ (``Cascade''), $\Gamma_V$ (``V-shaped'') and  $\Gamma_\Lambda$ (``$\Lambda$-shaped'') are tree
graphs and treated in this section. The ``$\Delta$-configuration'' $\Gamma_\Delta$ contains a cycle which makes its discussion more difficult. It is
postponed therefore to the next section. 

\begin{figure}[h]
  \centering
  \bigskip
  \includegraphics[width=0.8\textwidth]{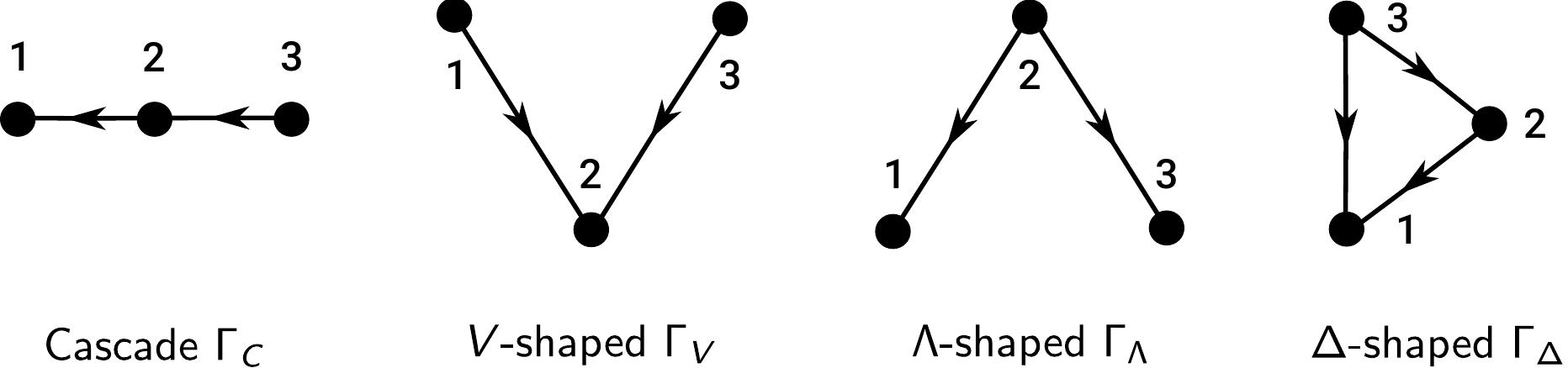}
  \caption{All ordered 3-graphs satisfying the condition from Sect. \ref{sec:description-problem}. \label{fig:3-graphs}}
\end{figure}

As graphs without orientation $\Gamma_C, \Gamma_V$ and $\Gamma_\Lambda$ are identical. In all three cases we can write
\begin{equation}
  V(\Gamma_{\#}) = \{1,2,3\}\quad\text{and}\quad E(\Gamma_{\#}) = \{(1,2), (2,1), (2,3), (3,2)\}\quad\text{where}\quad \# =
  C,V,\Lambda,
\end{equation}
and inversion of edges $e \mapsto \overline{e}$ is given by the map $E(\Gamma_{\#})\ni (a,b) \mapsto \overline{(a,b)} = (b,a) \in E(\Gamma_{\#})$. The
distinction between the $\Gamma_{\#}$ arises from different choices for $E_+(\Gamma_{\#})$: We have  
\begin{equation}
  E_+(\Gamma_C) = \{(2,1), (3,2)\},\quad E_+(\Gamma_V) = \{(1,2), (3,2)\},\quad E_+(\Gamma_\Lambda) = \{(2,1), (2,3)\}.
\end{equation}
Apparently there is a fourth possibility $E_+(\Gamma_{\tilde{C}}) = \{(1,2), (2,3)\}$, but this is just the cascade reversed. In other words it arises
from $\Gamma_C$ by exchanging the vertices $1$ and $3$. Therefore it does not lead to a new system and it is omitted.

\paragraph{The control problem}
Due to these similarities the control problems associated to these four graphs are closely related. The Hilbert space is the same for all cases:
$\mathcal{H} = \Bbb{C}^3 \otimes \mathrm{L}^2(\Bbb{R})^2$, and the canonical basis $\ket{b}$, $b \in \ConfP(\Gamma_{\#})$ becomes 
\begin{equation} \label{eq:32}
  \ket{j} \otimes \ket{n_1} \otimes \ket{n_2} =\ket{j;n_1,n_2} \in \Bbb{C}^3 \otimes \mathrm{L}^2(\Bbb{R})^2,\quad j \in \{1,2,3\},\ n_1,n_2 \in
  \Bbb{N}_0, 
\end{equation}
where $\ket{j} \in \Bbb{C}^3$ denotes the canonical basis and $\ket{n_1},\ket{n_2} \in \mathrm{L}^2(\Bbb{R})$ is the number basis. 

Now we define for $\alpha,\beta \in V(\Gamma_{\#})$ the operators
\begin{equation} \label{eq:37}
  X^{(\alpha,\beta)} = \bigl(\KB{\alpha}{\beta} - \KB{\beta}{\alpha}\bigr) \otimes \Bbb{1}^{\otimes 2},\quad Y^{(\alpha,\beta)} = \bigl(\kb{\alpha} -
  \kb{\beta}\bigr) \otimes \Bbb{1}^{\otimes 2}
\end{equation}
and
\begin{alignat}{2}
  Z^{(1,2)} &= \KB{1}{2} \otimes a \otimes \Bbb{1} + \KB{2}{1} \otimes a^* \otimes \Bbb{1}, \quad &Z^{(2,1)} &= \KB{2}{1} \otimes a \otimes \Bbb{1} +
  \KB{1}{2} \otimes a^* \otimes \Bbb{1}\\
  Z^{(2,3)} &= \KB{2}{3} \otimes \Bbb{1} \otimes a  + \KB{3}{2} \otimes \Bbb{1} \otimes a^*,\quad &Z^{(3,2)} &= \KB{3}{2} \otimes \Bbb{1} \otimes a  +
  \KB{2}{3} \otimes \Bbb{1} \otimes a^*.
\end{alignat}
This definition allows us to associate to the graph $\Gamma_{\#}$ the control Hamiltonians
\begin{equation}
  X^{(\alpha,\beta)}, Y^{(\alpha,\beta)}, Z^{(\alpha,\beta)},\quad (\alpha,\beta) \in E_+(\Gamma_{\#})
\end{equation}
and the drift Hamiltonian
\begin{equation} \label{eq:40}
  H_D = \omega_{C,1} \Bbb{1} \otimes a^*a \otimes \Bbb{1} + \omega_{C,2} \Bbb{1} \otimes \Bbb{1} \otimes a^*a + \sum_{(\alpha,\beta) \in
    E_+(\Gamma_{\#})} \left( \omega_{A,\alpha,\beta} Y^{(\alpha,\beta)} + \omega_{I,\alpha,\beta} Z^{(\alpha,\beta)} \right)
\end{equation}
As before the operators $Z^{(\alpha,\beta)}$ and $H_D$ are unbounded and essentially selfadjoint on the default domain $D_\Gamma$ which does not
depend on the choice $\Gamma = \Gamma_{\#}$. The control problems connected to the three graphs can therefore be written in a unified way as
\begin{equation} \label{eq:31}
  i \frac{d}{dt} U(0,t) \psi = H_D U(0,t) \psi + \sum_{(\alpha,\beta) \in E_+(\Gamma_{\#})}\left( u_{\alpha,\beta}(t) X^{(\alpha,\beta)} U(0,t) \psi
    + v_{\alpha,\beta}(t) Y^{(\alpha,\beta)} U(0,t) \psi\right)
\end{equation}
and 
\begin{multline} \label{eq:41}
  i \frac{d}{dt} U(0,t) \psi = \\ \sum_{(\alpha,\beta) \in E_+(\Gamma_{\#})}\left(u_{\alpha,\beta}(t) X^{(\alpha,\beta)} U(0,t) \psi +
    v_{\alpha,\beta}(t) Y^{(\alpha,\beta)} U(0,t) + w_{(\alpha,\beta)}(t) Z^{(\alpha,\beta)} U(0,t)\psi\right),
\end{multline}
with piecewise constant control functions $u_{\alpha,\beta}, v_{\alpha,\beta}$ and $w_{\alpha,\beta}$. According to Theorem \ref{thm:2} and
Proposition \ref{prop:1} both problems are strongly controllable. Note that strong controllability means in case of (\ref{eq:41}) that
$\mathcal{G}(X^{(\alpha,\beta)}, Y^{(\alpha,\beta)}, Z^{(\alpha,\beta}; (\alpha,\beta) \in E_+(\Gamma_{\#})) = \mathrm{U}(\mathcal{H})$ holds;
cf. Section \ref{sec:example-1:-two} and \cite{KZSH}.

If we look at the dependency of $X,Y,Z$ on the graph $\Gamma_{\#}$ we see that the $X$ does not depend at all, while the $Y$ only differ by
signs. Hence different structure can only arise from the $Z$ and therefore from the structure of the (extended) path algebra $\PAext(\Gamma)$, which we
will analyze below. Note that parts of the following discussion can be applied to general graphs or at least to general tree graphs. 

\paragraph{Invariant subspaces and the photon game}
Our first step is to construct the minimal invariant subspaces; cf. Theorem \ref{thm:3}. A general
strategy is to reuse the method from the proof of Theorem \ref{thm:3}: Start with a configuration $b \in \ConfP(\Gamma)$ and generate all vectors of the
form $\ket{\gamma \cdot b}$ to get the minimal invariant subspace containing $\ket{b}$:
\begin{equation} \label{eq:30}
  \mathcal{H}^{(b)} = \SP \{ \ket{\gamma \cdot b}\,|\, \gamma\ \text{path in}\ \Gamma\};
\end{equation}
cf. Eqs (\ref{eq:3}) and (\ref{eq:28}). According to Lemma \ref{lem:14} this is equivalent to $\mathcal{H}^{(b)} = \SP \{\ket{c}\,\|\, c \in
\ConfP(\Gamma),\ c \sim b \}$ with $c \sim b$ defined by: $c \sim b :\Leftrightarrow$ $\exists$ path $\gamma$ with $c=\gamma\cdot b$ and 
\begin{equation} \label{eq:29}
  b_0 = i(\gamma)  \quad \text{and}\quad \gamma_k \cdot b \ \text{is regular}\ \forall k = 0, \dots, N,
\end{equation}
holds, where the $\gamma_k$ denote the subpath of $\gamma$; cf. Eq. (\ref{eq:15}). Hence to collect all $c \in \ConfP(\Gamma)$ with
$\ket{c} \in \mathcal{H}^{(b)}$ and $\ket{c}\neq 0$ we can play the following combinatorial game (in the following called the \textbf{photon game}; cf
also Figure \ref{fig:photon-game}): 

\begin{enumerate}
\item \label{item:4}
  Choose a path $\gamma = (e_1, \dots, e_N)$ of $\Gamma$ which starts at $b_0 \in V(\Gamma)$ and ends at $c_0\in V(\Gamma)$. Attach to each positive
  edge $e \in E_+(\Gamma)$ an integer $n_e$ which is initialized to $n_e = \underline{b}(e)$.
\item 
  Walk along $\gamma$ from $b_0$ to its end at $c_0$. In the $j^{\mathrm{th}}$ step we pass edge $e_j$. If $e_j \in E_+(\Gamma)$ increment
  $n_{e_j}$. Otherwise (i.e. if $e_j \in E_-(\Gamma)$) decrement $n_{\overline{e_j}}$.
\item 
  If $e_j \in E_-(\Gamma)$ and after the $j^{\mathrm{th}}$ step the number $n_{\overline{e_j}}$ is negative the process failed and we have to choose
  the next path at item \ref{item:4}.
\item 
  If none of the $n_e$ becomes negative during the whole process we reach a regular configuration $c = \gamma \cdot b$ with $c \sim b$ and therefore
  $\ket{c} \in \mathcal{H}^{(b)}$. Note that it is not sufficient that the $n_e$ are non-negative \emph{at the end}, they have to be non-negative at each
  step. This is exactly the contents of Eq. (\ref{eq:29}). Also note that at the end of the path the numbers $n_e$ become $\underline{c}(e)$.
\item \label{item:5}
  Restart the process at item \ref{item:4} until a basis of $\mathcal{H}^{(b)}$ is reached.
\end{enumerate}

\begin{figure}[b]
  \centering
  \bigskip
  \includegraphics[width=0.8\textwidth]{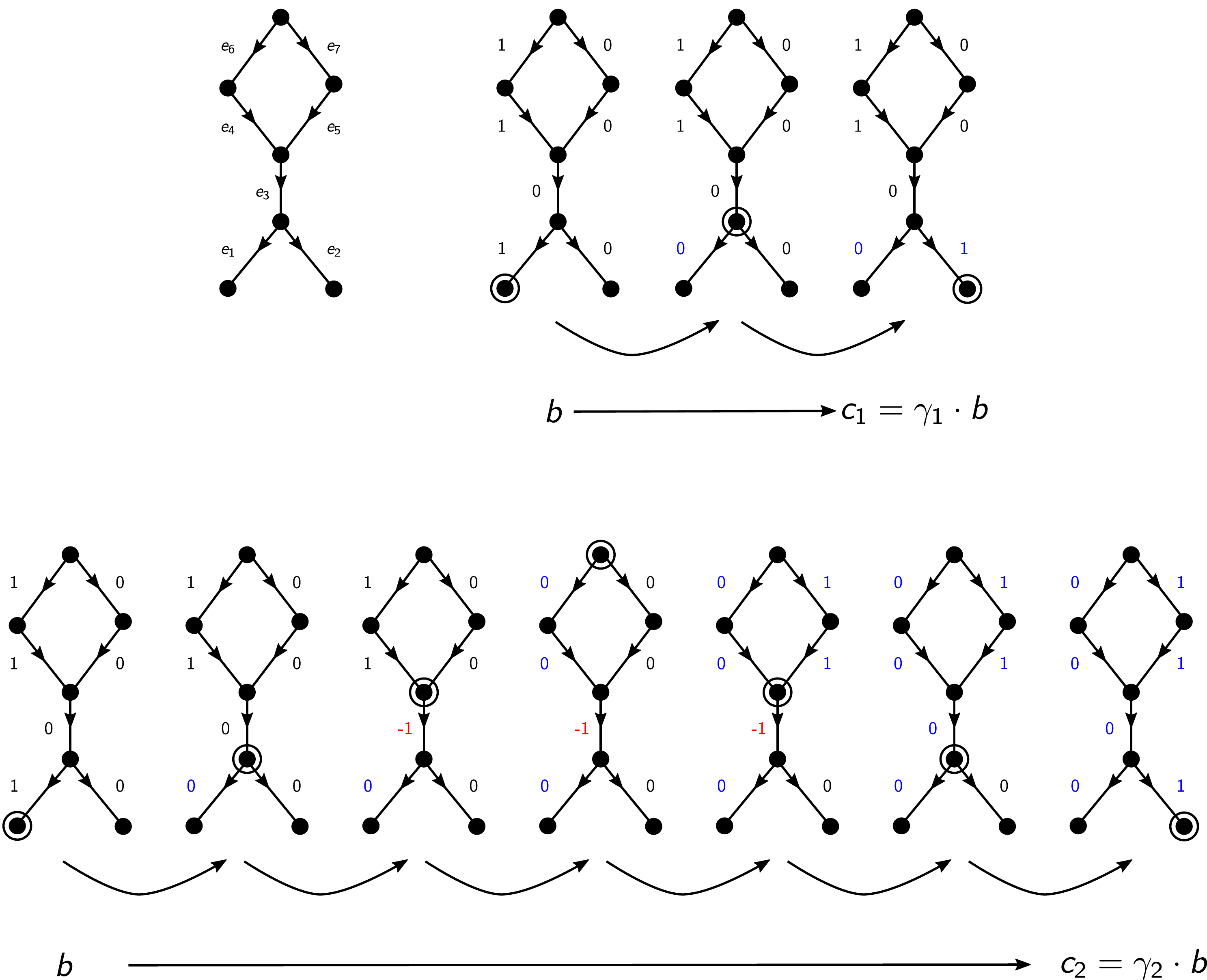}
  \caption{Playing the photon game on the graph shown in the upper left of the figure. Along the path $\gamma=(\overline{e_1}, e_2)$ it is successfully, along
    the path $\gamma_2 = (\overline{e_1}, \overline{e_3}, \overline{e_4}, \overline{e_6}, e_7, e_5, e_3,e_2)$ it is not. In the second case the photon number
    associated to edge $e_3$ becomes negative in between. Hence the configuration $c_1$ is equivalent to $b$ but $c_2$ is not. \label{fig:photon-game}}
\end{figure}

The biggest problem is the lack of an easy condition to check whether the process is finished in item \ref{item:5}. This problem can be solved by
restricting the set of paths from which we have to choose $\gamma$ in item \ref{item:4} to a finite set. For tree graphs this can be done by looking
at \emph{straight} paths. A path is called straight if it does not go back and forth along the same edge, or more precisely if $e_{j+1} \neq
\overline{e_j}$ holds for all $j=1,.\dots, N-1$. To each path $\gamma$ we can associate a unique straight path $\hat{\gamma}$ by subsequently removing
all pairs of edges $e_j, e_{j+1}$ with $e_{j+1} = \overline{e_j}$. It is easy to see that $\ket{\gamma \cdot b} \neq 0$ $\Rightarrow$
$\ket{\hat{\gamma} \cdot b} \neq 0$ holds. Therefore we get

\begin{lem} \label{lem:15}
  For each $b \in \ConfP(\Gamma)$ we have (cf. Eq. (\ref{eq:30})): 
  \begin{equation}
    \mathcal{H}^{(b)} = \SP \{ \ket{\gamma \cdot b}\,|\, \gamma\ \text{straight path in}\ \Gamma\}.
  \end{equation}
\end{lem}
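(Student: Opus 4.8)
The plan is to compare the two spanning sets in the asserted identity. One inclusion is free: a straight path is in particular a path, so by Eq.~(\ref{eq:30}) the span over straight paths is contained in $\mathcal{H}^{(b)}$. For the reverse inclusion it suffices to show that for an \emph{arbitrary} path $\gamma$ the vector $\ket{\gamma\cdot b}$ already lies in the span over straight paths, and in fact I would prove the sharper statement
\begin{equation}
  \ket{\gamma\cdot b}\neq 0 \quad\Longrightarrow\quad \gamma\cdot b=\hat\gamma\cdot b ,
\end{equation}
where $\hat\gamma$ is the straight path obtained from $\gamma$ by successively cancelling backtracking pairs $e_j,\overline{e_j}$. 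This forces $\ket{\gamma\cdot b}=\ket{\hat\gamma\cdot b}$, and since $\ket{\gamma\cdot b}=0$ trivially lies in any span, it gives $\ket{\gamma\cdot b}\in\SP\{\ket{\delta\cdot b}\,|\,\delta\text{ straight path in }\Gamma\}$ in every case, which is what is needed.

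To prove the sharper statement I would reduce it to a single cancellation step. Suppose $\gamma=(e_1,\dots,e_N)$ has an index $j$ with $e_{j+1}=\overline{e_j}$ and let $\gamma'$ be $\gamma$ with this pair deleted. Write $c=\gamma_{j-1}\cdot b$. Because $\ket{\gamma\cdot b}\neq0$, the regularity condition (\ref{eq:15}) guarantees that every partial image $\gamma_k\cdot b$ ($k=0,\dots,N$) is regular — in particular none equals $\Nil$, which forces $(\gamma_k\cdot b)_0=i(e_{k+1})$ for $k\le N-1$; so $c$ is regular with $c_0=i(e_j)$. Applying $e_j$ then moves the current level to $t(e_j)=i(\overline{e_j})$ and changes $\underline{b}$ on the geometric edge $\{e_j,\overline{e_j}\}$ by $\sign(e_j)$, and the subsequent $e_{j+1}=\overline{e_j}$ moves the current level back to $t(\overline{e_j})=i(e_j)=c_0$ and changes $\underline{b}$ by $\sign(\overline{e_j})=-\sign(e_j)$; hence $e_{j+1}\cdot e_j\cdot c=c$. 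Therefore $\gamma_{j+1}\cdot b=\gamma_{j-1}\cdot b$, which immediately gives $\gamma\cdot b=\gamma'\cdot b$. Moreover the partial images of $\gamma'$ are among those of $\gamma$, hence all regular, and the starting-vertex condition carries over, so $\ket{\gamma'\cdot b}\neq0$ by (\ref{eq:15}) once more.

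Finally I would iterate: each cancellation step strictly shortens the path while preserving both the configuration $\gamma\cdot b$ and the non-vanishing of $\ket{\gamma\cdot b}$, so after finitely many steps one reaches a path with no backtracking pair — i.e.\ $\hat\gamma$ — giving $\gamma\cdot b=\hat\gamma\cdot b$. I expect the only point needing genuine care to be the bookkeeping in the cancellation step: checking that all intermediate configurations stay regular along \emph{both} $\gamma$ and $\gamma'$ (so that no spurious $\Nil$ or negative photon number is produced), and noting that the reduction $\gamma\mapsto\hat\gamma$ is confluent so that $\hat\gamma$ is well defined. Everything else is routine.
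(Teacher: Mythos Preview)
Your proof is correct and is precisely the argument the paper has in mind: the paper states just before the lemma that one associates to each path $\gamma$ its straightening $\hat\gamma$ and that ``it is easy to see that $\ket{\gamma\cdot b}\neq 0 \Rightarrow \ket{\hat\gamma\cdot b}\neq 0$,'' leaving the details implicit. Your single-cancellation step, the observation that the intermediate configurations of $\gamma'$ form a subset of those of $\gamma$, and the iteration are exactly those details spelled out; the confluence remark is a pleasant extra but not needed, since any terminating sequence of cancellations yields a straight path with the same endpoint configuration.
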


In a tree graph (i.e. a graph without cycles) there is a unique straight path between any pair of vertices $v \neq w$.  (Please check!) Hence with the
observation from Lemma \ref{lem:15} we can supplement the procedure from above as follows: In item \ref{item:4} only choose straight paths, and in
item \ref{item:5} terminate the procedure if the set of straight paths is exhausted. This will finish the game (for a fixed starting configuration
$b$) after finitely many repetitions.   

The last problem to be solved is the labeling of the invariant subspaces. To use configurations as we have done until now is ambiguous since in
general there are $c \in \ConfP(\Gamma)$ with $\ket{c} \in \mathcal{H}^{(b)}$ and $c\neq b$. For tree graphs we can solve this problem in terms of a
map which we will define in the following. Firstly we need an enumeration of $E_+(\Gamma)$, i.e. a bijective map $\{1, \dots, d\} \ni j \mapsto e_j
\in E_+(\Gamma)$, with $d=|E_+(\Gamma)|$. This allows us to rewrite configurations $b \in \ConfP(\Gamma)$ as $d+1$-tuples $b=(b_0,b_1, \dots, b_d)$
with $b_j = b(e_j)$. Secondly we have to choose an arbitrary but fixed vertex $v_0$. For each $b \in \ConfP(\Gamma)$ there is a unique straight path
$\gamma$ starting at $v_0$ and ending at $b_0$, where we include the empty path to allow $v_0$ to be the start and the end at the same time. Now we
define 
\begin{equation}
  \ConfP(\Gamma) \ni b \mapsto \nu(b) = (c_1, \dots, c_d) \in \Bbb{Z}^d,\quad\text{with}\quad b = \gamma \cdot c, \quad c=(v_0, c_1, \dots, c_d).
\end{equation}
Hence, for a given $b \in \ConfP(\Gamma)$ we look for the unique  (and in general non-regular) configuration $c$ such that $b = \gamma \cdot c$ and
$c_0 = v_0$ hold. The photon numbers $(c_1, \dots, c_d)$ of the configuration $c$ are then the result of  the map, which has the following properties:

\begin{prop} \label{prop:5}
  For a tree graph $\Gamma$ the map $\nu$ just defined has the following properties:
  \begin{enumerate}
  \item \label{item:8}
    $\nu(b) = \nu(\tilde{b})$ $\Leftrightarrow$ $\mathcal{H}_b = \mathcal{H}_{\tilde{b}}$.
  \item \label{item:9}
    The range $\Delta$ of $\nu$ satisfies $\Bbb{N}_0^d \subset \Delta \subset [-1,\infty)^d$, where the intervals refer to subsets of $\Bbb{Z}$,
    rather than $\Bbb{R}$.
  \item \label{item:10}
    If $v_0 < w$ holds for all $w \in V(\Gamma)$ we have $\Delta = \Bbb{N}_0^d$.
  \end{enumerate}
\end{prop}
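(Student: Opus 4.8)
The plan is to build a small amount of machinery first. Extend $\nu$ to all of $\Conf(\Gamma)$ by setting $\tilde\nu(c)=\overline{\gamma_c}\cdot c$, where $\gamma_c$ is the unique straight path in the tree $\Gamma$ from $v_0$ to the current level $c_0$; reading off the photon numbers of $\tilde\nu(c)$ recovers $\nu(c)$ when $c\in\ConfP(\Gamma)$. Two elementary facts about trees do the work. First, the partial action of a path $\delta$ on configurations factors through its reduction $\hat\delta$, because on configurations $\overline f\cdot f\cdot c=c$ whenever $f\cdot c\neq\Nil$ (applying $\overline f$ reverses the photon shift of $f$ and restores the current level). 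Second, in a tree there is a \emph{unique} straight path between any two vertices, and it is a \emph{simple} path (a non-backtracking walk in a tree cannot repeat a vertex), so every path between two vertices reduces to it. Combining the two: for $e$ with $e\cdot c\neq\Nil$, the path $\overline{\gamma_{e\cdot c}}\,e$ runs from $c_0$ to $v_0$, hence reduces to $\overline{\gamma_c}$, so $\tilde\nu(e\cdot c)=\tilde\nu(c)$. Thus $\tilde\nu$ is constant along any path, i.e. invariant under the partial action of $E(\Gamma)$.

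With this, one direction of item \ref{item:8} is immediate: if $\mathcal{H}_b=\mathcal{H}_{\tilde{b}}$ then $b\sim\tilde{b}$ in the sense of Lemma \ref{lem:14}, so $\tilde{b}=\gamma\cdot b$ for a path $\gamma$ starting at $b_0$, and $\tilde\nu$-invariance along $\gamma$ gives $\nu(\tilde{b})=\nu(b)$. For the converse, suppose $\nu(b)=\nu(\tilde{b})$ and write $c=(v_0,\nu(b))$ for the common pullback (possibly non-regular). Let $\beta$ be the unique straight path from $b_0$ to $\tilde{b}_0$; since $\beta$ equals the reduction of $\tilde\alpha\,\overline\alpha$, where $\alpha,\tilde\alpha$ are the straight paths $v_0\to b_0$ and $v_0\to\tilde{b}_0$, the factorization through reductions gives $\beta\cdot b=\tilde\alpha\cdot(\overline\alpha\cdot b)=\tilde\alpha\cdot c=\tilde{b}$. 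It then suffices to verify the regularity condition (\ref{eq:15}) for the pair $(\beta,b)$, i.e. that every initial-segment configuration $\beta_k\cdot b$ is regular; granting this, $b\sim\tilde{b}$ and hence $\mathcal{H}_b=\mathcal{H}_{\tilde{b}}$.

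This regularity check is the technical core. By $\tilde\nu$-invariance $\beta_k\cdot b=\alpha^{(k)}\cdot c$, where $\alpha^{(k)}$ is the simple straight path from $v_0$ to the current level $w_k$ of $\beta_k\cdot b$. For a positive edge $e_j$ put $g_j=\{e_j,\overline{e_j}\}$; removing $g_j$ splits the tree into $\Gamma_j^{\mathrm{hi}}$ (containing $i(e_j)$) and $\Gamma_j^{\mathrm{lo}}$ (containing $t(e_j)$). Tracking the photon number of $e_j$ along $\alpha^{(k)}$ shows $(\alpha^{(k)}\cdot c)(e_j)=c_j+\sigma_j(w_k)$, where $\sigma_j(w)=0$ if $w$ lies on the same side of $g_j$ as $v_0$, and otherwise $\sigma_j\equiv +1$ if $v_0\in\Gamma_j^{\mathrm{hi}}$ and $\sigma_j\equiv -1$ if $v_0\in\Gamma_j^{\mathrm{lo}}$; in particular $\sigma_j$ is constant on each side of $g_j$. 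The key observation is that, since $\beta$ is a \emph{simple} path in the tree, every vertex $w_k$ on $\beta$ lies on the same side of $g_j$ as $b_0$ or as $\tilde{b}_0$: if $g_j\notin\beta$ all of $\beta$ sits on one side, and if $g_j\in\beta$ then $\beta$ crosses $g_j$ exactly once, so each $w_k$ is before or after the crossing. Hence $\sigma_j(w_k)\in\{\sigma_j(b_0),\sigma_j(\tilde{b}_0)\}$, and regularity of $b$ and of $\tilde b$ gives $c_j+\sigma_j(b_0)\geq 0$ and $c_j+\sigma_j(\tilde{b}_0)\geq 0$, so $c_j+\sigma_j(w_k)\geq 0$. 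Thus $\beta_k\cdot b\in\ConfP(\Gamma)$ for all $k$, finishing item \ref{item:8}.

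Items \ref{item:9} and \ref{item:10} drop out of the same bookkeeping. Taking $k$ at the endpoint gives $c_j=b(e_j)-\sigma_j(b_0)\geq -1$, so $\Delta\subset[-1,\infty)^d$; and for $\vec{n}\in\Bbb{N}_0^d$ the configuration $(v_0,\vec{n})$ is regular with $\nu((v_0,\vec{n}))=\vec{n}$ (empty straight path), so $\Bbb{N}_0^d\subset\Delta$, which is item \ref{item:9}. For item \ref{item:10}, under the hypothesis $v_0<w$ for all $w$ one shows $v_0\in\Gamma_j^{\mathrm{lo}}$ for every $j$: otherwise $v_0\in\Gamma_j^{\mathrm{hi}}$ with $v_0\neq t(e_j)$ (the case $v_0=t(e_j)$ being trivial and $v_0=i(e_j)$ excluded by $v_0\leq t(e_j)<i(e_j)$), and then the unique straight path $v_0\to t(e_j)$ crosses $g_j$ once, from $\Gamma_j^{\mathrm{hi}}$ to $\Gamma_j^{\mathrm{lo}}$, hence its last edge is the \emph{positive} edge $e_j$; but $v_0\leq t(e_j)$ supplies an ordered path $t(e_j)\to v_0$ whose reversal is, by uniqueness, that same straight path and consists entirely of \emph{negative} edges --- a contradiction. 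With $v_0\in\Gamma_j^{\mathrm{lo}}$ for all $j$ we get $\sigma_j\leq 0$, hence $c_j=b(e_j)-\sigma_j(b_0)\geq b(e_j)\geq 0$, so $\Delta\subseteq\Bbb{N}_0^d$ and therefore $\Delta=\Bbb{N}_0^d$. The main obstacle is precisely the regularity verification of item \ref{item:8}: one must avoid routing the path from $b$ to $\tilde b$ through the possibly non-regular pullback $c$ and instead travel along the straight path $\beta$, exploiting the simple-path structure of trees to keep every intermediate configuration regular.
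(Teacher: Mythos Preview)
Your proof is correct and rests on the same key fact the paper isolates: in a tree, the unique straight path between two vertices is simple, so each geometric edge is traversed at most once and every photon number changes by at most one along the way. The paper packages this as a standalone lemma (Lemma~\ref{lem:16}: for a tree, $b\sim c$ iff $c=\gamma\cdot b$ with $\gamma$ straight), whose proof is the same ``each $n_e$ moves by $0$ or $\pm 1$, endpoints regular $\Rightarrow$ intermediates regular'' argument you give; it then invokes that lemma directly for the regularity check in item~\ref{item:8} and handles items~\ref{item:9}--\ref{item:10} by the bare observation that each photon number shifts by at most one (respectively, only upwards when the reverse path $b_0\to v_0$ consists of positive edges).

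Your route differs in bookkeeping rather than in substance: instead of the qualitative Lemma~\ref{lem:16} you introduce the edge-removal partition $\Gamma_j^{\mathrm{hi}}\sqcup\Gamma_j^{\mathrm{lo}}$ and the indicator $\sigma_j$, which yields the explicit formula $(\alpha^{(k)}\cdot c)(e_j)=c_j+\sigma_j(w_k)$. This buys you a uniform mechanism from which all three items fall out by reading off values of $\sigma_j$, and it makes the ``changes by at most one'' statement quantitative. The paper's version is shorter and keeps the combinatorics implicit; yours is more explicit and reusable. One cosmetic point: in your item~\ref{item:10} argument, once you assume $v_0\in\Gamma_j^{\mathrm{hi}}$ the condition $v_0\neq t(e_j)$ is automatic (since $t(e_j)\in\Gamma_j^{\mathrm{lo}}$), so the parenthetical is redundant; and the cleanest way to see the contradiction is to note that an ordered path cannot backtrack (the reverse of a positive edge is negative), hence is already straight, so the ordered path $t(e_j)\to v_0$ \emph{is} the unique straight path and its reversal must be all negative---exactly what you wrote.
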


\begin{proof}
  Large parts of the proof relies on the following lemma which simplifies the definition of the equivalence relation $\sim$ for tree graphs.

  \begin{lem} \label{lem:16}
    For a tree graph $\Gamma$ and two regular configurations $b$, $c$ we have: $b \sim c$ $\Leftrightarrow$ $c = \gamma \cdot b$ with a straight
    path $\gamma$.
  \end{lem}

  \begin{proof}
    Lets assume first that $c = \gamma \cdot b$ with a straight path $\gamma$ and write $\gamma=(e_1,\dots, e_N)$. Since $\Gamma$ is a tree and
    $\gamma$ straight each geometric edge appears in $\gamma$ at most once, i.e. $e_j = e_k$ or $e_j = \overline{e_k}$ implies $j=k$. If one edge
    could appear twice the path $\gamma$ would contain either a cycle (which is impossible since $\Gamma$ is a tree) or $\gamma$ would move back and
    force (which is not allowed since $\gamma$ is straight). Hence if we propagate $b \in \ConfP(\Gamma)$ along $\gamma$ (i.e. calculating $\gamma_k
    \cdot b$ for all subpath $\gamma_k$, $k=1,\dots,N$) by playing the photon game, the numbers $n_e \in \Bbb{Z}$ can change in three different ways:
    $n_e$ is incremented exactly once if $c(e) = b(e) + 1$, it is decremented exactly once if $c(e) = b(e) -1$, and if $c(e) = b(e)$ it remains
    unchanged during the whole process. Since $b,c$ are both regular the $n_e$ can never become negative, or in other words all $\gamma_k \cdot b$ are
    regular. Hence $b \sim c$. If on the other hand $b \sim c$ holds the existence of a straight $\gamma$ follows directly from Lemma \ref{lem:15}
    and the definition of the relation $\sim$. 
  \end{proof}

  Let us consider statement \ref{item:8} from the proposition. If $\nu(b) = \nu(\tilde{b})$ there are straight paths $\gamma, \tilde{\gamma}$ and a
  configuration $c \in \ConfP(\Gamma)$ with $c_0 = v_0$ such that $b = \gamma \cdot c$ and $\tilde{b}=\tilde{\gamma} \cdot c$. Hence by concatenating the
  paths $\gamma$ and $\tilde{\gamma}^{-1}$ we get a new path $\gamma_1$ with $\tilde{b} = \gamma_1 \cdot b$. In general $\gamma_1$ is not straight,
  but if we remove subsequently all pairs $e_j = e, e_{j+1} = \overline{e}$ as described above we get another path $\gamma_2$ which is straight and
  satisfies again $\tilde{b} = \gamma_2 \cdot b$. Since $b$ and $\tilde{b}$ are regular Lemma \ref{lem:16} implies that $b \sim \tilde{b}$ holds which
  is equivalent to $\mathcal{H}_b = \mathcal{H}_{\tilde{b}}$.

  Now assume $\mathcal{H}_b = \mathcal{H}_{\tilde{b}}$. There is unique straight path $\gamma_1$ from $b_0$ to $v_0$. Propagating $b$ along $\gamma_1$ 
  leads to a configuration $c = \gamma_1 \cdot b$ (which is in general not regular). Using the inverse path $\gamma = \gamma_1^{-1}$ we get $b = 
  \gamma \cdot c$. Since $\mathcal{H}_b = \mathcal{H}_{\tilde{b}}$ we have $b \sim \tilde{b}$ and therefore there is a path $\gamma_2$ with $\tilde{b}
  = \gamma_2 \cdot b$. Concatenating $\gamma$ and $\gamma_2$ leads to a path $\tilde{\gamma}$ with $\tilde{b} = \tilde{\gamma} \cdot c$. Since we can
  subsequently remove pairs of edges $e,\overline{e}$ we can assume without loss of generality that $\tilde{\gamma}$ is straight. Hence $\nu(b) =
  \nu(\tilde{b})$. 

  Statement \ref{item:9}. For each $(n_1, \dots, n_d) \in \Bbb{N}_0^d$ there is a unique regular configuration $b$ with $b_0=v_0$ and $b(e_j)=n_j$. If
  $\gamma = ()$ is the empty path we have $\gamma \cdot b = b$. Hence $(n_1, \dots, n_d) \in \Delta$. If on the other hand $c = \gamma \cdot b$ with
  $b \in \ConfP(\Gamma)$ and a straight path $\gamma$, we have $c_j \geq b_j - 1$ for all $j=1,\dots,d$, since each $b_j$ can be decremented at most
  once. Since $b$ is regular $b_j \geq 0$. Hence $c_j \geq -1$, as stated. 

  Statement \ref{item:10}. Consider $b \in \ConfP(\Gamma)$. Since $v_0 < b_0$ the unique path from $b_0$ to $v_0$ consists only of positive
  edges. Hence, while playing the photon game along $\gamma$ the numbers $n_j$ are never incremented such that $c=\gamma\cdot b$ satisfies $c_j \geq
  0$. The statement follows with item \ref{item:9}.
\end{proof}

The map $\nu$ provides a labelling of the invariant subspaces in terms of elements of the set $\Delta$. We define
\begin{equation} \label{eq:39}
  \mathcal{H}^{(n_1, \dots, n_d)} = \mathcal{H}_b \quad\text{with}\quad (n_1,\dots,n_d) = \nu(b).
\end{equation}
According to Proposition \ref{prop:5}(\ref{item:8}) we can replace $\mathcal{H}_b$ on the right hand side of this equation by any Hilbert space
$\mathcal{H}_{\tilde{b}}$ with $\mathcal{H}_b = \mathcal{H}_{\tilde{b}}$ without changing $(n_1,\dots,n_d)$. Hence $\mathcal{H}^{(n_1,\dots,n_d)}$ is
well defined. It is also clear that all $\mathcal{H}_b$ are covered by this relabelling such that the system Hilbert space decomposes as
\begin{equation} \label{eq:36}
  \mathcal{H} = \sum_{(n_1,\dots,n_d) \in \Delta} \mathcal{H}^{(n_1,\dots,n_d)}.
\end{equation}
As a byproduct we can also introduce another relabelling -- this time of the basis vectors $\ket{b}$. For a graph $\Gamma$ the Hilbert space
$\mathcal{H}^{(n_1,.\dots,n_d)}$ contains for each vertex $v \in V(\Gamma)$ at most one basis vector $\ket{c} \in \mathcal{H}$ with $c_0 = v$. We
write 
\begin{equation}
  \ket{n_1,\dots,n_d;v} = \ket{c} \iff c_0 = v \ \text{and}\ \ket{c} \in \mathcal{H}^{(n_1,\dots,n_d)},
\end{equation}
and get a basis which is adapted to the decomposition (\ref{eq:36}).

\begin{figure}[b]
  \centering
  \bigskip
  \includegraphics[width=0.8\textwidth]{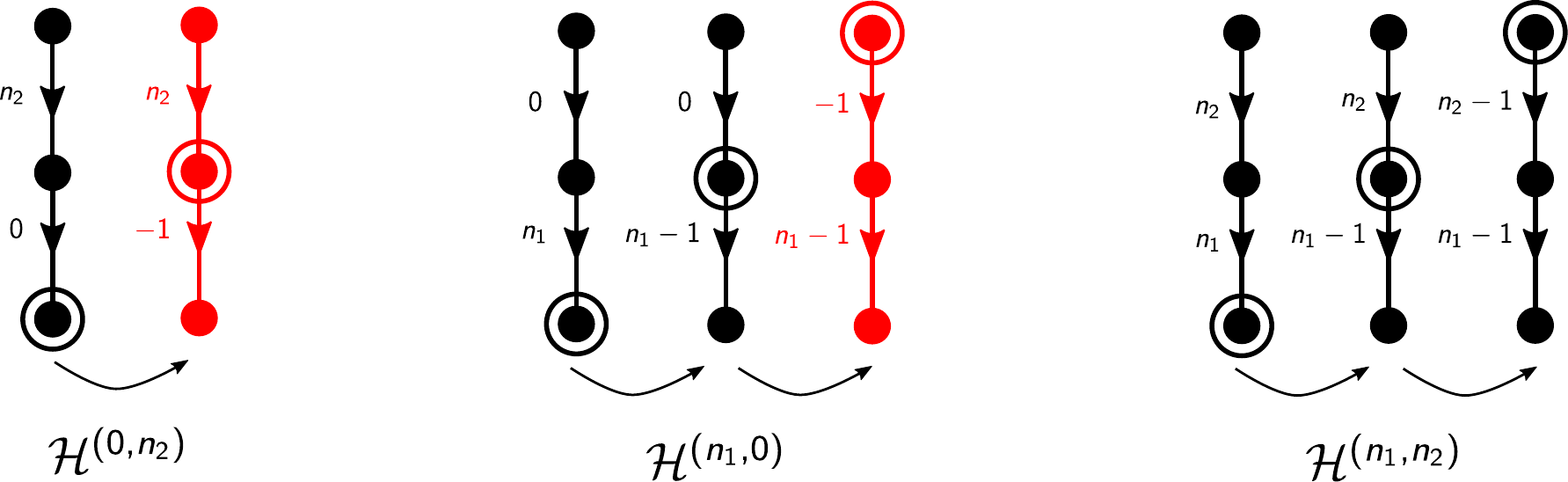}
  \caption{To generate the invariant subspaces $\mathcal{H}^{(n_1,n_2)}$ we play the photon game on the graph $\Gamma_C$ ($\Gamma_V$ and
      $\Gamma_{\Lambda}$ work similarly). If $n_1=0$ or $n_2=0$ the game fails at the second or third step. Hence the corresponding subspaces are only
      one- or two-dimensional. Only for $n_1>0$, $n_2>2$ the game is successful until the end such we get in these cases the (generic) three
      dimensional spaces. \label{fig:game-cascade}}
\end{figure}

Let us come back now to the special case of a 3-graph, i.e. $\Gamma=\Gamma_C$, $\Gamma_V$, or $\Gamma_\Lambda$. In all three cases we can apply the
procedure just introduced. For $\Gamma_C$ and $\Gamma_V$ statement \ref{item:10} of Proposition \ref{prop:5} applies, since in these cases the set of
vertices has a unique minimal element $v_0$ with $v_0 < w$ for all other vertices $w$. Hence the index set $\Delta$ coincides with $\Bbb{N}_0^2$. For
$\Gamma_\Lambda$ this is not the case and therefore the $n_1,n_2$ can become negative. If we choose $1 \in V(\Gamma_\Lambda)$ as the ``reference
vertex'' $v_0$ the index set $\Delta$ is $\Delta = \Bbb{N}_0^2 \cup \{-1\} \times \Bbb{N}$. The results for the construction in all three cases are
summarized in tables \ref{tab:1} to \ref{tab:3}, while table \ref{tab:4} describes the relation of the vectors $\ket{n_1,  n_2; v}$ to the basis
$\ket{v} \otimes \ket{n_1} \otimes \ket{n_2}$ from Equation (\ref{eq:32}). For the cascade ($\Gamma_C$) the whole procedure is demonstrated
graphically in Figure \ref{fig:game-cascade}.
\medskip

\begin{table}[h]
  \centering
  \begin{tabular}{|C{2.5cm}|C{7.5cm}|C{3cm}|}
    \hline
    \rule[-.2cm]{0pt}{.6cm}$n_1, n_2$ & basis of $\mathcal{H}^{(n_1,n_2)}$ &  $\dim(\mathcal{H}^{(n_1,n_2)})$ \\ \hline \hline
    \rule[-.2cm]{0pt}{.6cm}$n_1 > 0, n_2 > 0$ & $\ket{n_1,n_2;1}, \ket{n_1,n_2;2}, \ket{n_1, n_2; 3}$ & $3$ \\ \hline
    \rule[-.2cm]{0pt}{.6cm}$n_1>0, n_2=0$ & $\ket{n_1,n_2;1}, \ket{n_1,n_2;2} \phantom{,\ket{n_1, n_2; 3}}$  & $2$ \\ \hline
    \rule[-.2cm]{0pt}{.6cm}$n_1 =0, n_2 \geq 0$ & $\ket{n_1,n_2;1} \phantom{,\ket{n_1,n_2;2}, \ket{n_1, n_2; 3}}$ & $1$ \\ \hline
  \end{tabular}
  \caption{\label{tab:1} Invariant subspaces of the extended path algebra $\PAext(\Gamma_C)$ for the cascade.}
\end{table}

\begin{table}[h]
  \centering
  \begin{tabular}{|C{2.5cm}|C{7.5cm}|C{3cm}|}
    \hline
    \rule[-.2cm]{0pt}{.6cm}$n_1, n_2$ & basis of $\mathcal{H}^{(n_1,n_2)}$ & $\dim(\mathcal{H}^{(n_1,n_2)})$ \\ \hline \hline
    \rule[-.2cm]{0pt}{.6cm}$n_1 > 0, n_2 > 0$ & $\ket{n_1,n_2;1}, \ket{n_1,n_2;2}, \ket{n_1, n_2; 3}$ & $3$ \\ \hline
    \rule[-.2cm]{0pt}{.6cm}$n_1>0, n_2=0$ & $\ket{n_1,n_2;1}, \ket{n_1,n_2;2} \phantom{, \ket{n_1, n_2; 3}}$ & $2$ \\ \hline
    \rule[-.2cm]{0pt}{.6cm}$n_1=0, n_2>0$ & $\phantom{\ket{n_1,n_2;1},} \ket{n_1,n_2;2}, \ket{n_1, n_2; 3}$ & $2$ \\ \hline 
    \rule[-.2cm]{0pt}{.6cm}$n_1 =0, n_2 = 0$ & $\phantom{\ket{n_1,n_2;1},} \ket{n_1,n_2;2}\phantom{, \ket{n_1, n_2; 3}}$ & $1$ \\ \hline
  \end{tabular}
  \caption{\label{tab:2} Invariant subspaces of the extended path algebra $\PAext(\Gamma_V)$ for the $V$-shaped configuration.}
\end{table}

\begin{table}[h]
  \centering
  \begin{tabular}{|C{2.5cm}|C{7.5cm}|C{3cm}|}
    \hline
    \rule[-.2cm]{0pt}{.6cm}$n_1, n_2$ & basis of $\mathcal{H}^{(n_1,n_2)}$ & $\dim(\mathcal{H}^{(n_1,n_2)})$ \\ \hline \hline
    \rule[-.2cm]{0pt}{.6cm}$n_1 > 0, n_2 \geq 0$ & $\ket{n_1,n_2;1}, \ket{n_1,n_2;2}, \ket{n_1, n_2; 3}$ & $3$ \\ \hline
    \rule[-.2cm]{0pt}{.6cm}$n_1=0, n_2\geq 0$ & $\ket{n_1,n_2;1}\phantom{, \ket{n_1,n_2;2}, \ket{n_1, n_2; 3}}$ & $1$ \\ \hline
    \rule[-.2cm]{0pt}{.6cm}$n_1>0, n_2=-1$ & $\phantom{\ket{n_1,n_2;1}, \ket{n_1,n_2;2},} \ket{n_1, n_2; 3}$ & $1$ \\ \hline 
  \end{tabular}
  \caption{\label{tab:3}Invariant subspaces of the extended path algebra $\PAext(\Gamma_\Lambda)$ for the $\Lambda$-shaped configuration.}
\end{table}

\begin{table}[h]
  \centering
  \begin{tabular}{|C{2cm}|C{3.6cm}|C{3.6cm}|C{3.6cm}|}
    \hline
    \rule[-.2cm]{0pt}{.6cm} & $\Gamma_C$ & $\Gamma_V$ & $\Gamma_\Lambda$ \\ \hline \hline
    \rule[-.2cm]{0pt}{.6cm} $\ket{n_1,n_2; 1}$ & $\ket{1} \otimes \ket{n_1} \otimes \ket{n_2}$ & $\ket{1} \otimes \ket{n_1-1} \otimes \ket{n_2}$ &
    $\ket{1} \otimes \ket{n_1} \otimes \ket{n_2}$ \\ \hline
    \rule[-.2cm]{0pt}{.6cm} $\ket{n_1,n_2; 2}$ & $\ket{2} \otimes \ket{n_1-1} \otimes \ket{n_2}$ & $\ket{2} \otimes \ket{n_1} \otimes \ket{n_2}$ &
    $\ket{2} \otimes \ket{n_1-1} \otimes \ket{n_2}$ \\ \hline 
    \rule[-.2cm]{0pt}{.6cm} $\ket{n_1,n_2; 3}$ & $\ket{3} \otimes \ket{n_1-1} \otimes \ket{n_2-1}$ & $\ket{3} \otimes \ket{n_1} \otimes \ket{n_2-1}$ &
    $\ket{3} \otimes \ket{n_1-1} \otimes \ket{n_2+1}$ \\ \hline 
  \end{tabular}
  \caption{\label{tab:4} Definition of the basis $\ket{n_1, n_2; v}$ for the different graphs. Note that the vectors in the table are zero if $n_1, n_2$ or $n_1
    -1$, $n_2 -1$ are negative.}
\end{table}
\medskip

\paragraph{The path algebra}
Let us come back to the decomposition in Equation (\ref{eq:36}). It gives rise to a double sequence of projections 
\begin{equation}
  E^{(n_1,n_2)} : \mathcal{H} \rightarrow \mathcal{H}^{(n_1,n_2)},\quad \left(E^{(n_1,n_2)}\right)^* = E^{(n_1,n_2)},\quad
  \left(E^{(n_1,n_2)}\right)^2 = E^{(n_1,n_2)}
\end{equation}
Applying Definition \ref{def:2} to this family we can introduce block-diagonal operators and in analogy to Eq. (\ref{eq:33}) the set (recall from
above that $D_\Gamma$ does not depend on the choice $\Gamma=\Gamma_{\#}$):
\begin{equation}
  \PAc(\Gamma_{\#}) = \{ A : D_{\Gamma_{\#}} \rightarrow D_{\Gamma_{\#}}\,|\, A\ \text{is linear and block diagonal}\}.
\end{equation}
By definition all elements of $\PAc(\Gamma_{\#})$ are of the form $A \psi = \sum_{(n_1,n_2) \in \Delta} A^{(n_1,n_2)} \psi$ for $\psi \in
D_\Gamma$. Therefore we can introduce as in Section \ref{sec:example-1:-two} the seminorms 
\begin{equation}
  \|A\|^{(n_1,n_2)} = \left\| A^{(n_1,n_2)} \right\|, \quad (n_1,n_2) \in \Delta,
\end{equation}
where $\left\| A^{(n_1,n_2)} \right\|$ denotes the operator norm of the (bounded!) operator $A^{(n_1,n_2)} \in \mathcal{B}(\mathcal{H}^{(n_1,n_2)})$. Again it is
easy to see that $\PAc$ together with the $\|\,\cdot\,\|^{(n_1,n_2)}$ is a Frechet space and a topological *-algebra. It contains subgroups
$\mathrm{U}(\Gamma_{\#})$, $\mathrm{SU}(\Gamma_{\#})$ and Lie-subalgebras $\mathfrak{u}(\Gamma_{\#})$, $\mathfrak{su}(\Gamma_{\#})$ and
$\mathfrak{sl}(\Gamma_{\#})$ which are defined as in Eqs. (\ref{eq:34}) to (\ref{eq:35}). The relation between $\PAc(\Gamma_{\#})$ and
$\PAext(\Gamma_{\#})$ is now given by the following Proposition (cf. Lemma \ref{lem:13}):

\begin{prop} \label{prop:6}
  The smallest complex Lie-subalgebra $\mathfrak{g}$ of $\PAc(\Gamma_{\#})$ which contains all $Y^{(e)}$, $Z^{(e)}$, $e \in E_+(\Gamma_{\#})$ and is closed in
  $\PAc(\Gamma_{\#})$ is $\mathfrak{sl}(\Gamma_{\#})$. 
\end{prop}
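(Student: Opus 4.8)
I would mirror the two-level argument behind Lemma \ref{lem:13} and Proposition \ref{prop:4}, now carrying two photon-number variables instead of one. The first point is that the Lie bracket on $\PAc(\Gamma_{\#})$ is computed blockwise, hence continuous in each seminorm $A\mapsto\|A^{(n_1,n_2)}\|$, so the closure of a Lie subalgebra is again a Lie subalgebra and $\mathfrak{g}$ is just the closure of the algebraically generated complex Lie algebra $\mathfrak{g}_0=\langle Y^{(e)},Z^{(e)}:e\in E_+(\Gamma_{\#})\rangle_{\mathrm{Lie},\Bbb{C}}$. Letting $\mathcal{H}_{\leq K}=\bigoplus_{(n_1,n_2)\in\Delta_K}\mathcal{H}^{(n_1,n_2)}$ run over the increasing family of finite-dimensional $\PAext(\Gamma_{\#})$-invariant subspaces of Theorem \ref{thm:3}, with $\pi_K$ the restriction to $\mathcal{H}_{\leq K}$, and using that $\PAc(\Gamma_{\#})$ carries the product topology over its blocks while $\mathfrak{sl}(\Gamma_{\#})$ is closed, the assertion $\mathfrak{g}=\mathfrak{sl}(\Gamma_{\#})$ becomes: for every $K$, $\pi_K(\mathfrak{g}_0)=\bigoplus_{(n_1,n_2)\in\Delta_K}\mathfrak{sl}(\mathcal{H}^{(n_1,n_2)})$. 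The inclusion ``$\subset$'' is routine — the generators are block-diagonal and, on blocks of dimension $\geq2$, trace-free, both properties surviving commutators; the one- and two-dimensional blocks (which for $\Gamma_\Lambda$ include the exceptional blocks at a negative index, cf.\ Tables \ref{tab:1}--\ref{tab:3}) I would dispose of directly. So the content is the surjectivity, which I would split into two independent steps, as in \cite{KZSH}.

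\textbf{Step A (full action on one block).} Using Table \ref{tab:4} I would write out the restrictions of $Y^{(e)},Z^{(e)}$ to a fixed $\mathcal{H}^{(n_1,n_2)}$. On a generic three-dimensional block the two edges of $\Gamma_{\#}$ contribute — uniformly for $\Gamma_C,\Gamma_V,\Gamma_\Lambda$ — a pair of ``hopping'' matrices $c_i\,(E_{j,j+1}+E_{j+1,j})$ with $c_i\neq0$, together with the diagonal $Y^{(e)}$'s; computing $[Y^{(e)},Z^{(e)}]$ splits off the individual $E_{j,j+1},E_{j+1,j}$, and then $[E_{12},E_{23}]=E_{13}$ (and similarly $E_{31}$), $[E_{ij},E_{ji}]=E_{ii}-E_{jj}$ yield all of $\mathfrak{sl}(3,\Bbb{C})$. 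On the two-dimensional blocks the same commutators give $\mathfrak{sl}(2,\Bbb{C})$, and the one-dimensional blocks are trivial.

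\textbf{Step B (decoupling the blocks).} The double commutators $W_e:=[Z^{(e)},[Z^{(e)},Y^{(e)}]]\in\mathfrak{g}_0$ are diagonal in the basis $\ket{n_1,n_2;v}$ and act on $\mathcal{H}^{(n_1,n_2)}$ as a fixed diagonal matrix times the photon number $n_{i(e)}$ (e.g.\ $W_{(2,1)}$ acts on $\mathcal{H}^{(n_1,n_2)}$ as $4n_1\,\mathrm{diag}(-1,1,0)$ for $\Gamma_C$), so $W_{e_1},W_{e_2}$ give two commuting counting operators inside $\mathfrak{g}_0$. For a target block $(m_1,m_2)\in\Delta_K$ and an $M\in\mathfrak{sl}(\mathcal{H}^{(m_1,m_2)})$ produced in Step A as an iterated bracket of off-diagonal generators, $\mathrm{ad}_{W_{e_1}}$ and $\mathrm{ad}_{W_{e_2}}$ scale that bracket on $\mathcal{H}^{(n_1,n_2)}$ by amounts linear in $n_1$ and $n_2$; choosing polynomials in $\mathrm{ad}_{W_{e_1}},\mathrm{ad}_{W_{e_2}}$ by Lagrange interpolation against the finitely many values $n_1,n_2$ take on $\Delta_K$ produces an element of $\mathfrak{g}_0$ equal to $M$ on $(m_1,m_2)$ and zero on the other blocks of $\Delta_K$ (equivalently, a Goursat-type argument: having both $W_{e_i}$ and $Y^{(e_i)}$ in $\mathfrak{g}_0$ rules out any identification of two simple summands, and the differing block dimensions settle the remaining pairs). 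Summing over $(m_1,m_2)\in\Delta_K$ gives $\pi_K(\mathfrak{g}_0)=\bigoplus_{\Delta_K}\mathfrak{sl}(\mathcal{H}^{(n_1,n_2)})$, and letting $K\to\infty$ yields $\mathfrak{g}=\mathfrak{sl}(\Gamma_{\#})$.

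The hard part will be Step B. I would have to verify, case by case for $\Gamma_C,\Gamma_V,\Gamma_\Lambda$ with their different index sets $\Delta$ (Tables \ref{tab:1}--\ref{tab:3}), that the $n_{i(e)}$-dependence of the $W_e$-eigenvalues is genuinely injective over the finite ranges occurring, and that on the low-dimensional blocks — where the $W_e$ degenerate — no spurious coupling between summands survives; this is precisely where connectedness of $\Gamma_{\#}$ and the explicit shapes of the $Z^{(e)}$ come in. Step A, in contrast, is a short finite computation, essentially the same for all three graphs, confirming the remark made above that the $X^{(e)}$ play no role here and the $Y^{(e)}$ differ only by signs, so that the real structure lives in the $Z^{(e)}$.
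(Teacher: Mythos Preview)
Your approach is correct but different from the paper's. You redo the \cite{KZSH} argument from scratch in two photon variables: first generate $\mathfrak{sl}$ on each block, then decouple the blocks by Lagrange interpolation in $\mathrm{ad}_{W_{e_1}},\mathrm{ad}_{W_{e_2}}$, and finally close up over $K$. The paper instead \emph{bootstraps} from the two-level result rather than reproving it. Each edge $e_j$ of $\Gamma_{\#}$ determines an embedding $T_j:\PAc(K_2)\hookrightarrow\PAc(\Gamma_{\#})$ with $T_j(Y)=Y^{(e_j)}$, $T_j(Z)=Z^{(e_j)}$; by Lemma~\ref{lem:13} the closed Lie algebra generated by $Y^{(e_j)},Z^{(e_j)}$ already contains all $T_j(A)$ with $A\in\mathfrak{sl}(K_2)$, in particular all $\KB{\alpha}{\beta}\otimes\KB{n}{m}\otimes\Bbb{1}$ (for $j=1$) and $\KB{\gamma}{\delta}\otimes\Bbb{1}\otimes\KB{p}{q}$ (for $j=2$). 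A single commutator $[T_1(A),T_2(B)]$ of overlapping operators then produces the remaining rank-one operators $\KB{\alpha}{\delta}\otimes\KB{n}{m}\otimes\KB{p}{q}$, which span $\mathfrak{sl}(\Gamma_{\#})$. What this buys is that the entire interpolation step --- your Step~B, which you rightly flag as the hard part --- is absorbed into the black box of Lemma~\ref{lem:13}, applied once per edge; the case analysis over $\Gamma_C,\Gamma_V,\Gamma_\Lambda$ and over the low-dimensional exceptional blocks disappears. Your route, on the other hand, is more self-contained and makes the role of the two photon counters $n_1,n_2$ explicit, at the cost of the blockwise bookkeeping you anticipate.
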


\begin{proof}
  The structure of this proof is very similar to calculations we have already done in Sections \ref{sec:dynamical-group} and
  \ref{sec:full-controlability}. Therefore we will only give a sketch and leave the details as an exercise for the reader (cf. also
  \cite{HofMa}). 
  
  For the rest of the proof let us write $E_+(\Gamma_{\#}) = \{e_1, e_2\}$ with $e_1 = (1,2)$ or $(2,1)$ and $e_2=(2,3)$ or $(3,2)$. Each such $e_j$
  defines a subgraph $K_{2,j}$ isomorphic to $K_2$. Hence there is a corresponding embedding $T_j$ of $\PAc(K_2)$ into $\PAc(\Gamma_{\#})$. For operators
  $A=\KB{\alpha}{\beta} \otimes \KB{n}{m} \in \PAc(K_2)$ with $\alpha,\beta=1,2$ and $n,m \in \Bbb{N}_0$ the map $T_1$ is given by (the strongly
  converging series)
  \begin{equation}
    T_1(A) = A \otimes \Bbb{1} = \sum_{k=0}^\infty \KB{\alpha}{\beta} \otimes \KB{n}{m} \otimes \kb{k} \in \mathcal{B}\left(\Bbb{C}^3 \otimes
    \mathrm{L}^2(\Bbb{R}) \otimes \mathrm{L}^2(\Bbb{R})\right).
  \end{equation}
  Likewise we get
  \begin{equation}
    T_2(B) = \sum_{l=0}^\infty \KB{\gamma}{\delta} \otimes \kb{l} \otimes \KB{p}{q} \in \mathcal{B}\left(\Bbb{C}^3 \otimes
    \mathrm{L}^2(\Bbb{R}) \otimes \mathrm{L}^2(\Bbb{R})\right).
  \end{equation}
  for $B=\KB{\gamma}{\delta} \otimes \KB{p}{q} \in \PAc(K_2)$. Hence the Hamiltonians $Y^{(e_j)}, Z^{(e_j)}$ can be derived from $Y, Z \in \PAc(K_2)$ 
  by $T_j(Y) = Y^{(e_j)}$, $T_j(Z) = Z^{(e_j)}$. Together with Lemma \ref{lem:13} this shows that the closed, complex Liealgebra generated by
  $Y^{(e_j)}, Z^{(e_j)}$ is isomorphic to $\mathfrak{sl}(K_2)$ and therefore contains operators $T_j(A)$, $T_j(B)$ with $A, B$ from above and in
  $\mathfrak{sl}(K_2)$. Calculating commutators of the form
  \begin{equation}
    [T_1(A), T_2(B)] = \bigl(\delta_{\beta\gamma} \KB{\alpha}{\delta} - \delta_{\alpha\delta} \KB{\gamma}{\beta}\bigr) \otimes \KB{n}{m} \otimes
    \KB{p}{q} 
  \end{equation}
  leads to the result.
\end{proof}

Since $Y^{(e)}, Z^{(e)} \in \PAc(\Gamma_{\#})$ for all $e \in E_+(\Gamma_{\#})$ and due to the properties of the exponential map on the (formally) selfadjoint
elements (cf. Proposition \ref{prop:3}) of $\PAc(\Gamma_{\#})$ we immediately get the following two corollaries.

\begin{kor}
  The extended path algebra $\PAext(\Gamma_{\#})$ is dense in $\PAc(\Gamma_{\#})$.
\end{kor}

\begin{kor}
  The dynamical group $\mathcal{G}\bigl(Y^{(e)}, Z^{(e)}; e \in E_+(\Gamma_{\#})\bigr)$ coincides with $\mathrm{SU}(\Gamma_{\#})$.
\end{kor}

These results represent the same level of structure as Propositions \ref{prop:4} and \ref{prop:2} do for two-level systems. Another similarity is that
we can reinterpret the results by introducing operators $Q_j$, $j=1,2$ with the $\mathcal{H}^{(n_1,n_2)}$ as eigenspaces and $n_j$, $j=1,2$ as the
corresponding eigenvalues. These operators define a joint symmetry of the Hamiltonians $Y^{(e)}, Z^{(e)}$, and therefore we can introduce
$\mathrm{U}(\Gamma_{\#})$ alternatively as the group of all unitaries commuting with $Q_1, Q_2$. In analogy to \cite{KZSH} we could write therefore
$\mathrm{U}(Q_1,Q_2)$ rather than $\mathrm{U}(\Gamma_{\#})$. The problem with this point of view is that the enumeration we have used for the Hilbert
spaces $\mathcal{H}^{(n_1,n_2)}$ and the $Q_1, Q_2$ is up to a certain degree arbitrary.  E.g. by using an enumeration
of $\Delta$ in terms of positive integers we could replace $Q_1, Q_2$ by just one operator $\tilde{Q}$. Hence the description of the model in terms
constants of motion like $Q_1, Q_2$ (as introduced in \cite{KZSH}) should be regarded as a description in terms of \emph{coordinates} while the path
algebra delivers the \emph{invariant} picture.

\paragraph{State preparation}
The last topic we want to treat in this section is the transformation of an arbitrary pure state $\psi \in \mathcal{H}$ into the ground state by a
sequence of unitaries $U_j$ which are either from $\mathrm{SU}(\Gamma_{\#})$ or of the form $U_j = \exp(i t X^{(e)})$. Our first step is to introduce
some notation. We write: 
\begin{equation} \label{eq:43}
  \psi^{(n_1,n_2)} = E^{(n_1,n_2)} \psi\quad\text{hence}\quad \psi = \sum_{(n_1,n_2) \in \Delta} \psi^{(n_1,n_2)}.
\end{equation}
The vectors $\psi^{(n_1,n_2)} \in \mathcal{H}^{(n_2,n_2)}$ can be expanded into the basis $\ket{n_1,n_2;v}$, which leads to
\begin{equation}
  \psi^{(n_1,n_2)} = \sum_{v=1}^3 \psi^{(n_1,n_2)}_v \ket{n_1,n_2;v} \quad \text{hence}\quad \psi = \sum_{(n_1,n_2)\in \Delta} \sum_{v=1}^3
  \psi^{(n_1,n_2)}_v \ket{n_1,n_2;v}.
\end{equation}
Note again that some of the vectors $\ket{n_1,n_2;v}$ can be zero if the corresponding Hilbert space $\mathcal{H}^{(n_1,n_2)}$ is not
3-dimensional. This convention saves us from some otherwise cumbersome case distinctions. As another notational convention the state of the system
after each discrete timestep $j$ will be denoted by $\psi$ and not by $\psi_j$. This is another choice we have made to keep the notation simple and
not too confusing. 

Now let us choose an arbitrary $\epsilon>0$ and $N,M \in \Bbb{N}$ such that
\begin{equation} \label{eq:45}
  \left\| \psi^{[N,M]} - \psi \right\| < \epsilon \quad\text{with}\quad \psi^{[N,M]} = \sum_{n \leq N, m \leq M} \psi^{(n,m)}. 
\end{equation}
If we want to transform $\psi$ into any state $\ket{n,m;v}$ with $n \leq N$, $m \leq M$ and if we are happy with an error smaller than $\epsilon$, we
only have to take the components $\psi^{(n_1,n_2)}$ with $n_1 \leq N$, $n_2 \leq M$ into account. Without loss of generality we will therefore assume
that $\psi=\psi^{[N,M]}$ holds.

\begin{figure}[h]
  \centering
  \bigskip
  \includegraphics[width=0.8\textwidth]{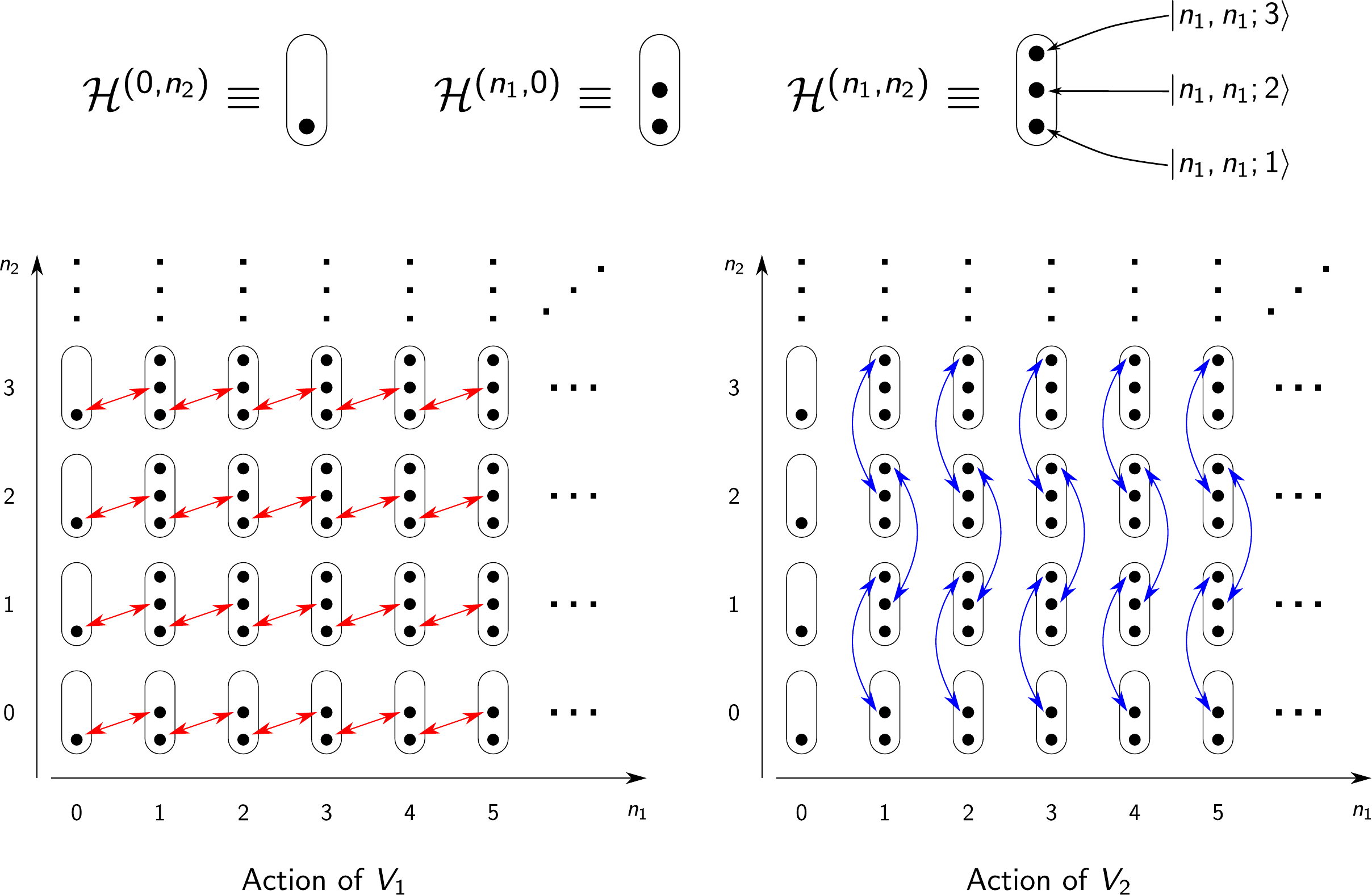}
  \caption{Action of control unitaries. As in Fig. \ref{fig:2ddecomp}, the boxes with one two or three dots stand for the subspaces
    $\mathcal{H}^{(n_1,n_2)}$, while the dots represent the basis vectors. The unitaries $V_1$, $V_2$ exchange pairs of vectors as indicated by the
    red and blue arrows. In both cases all pairs have to be exchanged simultaneously. The unitaries from $\mathrm{SU}(\Gamma_C)$ on the other hand can
  act on each box individually (but they can not leave the boxes).  \label{fig:cascade-action}}
\end{figure}

With this prerequisites we will show for the cascade (i.e. $\Gamma=\Gamma_C$) how $\psi$ can be transformed into $\ket{0,0;1} \in
\mathcal{H}^{(0,0)}$, by using unitaries $U \in \mathrm{SU}(\Gamma_C)$ and
\begin{equation} \label{eq:46}
  V_1 =\exp\left(\frac{\pi}{2} X^{(1,2)}\right) = i X^{(1,2)} \quad V_2 =\exp\left(\frac{\pi}{2} X^{(2,3)}\right) = i X^{(2,3)}.
\end{equation}
Please check yourself that the last equation holds with $X^{(\alpha,\beta)}$ from Eq. (\ref{eq:37}). The changes which are necessary to cover the
cases $\Gamma_V$ and $\Gamma_\Lambda$ are sketched below. To understand the following procedure it is useful to have a look on
Fig. \ref{fig:cascade-action} and to keep the contents of tables \ref{tab:1} and \ref{tab:4} in mind.

\begin{enumerate}
\item \label{item:12}
  The first step is to map the components $\psi^{(0,n_2)}$ in the one-dimensional subspaces to zero. This is done by applying a unitary $U \in
  \mathrm{SU}(\Gamma_C)$ which rotates all components $\psi^{(1,n_2)}$ with $n_2 > 0$ towards $\ket{1,n_2;3}$ such that after the operation
  $\psi_1^{(1,n_2)}$ and $\psi_2^{(1,n_2)}$ are zero. Note that this is possible since $\mathrm{SU}(\Gamma_C)$ contains all block-diagonal unitaries with
  blocks of determinant one. Then we apply $V_1$ which exchanges (up to a factor $i$) the vectors $\ket{0,n_2;1}$ with $\ket{1,n_2;2}$. Hence for
  all $n_2 > 0$ the components $\psi^{(0,n_2)}$ become zero, as stated.
\item \label{item:11}
  For each $n_1 > 0$ we can decrement the biggest index $n_2=M$ with $\psi^{(n_1,n_2)} \neq 0$ by one. Again, we use a two-step procedure. We apply a
  $U \in \mathrm{SU}(\Gamma_C)$ such that all $\psi^{(n_1,M)}$ are rotated towards $\ket{n_1,M;3}$ and all $\psi^{(n_1,M-1)}$ towards
  $\ket{n_1,M-1;1}$. Applying $V_2$ exchanges (again up to a factor $i$) the vectors $\ket{n_1,M;3}$ with $\ket{n_1,M-1;2}$
\item \label{item:13}
  We repeat this procedure until the only nonzero $\psi^{(n_1,n_2)}$ are those with $n_2=0$.
\item \label{item:14}
  According to table \ref{tab:1} the subspaces $\mathcal{H}^{(n_1,0)}$ with $n_1 > 0$ are two-dimensional, while $\mathcal{H}^{(0,0)}$ is
  one-dimensional. This is exactly the scenario studied in the previous section. Hence we can apply the procedure already used in the two level case
  with $X^{(1,2)}$ as the ``symmetry breaking'' Hamiltonian; cf. Section \ref{sec:example-1:-two}. This maps the vector $\psi$ eventually to
  $\ket{0,0,1}$. 
\end{enumerate}

We see that the ``exceptional'' subspaces, i.e. those with dimension one or two, need a special treatment. Therefore the procedure is easily adapted
to $\Gamma_V$ and $\Gamma_\Lambda$ by changing only the treatment of these exceptions. For $\Gamma=\Gamma_V$ the exceptions arise for $n_1=0$
and $n_2=0$. The difference to $\Gamma_C$ is that they both lead to two-dimensional subspaces (cf. table \ref{tab:2}). Hence we can skip step
\ref{item:12} and start immediately with step \ref{item:11}. As a result we map $\psi$ to $\ket{0,0;2}$. Note that for $\Gamma_V$ the vertex $v_0=2$
is -- as the global minimum -- the reference vertex (and not $v_0=1$ as for the cascade). For $\Gamma=\Gamma_\Lambda$ we choose again $v_0 = 1$ as
the reference vertex and map $\psi$ to $\ket{0,0;1}$. The exceptional subspaces are now both one-dimensional; cf. table \ref{tab:3}. Hence the case
$n_1>0$, $n_2=-1$ needs a special treatment as well as $n_1=0$, $n_2 \geq 0$. This is done as in step \ref{item:12} above: We use a combination of $U
\in \mathrm{SU}(\Gamma_\Lambda)$ and $V_2$ to map the components $\psi^{(n_1,-1)}$ to zero. In Step \ref{item:13} we continue until only the
components $\psi^{(n_1,0)}$ are non-zero and in step \ref{item:14}  we take care that $\psi^{(n_1,0)}_1 = 0$ holds all the time. This restricts the
procedure to the two-dimensional subspaces spanned by $\ket{n_1,0;2}$ and $\ket{n_1;0;3}$. They can be treated again in the same way as a two-level
system.

\section{Example 3: The $\Delta$ configuration}
\label{sec:example-3:-delta}

\begin{wrapfigure}{l}{0.3\textwidth}
  \begin{center}
    \includegraphics[width=0.2\textwidth]{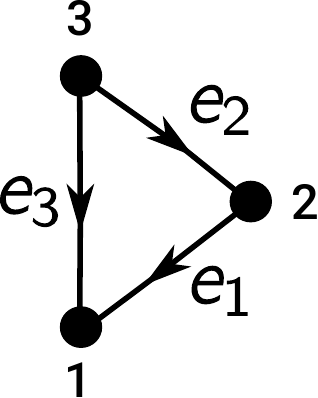}
  \end{center}
  \caption{The graph $\Gamma_\Delta$.\label{fig:Delta}}
\end{wrapfigure}
Finally, let us have a look at the 3-graph we have excluded in the last section: The $\Delta$-configuration $\Gamma_\Delta$ shown in
Fig. \ref{fig:Delta}. The set of vertices is (as before) $V(\Gamma_\Delta) = \{1,2,3\}$ and the edges are given by $E_+(\Gamma_\Delta)=\{ (2,1), (3,2),
(3,1)\}$ and $E(\Gamma_\Delta)$ containing in addition the negative edges $(b,a)$ for $(a,b) \in E_+(\Gamma_\Delta)$. For later reference let us also
introduce the enumeration 
\begin{equation} \label{eq:42}
  e_1 = (2,1),\quad e_2=(3,2),\quad e_3=(3,1).
\end{equation}
In contrast to $\Gamma_C, \Gamma_V$ and $\Gamma_\Lambda$ the graph $\Gamma_\Delta$ is not a tree but contains a cycle. This renders some of the
results from the previous section invalid.

At a first glance the differences between $\Gamma_\Delta$ and the tree graphs are not that visible, since the basic setups look quite similar. For
$\Gamma_\Delta$ the system Hilbert space is $\mathcal{H} = \Bbb{C}^3 \otimes \mathrm{L}^2(\Bbb{R})^{\otimes 3}$ rather than $\Bbb{C}^3 \otimes
\mathrm{L}^2(\Bbb{R})^{\otimes 2}$. This change in the number of tensor factors affects the definition of control Hamiltonians a bit. We have for
$(\alpha,\beta) \in E_+(\Gamma_\Delta)$
\begin{equation} 
  X^{(\alpha,\beta)} = \bigl(\KB{\alpha}{\beta} - \KB{\beta}{\alpha}\bigr) \otimes \Bbb{1}^{\otimes 3},\quad Y^{(\alpha,\beta)} = \bigl(\kb{\alpha} -
  \kb{\beta}\bigr) \otimes \Bbb{1}^{\otimes 3}
\end{equation}
and
\begin{align}
  Z^{(2,1)} &= \KB{2}{1} \otimes a^* \otimes \Bbb{1} \otimes \Bbb{1} + \KB{1}{2} \otimes a \otimes \Bbb{1} \otimes \Bbb{1}\\ 
  Z^{(3,2)} &= \KB{3}{2} \otimes \Bbb{1} \otimes a^* \otimes \Bbb{1}  + \KB{2}{3} \otimes \Bbb{1} \otimes a \otimes \Bbb{1}\\
  Z^{(3,1)} &= \KB{3}{1} \otimes \Bbb{1} \otimes \Bbb{1} \otimes a^* + \KB{2}{3} \otimes \Bbb{1} \otimes \Bbb{1} \otimes a 
\end{align}
With these definition the expressions for the drift Hamiltonian (\ref{eq:40}), and the control problems with (\ref{eq:31}) and without drift
(\ref{eq:41}) can be carried over from the last section without any changes.

Substantial differences arise in the structure of the path algebra $\PA(\Gamma_\Delta)$. As before the task is to determine the minimal invariant
subspaces $\mathcal{H}_\beta \subset \mathcal{H}$, and to label them in an unambiguous way. To do this, we will use again the photon game, introduced
in the previous section. The first step is to identify the straight paths in the graph $\Gamma_\Delta$. Hence, assume we are sitting in the vertex
$1 \in V(\Gamma_\Delta)$. To walk along a straight path on the graph $\Gamma$ we have to decide whether we want to move clockwise or
counter-clockwise. If we choose the latter we reach vertex $2 \in V(\Gamma_\Delta)$. Unless we want to stay here, there is no choice left
where to go: Since the path should be straight we can not go back. The only option is to proceed in counter-clockwise direction to reach vertex
$3$. In this way we have to proceed until we reach the end of our walk. Similar reasoning applies if our first step goes into clockwise direction; cf.
Figure \ref{fig:deltagame}. The example shows that the set of straight path is parametrized by three quantities: the start vertex, the direction
(clockwise or counter clockwise) and the length of the path.

Let us apply this to the photon game. We start with regular configuration $b \in \ConfP(\Gamma_\Delta)$ and rewrite it as a 4-tuple $b=(b_0, n_1, n_2,
n_3)$ with $n_j = \underline{b}(e_j)$; cf. Eq. (\ref{eq:42}). For simplicity also assume that $b_0=1$. The other cases are easily adapted. If $n_1 >0$
and $n_2 > 0$, we can move counter-clockwise and decrement the numbers $n_1$, $n_2$ while we pass the edges $\overline{e_1}$, $\overline{e_2}$. The
last number $n_3$ is incremented since our move along $e_3$ respect the edge's orientation. In this way we can perform $N = \min{n_1, n_2}$ full
cycles. After that either $n_1$ or $n_2$ become zero. If $n_1=0$ our walk ends at vertex $1$. If $n_1 > 0$ and $n_2=0$ we end at vertex $2$. If
initially $n_3 > 0$ holds we can move in clockwise direction, too. Since $e_3$ is passed against it orientation we have to decrement $e_3$ (and
increment $n_1,n_2$). After $n_3$ full cycles we end at vertex $1$ with $n_3=0$; cf. Fig. \ref{fig:deltagame}. This simple reasoning shows that
following statement holds:

\begin{figure}[t]
  \centering
  \includegraphics[width=0.8\textwidth]{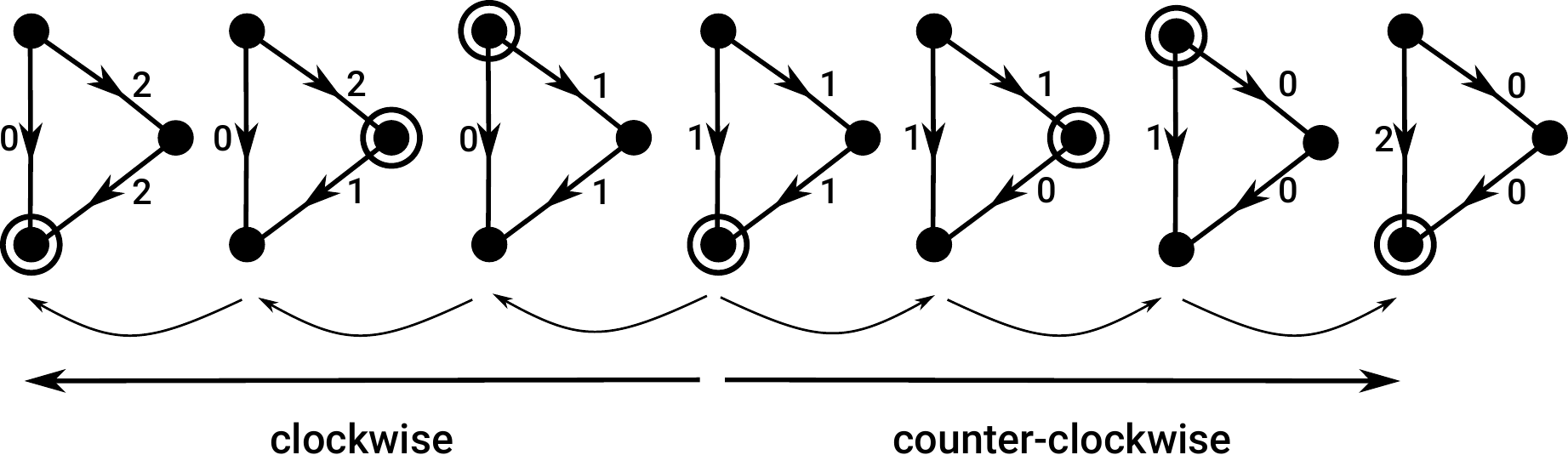}
  \caption{Playing the photon game on the graph $\Gamma_\Delta$. \label{fig:deltagame}} 
\end{figure}

\begin{prop}
  Consider a regular configuration $b \in \ConfP(\Gamma_\Delta)$ and the corresponding minimal, invariant subspace $\mathcal{H}_b$ of
  $\PA(\Gamma_\Delta)$. There is exactly one $c \in \ConfP(\Gamma_\Delta)$ with $c_0=1$ and $\underline{c}(e_3)=0$. 
\end{prop}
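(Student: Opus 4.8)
The plan is to combine the photon-game description of the equivalence classes with the explicit list of straight paths in $\Gamma_\Delta$ obtained in the paragraph preceding the proposition. Write a regular configuration as a $4$-tuple $b=(b_0,n_1,n_2,n_3)$ with $n_j=\underline b(e_j)$, and recall from Lemma \ref{lem:14} and the proof of Theorem \ref{thm:3} that $\ket c\in\mathcal H_b$ with $\ket c\neq 0$ holds precisely when $c\sim b$, i.e.\ $c=\gamma\cdot b$ for some path $\gamma$ all of whose subpaths $\gamma_k\cdot b$ are regular; moreover by Lemma \ref{lem:15} one may always take $\gamma$ \emph{straight}, since straightening a path by deleting backtracking pairs $e_j,\overline{e_j}$ leaves $\gamma\cdot b$ unchanged and preserves regularity of all subpaths (because $\overline e\cdot(e\cdot d)=d$ whenever $e\cdot d$ is defined and regular). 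The discussion before the proposition shows that in $\Gamma_\Delta$ a straight path is forced once its first edge is fixed, so the only closed straight paths based at the vertex $1$ are the empty path and, for each $k\ge 1$, the $k$-fold counter-clockwise cycle $(\overline{e_1},\overline{e_2},e_3)^k$, with net effect $(n_1,n_2,n_3)\mapsto(n_1-k,\,n_2-k,\,n_3+k)$, and the $k$-fold clockwise cycle $(\overline{e_3},e_2,e_1)^k$, with net effect $(n_1,n_2,n_3)\mapsto(n_1+k,\,n_2+k,\,n_3-k)$.

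For existence I would first bring the current level to $1$: from $b_0=2$ the positive edge $e_1$, and from $b_0=3$ the positive edge $e_3$, is always a regular move (only a photon number gets incremented), so one may assume $b=(1,m_1,m_2,m_3)$ with all $m_j\ge 0$. Applying the clockwise cycle $(\overline{e_3},e_2,e_1)$ exactly $m_3$ times then produces $c=(1,\,m_1+m_3,\,m_2+m_3,\,0)$; at each of the intermediate steps the $e_3$-entry decreases by one and stays nonnegative while the other two entries only grow, so every intermediate configuration is regular. Hence $c\in\ConfP(\Gamma_\Delta)$, $c_0=1$, $\underline c(e_3)=0$, and $c\sim b$, which settles existence.

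For uniqueness suppose $c,c'\in\ConfP(\Gamma_\Delta)$ both satisfy $c_0=c_0'=1$, $\underline c(e_3)=\underline{c'}(e_3)=0$, and both lie in $\mathcal H_b$. Then $c\sim c'$, so by the first paragraph $c'=\gamma\cdot c$ for a straight path $\gamma$ acting regularly on $c$; since it starts and ends at $1$, $\gamma$ is one of the closed straight paths listed above. If $\gamma$ consisted of $k\ge 1$ counter-clockwise cycles, then $\underline{c'}(e_3)=0+k>0$, contradicting $\underline{c'}(e_3)=0$. If $\gamma$ consisted of $k\ge 1$ clockwise cycles, its first edge $\overline{e_3}$ would decrement $\underline c(e_3)=0$ to $-1$, so $\gamma_1\cdot c$ would already fail to be regular, contradicting that $\gamma$ acts regularly. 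Hence $\gamma$ must be empty and $c=c'$.

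The only genuinely delicate ingredient is the structural claim that straightness in the bidirected triangle forces a path to be a pure power of one of the two elementary $3$-cycles, so that a closed straight path based at $1$ is determined by its orientation and length alone; I expect this, together with the regularity bookkeeping along those two cycles, to be the real crux, with everything else a routine check. As a byproduct the canonical representative $c=(1,a_1,a_2,0)$ yields a labelling $\mathcal H_b=\mathcal H^{(a_1,a_2)}$ of the minimal invariant subspaces of $\PA(\Gamma_\Delta)$, the $\Gamma_\Delta$-analogue of the map $\nu$ from Section \ref{sec:example-2:-three}, with reference vertex $v_0=1$ and the cycle edge $e_3$ normalised to zero.
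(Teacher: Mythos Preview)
Your argument is correct and follows essentially the same route as the paper, which does not give a separate proof but presents the proposition as a consequence of the informal photon-game discussion preceding it. Your existence step (first move the current level to vertex $1$ along a positive edge, then wind clockwise $m_3$ times to kill $\underline c(e_3)$) is exactly the paper's ``move in clockwise direction until $n_3=0$'' spelled out carefully, and your uniqueness argument---classifying the closed straight paths at vertex $1$ as powers of one of the two elementary $3$-cycles and ruling each nontrivial case out---makes explicit a point the paper leaves entirely to the reader.
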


Hence, a complete, unambiguous labelling of invariant subspaces of $\PA(\Gamma_\Delta)$ is given by the rule
\begin{equation}
  \mathcal{H}^{(n_1,n_2)} = \mathcal{H}_c\quad \text{with}\quad c=(1,n_1,n_2,0)\quad (n_1,n_2) \in \Bbb{N}_0^2.
\end{equation}
The structure of the $\mathcal{H}^{(n_1,n_2)}$ can also be deduced easily from the discussion of straight paths given above. We just have to start
with the configuration $(1,n_1,n_2,0)$ and move counter-clockwise around the graph until $n_1$ or $n_2$ reach zero. This shows that the dimension of
$\mathcal{H}^{(n_1,n_2)}$ is given by
\begin{equation}
  \dim\left(\mathcal{H}^{(n_1,n_2)}\right) = \begin{cases} 3L+1 & \text{if $n_1 \leq n_2$}\\ 3L+2 & \text{if $n_1 > n_2$} \end{cases} \quad
  \text{with}\quad L=\min(n_1,n_2).
\end{equation}
We can also find a relabelling of the canonical basis, which is adapted to the decomposition of $\mathcal{H}$ into a direct sum of the
$\mathcal{H}^{(n_1,n_2)}$. In the following we write $\ket{b_0,n_1,n_2,n_2}$ for $\ket{b}$, if $b \in \ConfP(\Gamma_\Delta)$ satisfies
$\underline{b}(e_j)=n_j$ for $j=1,2,3$.
\begin{equation} \label{eq:44}
  \ket{n_1,n_2; m, \nu} = \begin{cases} \ket{1,n_1-m, n_2-m, m} & \text{if $\nu=1$}\\ \ket{2, n_1-m-1,n_2-m} & \text{if $\nu=2$}\\ \ket{3,
      n_1-m-1,n_2-m-1} & \text{if $\nu=3$}.
  \end{cases}
\end{equation}
To simplify notations we define $\ket{n_1,n_2,m,\nu}=0$ whenever one of the quantities $n_1-m$, $n_2-m$, $n_1-m-1$ or $n_2-m-1$ becomes negative. This
saves us from giving precise (and cumbersome) index ranges whenever we expand a vector in this basis.

This analysis already reveals the basic difference between $\Gamma_\Delta$ and the tree-graphs from the previous section: In the latter case the
dimension of the Hilbert spaces $\mathcal{H}^{(n_1,n_2)}$ is bounded by $3$, while for $\Gamma_\Delta$ it can be arbitrarily large. Note, however,
that in both cases the chosen labelling of the invariant subspaces only involves a pair $(n_1,n_2) \in \Bbb{N}_0^2$. 

From here on we can proceed as in the last two sections. The Hilbert space decomposes as
\begin{equation}
  \mathcal{H} = \bigoplus_{(n_1,n_2) \in \Bbb{N}_0^2} \mathcal{H}^{(n_1,n_2)}\quad\text{with projections}\ E^{(n_1,n_2)} : \mathcal{H} \rightarrow
  \mathcal{H}^{(n_1,n_2)} 
\end{equation}
and we can define the corresponding algebra of block-diagonal operators,
\begin{equation}
  \PAc(\Gamma_{\Delta}) = \{ A : D_{\Gamma_{\Delta}} \rightarrow D_{\Gamma_{\Delta}}\,|\, A\ \text{is linear and block diagonal}\},
\end{equation}
where $D_{\Gamma_\Delta}$ is the domain we have defined in Eq. (\ref{eq:23}). $\PAc(\Gamma_\Delta)$ becomes a Frechet space if we equip it with the
seminorms
\begin{equation}
  \|A\|^{(n_1,n_2)} = \left\| A^{(n_1,n_2)} \right\|, \quad (n_1,n_2) \in \Bbb{N}_0^2.
\end{equation}
As in Eq. (\ref{eq:34}) to (\ref{eq:35}) we can define the subgroups and Lie-subalgebras $\mathrm{U}(\Gamma_\Delta)$, $\mathrm{SU}(\Gamma_\Delta)$,
$\mathfrak{u}(\Gamma_\Delta)$, $\mathfrak{su}(\Gamma_\Delta)$ and $\mathfrak{sl}(\Gamma_\Delta)$. With all this notations Prop. \ref{prop:6} from
Section \ref{sec:example-2:-three} carries over without any change:

\begin{prop} \label{prop:7}
  The smallest complex Lie-subalgebra $\mathfrak{g}$ of $\PAc(\Gamma_{\Delta})$ which contains all $Y^{(e)}$, $Z^{(e)}$, $e \in E_+(\Gamma_{\Delta})$ and is closed in
  $\PAc(\Gamma_{\Delta})$ is $\mathfrak{sl}(\Gamma_{\Delta})$. 
\end{prop}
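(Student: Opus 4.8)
The plan is to transcribe the proof of Proposition \ref{prop:6} to the cyclic graph $\Gamma_\Delta$. Exactly as there, each positive edge $e_j$, $j=1,2,3$, spans a subgraph isomorphic to $K_2$ and hence induces an embedding $T_j:\PAc(K_2)\to\PAc(\Gamma_\Delta)$, obtained by letting an operator act on $\mathcal{H}^A_{e_j}\otimes\mathcal{H}^C_{e_j}$ (extended by $0$ on $(\mathcal{H}^A_{e_j})^\perp$) and tensoring with the identity on the two remaining modes; one checks $T_j(Y)=\pm Y^{(e_j)}$ and $T_j(Z)=Z^{(e_j)}$ up to a harmless sign. By Lemma \ref{lem:13}, the smallest closed complex Lie subalgebra of $\PAc(\Gamma_\Delta)$ containing $Y^{(e_j)},Z^{(e_j)}$ for a fixed $j$ contains $T_j(\mathfrak{sl}(K_2))$; concretely, inside each block $\mathcal{H}^{(n_1,n_2)}$ this supplies every operator which acts as an arbitrary traceless $2\times 2$ matrix on each two-dimensional subspace linked by $e_j$ and as $0$ on the orthogonal complement.

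First I would make the combinatorics of these two-dimensional subspaces explicit. Using the basis $\ket{n_1,n_2;m,\nu}$ from Eq. (\ref{eq:44}) and the photon-game description of the cyclic traversal of $\Gamma_\Delta$, the basis of a block $\mathcal{H}^{(n_1,n_2)}$ orders naturally along the cycle into a chain $v_0,v_1,\dots,v_{d-1}$ (with $d=3L+1$ or $3L+2$) in which the consecutive pairs $(v_{3k},v_{3k+1})$, $(v_{3k+1},v_{3k+2})$, $(v_{3k+2},v_{3k+3})$ are precisely the two-dimensional subspaces associated with $e_1$, $e_2$, $e_3$ respectively. Hence, jointly, the three embeddings $T_1,T_2,T_3$ give access inside each block to the full ``Chevalley package'': for every $i$ the operators $\KB{v_i}{v_{i+1}}$, $\KB{v_{i+1}}{v_i}$ and $\kb{v_i}-\kb{v_{i+1}}$. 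This is the point at which the cyclic structure of $\Gamma_\Delta$ genuinely matters: for a tree graph two colours of nearest-neighbour couplings already suffice because every block has length $\le 3$, whereas here the blocks are arbitrarily long and it is exactly the third edge $e_3$ — the one closing the cycle — that connects the ``winding levels'' $m$ and $m+1$ and makes the chain connected.

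The main step, then, is the standard fact that iterated commutators of these nearest-neighbour $\mathfrak{sl}_2$-triples generate all of $\mathfrak{sl}(\mathcal{H}^{(n_1,n_2)})$: from $[\KB{v_i}{v_{i+1}},\KB{v_{i+1}}{v_{i+2}}]=\KB{v_i}{v_{i+2}}$ and its iterates one reaches every matrix unit $\KB{v_i}{v_j}$, and these together with the diagonal differences span $\mathfrak{sl}(\mathcal{H}^{(n_1,n_2)})$. To pass from ``each block individually'' to ``finitely many blocks simultaneously'' (which is what is needed to conclude density) I would invoke the same flexibility as in Lemma \ref{lem:13}: the generated closed Lie algebra contains, for every finite $F\subset\Bbb{N}_0^2$ and every choice of traceless $A^{(n_1,n_2)}$ on $\mathcal{H}^{(n_1,n_2)}$, $(n_1,n_2)\in F$, an element whose blocks over $F$ are exactly the $A^{(n_1,n_2)}$. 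The one technical wrinkle is that a single $K_2$-block of an edge $e_j$ still corresponds to infinitely many $\Gamma_\Delta$-blocks (the other two modes remain free), so localisation in a finite set of $\Gamma_\Delta$-blocks genuinely requires combining all three $T_j$'s \emph{before} taking commutators, after which one truncates using Lemma \ref{lem:13} applied inside each $K_2$-factor.

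Finally, since $\mathfrak{sl}(\Gamma_\Delta)$ is closed in the Fréchet topology of $\PAc(\Gamma_\Delta)$ — each seminorm $\|\cdot\|^{(n_1,n_2)}$ controls the trace of the corresponding block — and contains the generators $Y^{(e)},Z^{(e)}$ up to the usual adjustments on the exceptional one- and two-dimensional blocks, which are handled exactly as the exceptional subspaces in Section \ref{sec:example-2:-three}, while conversely every element of $\mathfrak{sl}(\Gamma_\Delta)$ is a Fréchet limit of elements supported on finitely many blocks, the closure of the generated Lie algebra is precisely $\mathfrak{sl}(\Gamma_\Delta)$. The main obstacle is the second paragraph's step together with its simultaneous-block refinement: making the chain/cycle bookkeeping precise and verifying that the three edges really cover every nearest-neighbour pair in every block; everything else is a routine transcription of the two-level and tree-level arguments, which is why (as for Proposition \ref{prop:6}) only a sketch is needed.
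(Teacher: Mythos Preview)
Your proposal is correct and follows essentially the same strategy the paper indicates: embed three copies of $\PAc(K_2)$ via the maps $T_j$ associated to the edges $e_1,e_2,e_3$, invoke Lemma \ref{lem:13} to get $T_j(\mathfrak{sl}(K_2))$ inside the closed Lie algebra, and then compute commutators of overlapping operators to generate all of $\mathfrak{sl}(\Gamma_\Delta)$. Your block-internal chain analysis (ordering the basis of $\mathcal{H}^{(n_1,n_2)}$ along the cycle so that consecutive pairs are exactly the two-dimensional $e_j$-subspaces, then propagating matrix units via $[\KB{v_i}{v_{i+1}},\KB{v_{i+1}}{v_{i+2}}]=\KB{v_i}{v_{i+2}}$) is a natural and correct way to make explicit what the paper leaves to the reference \cite{HofMa}, and your observation that it is precisely the third edge $e_3$ which links winding levels $m$ and $m+1$ is the key structural point distinguishing $\Gamma_\Delta$ from the tree cases.
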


The proof is done in the same way as in Prop. \ref{prop:6}: We embed three copies of $\PAc(K_2)$ into $\PAc(\Gamma_\Delta)$ and calculate commutators
of overlapping operators. The details can be found in \cite{HofMa}. From Prop. \ref{prop:7} we immediately get the following corollary:

\begin{kor}
  $\PA(\Gamma_\Delta)$ is dense in $\PAc(\Gamma_\Delta)$.
\end{kor}

The only thing left is the state preparation. At a first glance we expect big differences to the tree graphs in the last section. A little bit
surprisingly, however, we can proceed almost without any change. Compared to the treatment of the cascade $\Gamma_C$ only one extra step is
needed. Let us first adopt the notations from Sect. \ref{sec:example-2:-three}. As in Eq. (\ref{eq:43}) we decompose $\psi \in \mathcal{H}$ as
\begin{equation}
  \psi = \sum_{(n_1,n_2) \in \Bbb{N}_0^2} \psi^{(n_1,n_2)}\quad \text{with}\quad \psi^{(n_1,n_2)} = E^{(n_1,n_2)} \psi
\end{equation}
The vectors $\psi^{(n_1,n_2)}$ can decomposed into the basis $\ket{n_1,n_2;m,\nu}$ as
\begin{equation}
  \psi^{(n_1,n_2)} = \sum_{m =0}^L \sum_{\nu=1}^3 \psi^{n_1,n_2}_{m,\nu} \ket{n_1,n_2;m,\nu},\quad\text{with}\quad L = \min(n_1,n_2).
\end{equation}
The only difference to Sect. \ref{sec:example-2:-three} is the additional parameter $m$. Also recall the remark about index ranges from above: the
$\ket{n_1,n_2;L,\nu}$ are zero, whenever they can not be mapped to a regular configuration via Eq. (\ref{eq:44}). Now we choose $N,M \in \Bbb{N}$,
define the cut-off vector $\psi^{[N,M]}$ as in Eq. (\ref{eq:45}) and assume $\psi = \psi^{[N,M]}$.

Now, the task is to map $\psi$ to the ground state $\ket{0,0;0,0}$ by applying untiaries from $\mathrm{SU}(\Gamma_\Delta)$ and
\begin{equation} 
  V_1 =\exp\left(\frac{\pi}{2} X^{(1,2)}\right) = i X^{(1,2)} \quad V_2 =\exp\left(\frac{\pi}{2} X^{(2,3)}\right) = i X^{(2,3)};
\end{equation}
cf. Eq. (\ref{eq:46}). To do this note first that $\dim(\mathcal{H}^{(0,n_1)})=1$ and $\dim(\mathcal{H}^{(n_1,0)})=2$ as for the cascade $\Gamma_C$. The
only difference is that the generic Hilbert spaces $\mathcal{H}^{(n_1,n_2)}$ are all \emph{exactly} three-dimensional for $\Gamma_C$, while they are
\emph{at least} four-dimensional (and becoming arbitrarily large) for $\Gamma_\Delta$.  Hence if we choose in the first step a unitary $U \in
\mathrm{SU}(\Gamma_\Delta)$ with 
\begin{equation}
  E^{(n_1,n_2)} U\psi  \in \SP \{\ket{n_1,n_2;0,j}\,|\, j=1,2,3\}\quad\text{for}\quad n_1>0,\ n_2>0
\end{equation}
we are exactly in the same situation we have discussed in Sect. \ref{sec:example-2:-three}. Therefore we can proceed with the procedure presented for
$\Gamma_C$.

To summarize our discussion, we can conclude that main difference between $\Gamma_\Delta$ and the tree graphs $\Gamma_C$, $\Gamma_V$ and
$\Gamma_\Lambda$ arise in the treatment of the invariant subspaces $\mathcal{H}_b$, $b \in \ConfP(\Gamma_{\#})$. The most obvious distinction is the
behavior of the dimensions of the $\mathcal{H}_b$. For the tree graphs they are bounded from above by three, while in the case of $\Gamma_\Delta$ they
grow indefinitely. A more subtle point is the method we have used to find a labelling for the $\mathcal{H}_b$. The discussion from the last section is
applicable to arbitrary tree graphs. The scheme developed in this section, however, does not allow an obvious generalization to graphs with more than
one cycle. If such a generalization would be available, the reasoning from the last two sections would be available for arbitrary graphs. In
particular a general formula for the dimension could answer the question whether there are two inequivalent graphs with equivalent path algebras. Our
conjecture is that this is not the case.

From a more practical point of view the model based on the delta configuration has an advantage in efficiency. We can treat three modes (rather than
two) with the same number of levels (three), and we have full controllability over Hilbert spaces of arbitrary high dimension (the
$\mathcal{H}^{(n_2,n_2)}$) by only manipulating relative phases of the atom and using the natural drift of the system.

\section{Outlook}
\label{sec:outlook}

We have studied a $d$-level atom interacting with the light field in a cavity via Hamiltonian (\ref{eq:38}), and shown that the overall system consisting of atom and field is strongly controllable, if
the internal degrees of freedom of the atom can be adequately manipulated. The latter means (as a minimal setup) that we can switch the controls $X^{(e)}, Y^{(e)}$ for all edges $e$ in a spanning tree
of $\Gamma$ individually on and off; cf. Lemma \ref{lem:7}. This is already a very useful result since it opens lots of new possibilities to manipulate electromagnetic radiation in experiments with
light or micro waves. We have, however, gained lots of additional insights into the structure of the control problem at hand.

The most important object in this context is the extended path algebra $\PAext(\Gamma)$ introduced in Section \ref{sec:path-algebra}. It is an important part of the controllability proof which allows us to use
Lie-algebraic methods at least for a subfamily of control Hamiltonian. As such it takes the role of the symmetry arguments used in \cite{KZSH} to solve the two-level case. The latter is also true, if
we look at the state preparation tasks for three level systems in Sections \ref{sec:example-2:-three} and \ref{sec:example-3:-delta}. With a clever
combination of symmetry breaking and respecting unitaries we can (approximately) prepare any state of the overall system. The procedure can be
generalized easily to any tree graph, while graphs containing cycles are more tricky any require a more detailed study.

The developed scheme can be useful in the framework of optimal control. A common strategy to handle infinite dimensional control problems like the one we are discussing, is to cut the Hilbert space
off at, e.g., finite photon numbers. In our case, however, this still would imply that the dimension of the Hilbert space under consideration grows exponentially with the cut-off
parameter. To prepare an arbitrary state of the overall system (approximately) from the ground state, we can, however, use the method from Sections \ref{sec:example-2:-three} and
\ref{sec:example-3:-delta} (and generalizations thereof) and then we only have to find the control functions for unitaries in the path algebra (the ``symmetry breaking'' unitaries are just given by
applying particularly chosen control Hamiltonians for a certain amount of time). Cutting of $\PAext(\Gamma)$ at an invariant subspace $\mathcal{H}^{(n_1,\dots,n_d)}$ (cf. Eq. (\ref{eq:39})) only leads
to a polynomial growth of dimension as a function of $(n_1,\dots,n_d)$.

Another interesting aspect of $\PAext(\Gamma)$ concerns its relation to the structure of the graph $\Gamma$. It is clear that $\PAext(\Gamma)$ contains information about $\Gamma$, but how much? For
graphs with two or three vertices our analysis has shown that the algebras are isomorphic iff the graphs are equivalent. It is an interesting question whether this observation stays true for
arbitrary graphs.

\bibliographystyle{unsrt}
\bibliography{control.bib}

\begin{thebibliography}{10}

\bibitem{glaser2015training}
Steffen~J Glaser, Ugo Boscain, Tommaso Calarco, Christiane~P Koch, Walter
  K{\"o}ckenberger, Ronnie Kosloff, Ilya Kuprov, Burkhard Luy, Sophie Schirmer,
  Thomas Schulte-Herbr{\"u}ggen, et~al.
\newblock Training schr{\"o}dinger’s cat: quantum optimal control.
\newblock {\em The European Physical Journal D}, 69(12):279, 2015.

\bibitem{sussmann1972controllability}
H{\'e}ctor~J Sussmann and Velimir Jurdjevic.
\newblock Controllability of nonlinear systems.
\newblock {\em Journal of Differential Equations}, 12(1):95--116, 1972.

\bibitem{jurdjevic1972control}
Velimir Jurdjevic and H{\'e}ctor~J Sussmann.
\newblock Control systems on lie groups.
\newblock {\em Journal of Differential equations}, 12(2):313--329, 1972.

\bibitem{brockett1972system}
Roger~W Brockett.
\newblock System theory on group manifolds and coset spaces.
\newblock {\em SIAM Journal on control}, 10(2):265--284, 1972.

\bibitem{brockett1973lie}
RW~Brockett.
\newblock Lie theory and control systems defined on spheres.
\newblock {\em SIAM Journal on Applied Mathematics}, 25(2):213--225, 1973.

\bibitem{jurdjevic1997geometric}
Velimir Jurdjevic.
\newblock {\em Geometric control theory}, volume~52.
\newblock Cambridge university press, 1997.

\bibitem{zeier2011symmetry}
Robert Zeier and Thomas Schulte-Herbr{\"u}ggen.
\newblock Symmetry principles in quantum systems theory.
\newblock {\em Journal of Mathematical Physics}, 52(11):113510, 2011.

\bibitem{zimboras2014dynamic}
Zolt{\'a}n Zimbor{\'a}s, Robert Zeier, Michael Keyl, and Thomas
  Schulte-Herbr{\"u}ggen.
\newblock A dynamic systems approach to fermions and their relation to spins.
\newblock {\em EPJ Quantum Technology}, 1(1):11, 2014.

\bibitem{brockett2003controllability}
Roger~W Brockett, C~Rangan, and Anthony~M Bloch.
\newblock The controllability of infinite quantum systems.
\newblock In {\em Decision and Control, 2003. Proceedings. 42nd IEEE Conference
  on}, volume~1, pages 428--433. IEEE, 2003.

\bibitem{rangan2004control}
Chitra Rangan, AM~Bloch, Christopher Monroe, and PH~Bucksbaum.
\newblock Control of trapped-ion quantum states with optical pulses.
\newblock {\em Physical review letters}, 92(11):113004, 2004.

\bibitem{adami2005controllability}
Riccardo Adami and Ugo Boscain.
\newblock Controllability of the schr{\"o}dinger equation via intersection of
  eigenvalues.
\newblock In {\em Decision and Control, 2005 and 2005 European Control
  Conference. CDC-ECC'05. 44th IEEE Conference on}, pages 1080--1085. IEEE,
  2005.

\bibitem{yuan2007controllability}
Haidong Yuan and Seth Lloyd.
\newblock Controllability of the coupled spin-1 2 harmonic oscillator system.
\newblock {\em Physical Review A}, 75(5):052331, 2007.

\bibitem{nersesyan2009growth}
Vahagn Nersesyan.
\newblock Growth of sobolev norms and controllability of the schr{\"o}dinger
  equation.
\newblock {\em Communications in Mathematical Physics}, 290(1):371--387, 2009.

\bibitem{bloch2010finite}
Anthony~M Bloch, Roger~W Brockett, and Chitra Rangan.
\newblock Finite controllability of infinite-dimensional quantum systems.
\newblock {\em IEEE Transactions on Automatic Control}, 55(8):1797--1805, 2010.

\bibitem{nersesyan2010global}
Vahagn Nersesyan.
\newblock Global approximate controllability for schr{\"o}dinger equation in
  higher sobolev norms and applications.
\newblock In {\em Annales de l'Institut Henri Poincare (C) Non Linear
  Analysis}, volume~27, pages 901--915. Elsevier, 2010.

\bibitem{boscain2012adiabatic}
Ugo~V Boscain, Francesca Chittaro, Paolo Mason, and Mario Sigalotti.
\newblock Adiabatic control of the schr{\"o}dinger equation via conical
  intersections of the eigenvalues.
\newblock {\em IEEE Transactions on Automatic Control}, 57(8):1970--1983, 2012.

\bibitem{nersesyan2012global}
Vahagn Nersesyan and Hayk Nersisyan.
\newblock Global exact controllability in infinite time of schr{\"o}dinger
  equation.
\newblock {\em Journal de math{\'e}matiques pures et appliqu{\'e}es},
  97(4):295--317, 2012.

\bibitem{bliss2014quantum}
Roger~S Bliss and Daniel Burgarth.
\newblock Quantum control of infinite-dimensional many-body systems.
\newblock {\em Physical Review A}, 89(3):032309, 2014.

\bibitem{KZSH}
Michael Keyl, Robert Zeier, and Thomas Schulte-Herbr{\"u}ggen.
\newblock Controlling several atoms in a cavity.
\newblock {\em New Journal of Physics}, 16(6):065010, 2014.

\bibitem{morancey2014global}
Morgan Morancey and Vahagn Nersesyan.
\newblock Global exact controllability of 1d schr{\"o}dinger equations with a
  polarizability term.
\newblock {\em Comptes Rendus Mathematique}, 352(5):425--429, 2014.

\bibitem{boscain2015approximate}
Ugo Boscain, Jean-Paul Gauthier, Francesco Rossi, and Mario Sigalotti.
\newblock Approximate controllability, exact controllability, and conical
  eigenvalue intersections for quantum mechanical systems.
\newblock {\em Communications in Mathematical Physics}, 333(3):1225--1239,
  2015.

\bibitem{boscain2015control}
Ugo Boscain, Paolo Mason, Gianluca Panati, and Mario Sigalotti.
\newblock On the control of spin-boson systems.
\newblock {\em Journal of Mathematical Physics}, 56(9):092101, 2015.

\bibitem{morancey2015simultaneous}
Morgan Morancey and Vahagn Nersesyan.
\newblock Simultaneous global exact controllability of an arbitrary number of
  1d bilinear schr{\"o}dinger equations.
\newblock {\em Journal de Math{\'e}matiques Pures et Appliqu{\'e}es},
  103(1):228--254, 2015.

\bibitem{paduro2015approximate}
Esteban Paduro and Mario Sigalotti.
\newblock Approximate controllability of the two trapped ions system.
\newblock {\em Quantum Information Processing}, 14(7):2397--2418, 2015.

\bibitem{paduro2015control}
Esteban Paduro and Mario Sigalotti.
\newblock Control of a quantum model for two trapped ions.
\newblock In {\em Decision and Control (CDC), 2015 IEEE 54th Annual Conference
  on}, pages 7090--7095. IEEE, 2015.

\bibitem{chitour2016generic}
Yacine Chitour and Mario Sigalotti.
\newblock Generic controllability of the bilinear schr{\"o}dinger equation on
  1-d domains: the case of measurable potentials.
\newblock 2016.

\bibitem{caponigro2017exact}
Marco Caponigro and Mario Sigalotti.
\newblock Exact controllability in projections of the bilinear schr{\"o}dinger
  equation.
\newblock 2017.

\bibitem{JC63}
E.~T. Jaynes and F.~W. Cummings.
\newblock Comparison of quantum and semiclassical radiation theories with
  application to the beam maser.
\newblock {\em Proc. IEEE.}, 51:89--109, 1963.

\bibitem{diestel2006graph}
R.~Diestel.
\newblock {\em Graph Theory}.
\newblock Electronic library of mathematics. Springer, 2006.

\bibitem{boevey1992lectures}
W~Crawley Boevey.
\newblock Lectures on representations of quivers.
\newblock {\em Lecture notes http://www1. maths. leeds. ac. uk/pmtwc/quivlecs.
  pdf}, 1992.

\bibitem{savage2006finite}
A~Savage.
\newblock Finite-dimensional algebras and quivers. encyclopedia of mathematical
  physics.
\newblock {\em Edited by J.-P. Fran{\c{c}}oise, GL Naber and Tsou ST Oxford:
  Elsevier}, pages 313--320, 2006.

\bibitem{RESI2}
M.~Reed and B.~Simon.
\newblock {\em Methods of modern mathematical physics. {II}}.
\newblock Academic Press, San Diego, 1975.

\bibitem{RESI1}
M.~Reed and B.~Simon.
\newblock {\em Methods of modern mathematical physics. {I}}.
\newblock Academic Press, San Diego, 1980.

\bibitem{RESI4}
M.~Reed and B.~Simon.
\newblock {\em Methods of modern mathematical physics. {IV}}.
\newblock Academic Press, San Diego, 1978.

\bibitem{HeiMa}
Margret Heinze.
\newblock Controllability of the {Jaynes-Cummings-Hubbard} model.
\newblock Master's thesis, TU München, 2016.

\bibitem{HofMa}
Thomas Hofmann.
\newblock Controllability of an n-level atom in a cavity.
\newblock Master's thesis, TU München, 2015.

\end{thebibliography}

\end{document}